\documentclass{article}
\usepackage{setspace}
\usepackage{graphicx}
\usepackage{amsmath,amssymb,amsthm}
\usepackage{color}
\usepackage[margin=1in]{geometry}
\usepackage[square,numbers]{natbib}
\usepackage{hyperref}
\usepackage{enumitem}
\usepackage{subcaption}

\newtheorem{theorem}{Theorem}
\newtheorem{lemma}[theorem]{Lemma}
\newtheorem{proposition}[theorem]{Proposition}
\newtheorem{corollary}[theorem]{Corollary}
\newtheorem{assumption}[theorem]{Assumption}
\newtheorem{definition}[theorem]{Definition}
\newtheorem{example}[theorem]{Example}
\newtheorem{remark}[theorem]{Remark}
\numberwithin{equation}{section}
\numberwithin{theorem}{section}

\newcommand{\R}{\mathbb{R}}
\newcommand{\dpi}{\Delta\pi}
\newcommand{\rcal}{\mathcal{R}}
\newcommand{\ccal}{\mathcal{C}}
\newcommand{\lcal}{\mathcal{L}}
\newcommand{\V}{\mathcal{V}}

\title{Oracle-Parametrized Constant Function Market Makers:\\ From Price Feeds to Pricing Rules}
\author{Hamed Amini \thanks{Center for Applied Optimization, Department of Industrial and Systems Engineering, University of Florida, Gainesville, FL, USA. {\tt aminil@ufl.edu}}
\and
Zachary Feinstein \thanks{Stevens Institute of Technology, School of Business, Hoboken, NJ 07030, USA. {\tt zfeinste@stevens.edu}. 
The author acknowledges the support from NSF IUCRC CRAFT center research grant (2113906) for this research. The opinions expressed in this publication do not necessarily represent the views of NSF IUCRC CRAFT.}}
\date{\today}

\begin{document}

\maketitle

\begin{abstract}
This paper introduces oracle-parametrized automated market makers (OP-AMMs), i.e., automated market makers whose quoted price depends jointly on the pool reserves and an external oracle price. In doing so, we extend the information-agnostic AMM framework to settings, such as tokenized securities, for which price discovery occurs off-chain. Under a strict oracle-contraction condition, we show that the quoted price of any OP-AMM interpolates between the oracle price and an implicit autarkic price determined by the pool reserves. We then derive a general loss-versus-rebalancing (LVR) decomposition that separates the residual exposure to market lags from the losses induced by oracle errors. This analysis is further extended to stale, discrete-update oracles and to sandwich attacks around oracle updates. Using this framework, we find conditions under which OP-AMMs simultaneously increase local capital efficiency and reduce normalized LVR relative to information-agnostic AMMs. However, sufficiently noisy or stale oracles can reverse these gains. A counterfactual backtest using one-second SPY NBBO data is provided to demonstrate these trade-offs. In particular, we map the Pareto-efficient frontier of oracle-parametrized constant function market maker (OP-CFMM) designs across stylized oracle regimes.\\
\textbf{Keywords:} Automated market makers; decentralized finance; price oracles; loss-versus-rebalancing; tokenized securities.
\end{abstract}

\section{Introduction}\label{sec:intro}
Decentralized finance (DeFi) is the field of using blockchain technology to provide financial services through open, rule-based protocols; in doing so, DeFi democratizes access to financial intermediation. Though DeFi was initially applied primarily to cryptocurrencies and stablecoins, the tokenization of real-world assets (RWAs) has begun to accelerate over the past year amid increasing regulatory clarity, with the scale of tokenized assets projected to reach trillions of dollars by 2030~\cite{mckinsey,carapella2023tokenization,watsky2024tokenized}. Combining RWAs with DeFi protocols would extend this intermediation to traditional assets. However, the prices of tokenized RWAs are discovered off-chain, in the markets for the underlying assets; this separation can create extractable value within DeFi.

The dominant architecture for decentralized exchanges (DEXes) is that of automated market makers (AMMs). Briefly, these mechanisms quote marginal prices to any user based solely on the inventory of the AMM pool~\cite{angeris2021constant,angeris2023geometry}; the realized cost of a transaction is then obtained by integrating these marginal prices over the corresponding change in pool reserves, thereby endogenizing price impacts (see, e.g.,~\cite{bichuch2022axioms,schlegel2022axioms,lee2023all,frongillo2023axiomatic} for axiomatic treatments of AMMs). In this sense, conventional AMMs can be described as \emph{information agnostic} since they observe only their own token reserves and incorporate no external information into their pricing rules. This price autarky causes AMMs to act as passive market makers and leaves them exposed to stale-price arbitrage when price discovery occurs externally, as is the case for RWAs. The resulting arbitrage losses are quantified by metrics such as loss-versus-rebalancing~\cite{milionis2022automated,cartea2023predictable}.

In contrast to conventional AMMs, proprietary AMMs (Prop AMMs) have recently emerged that update their quoted prices at high frequency so as to provide deep liquidity while mitigating their exposure to adverse selection.\footnote{\url{https://dune.com/dflow/prop-amms}} As the name suggests, these systems rely on proprietary pricing mechanisms and place market making operations in the hands of specialized firms. As such, and in contrast to the democratizing ethos of DeFi, Prop AMMs represent a reprofessionalization of on-chain market making.

Within this work, we combine the public liquidity provision of AMMs with the informed pricing of Prop AMMs by studying AMMs parametrized by external pricing oracles. Bespoke oracle-dependent AMMs have been proposed and developed previously, including the proactive market maker (PMM) design of DODO~\cite{dodo,chen2023improving}, Curve v2~\cite{curvev2}, UAMM~\cite{im2023uamm}, and the dynamic curves of~\cite{krishnamachari2021dynamic}.\footnote{The dynamic-curve construction of~\cite{krishnamachari2021dynamic} re-anchors the trading curve at the current reserve state so that its marginal price matches the oracle, whereas the framework herein specifies a single AMM pricing rule conditional on each oracle value.} Further, our framework is complementary to that of~\cite{bergault2024automated,bergault2024price}, which take an optimization-based perspective on price-aware market making. Herein, we propose an axiomatic, mechanism-level framework that characterizes decentralized oracle-parametrized pricing rules.

Near-oracle pricing is suggested within \cite[Section 10]{milionis2022automated} as a way to reduce loss-versus-rebalancing. That work observes that an AMM with access to a high-frequency oracle could quote prices arbitrarily close to the external price and thereby approach the payoff of the rebalancing benchmark. However, it also cautions that such a design relies heavily on the accuracy of the oracle and leaves open the potential for its manipulation. That proposal is closest to a quote that moves one-for-one with the oracle and, therefore, transmits the entire oracle error to the liquidity providers. Notably, in formalizing these designs, we do \emph{not} assume a perfect oracle. Instead, we quantify the losses associated with oracle errors and sandwich attacks around oracle updates (see, e.g.,~\cite{eskandari2021sok}). In doing so, we find a trade-off between liquidity efficiency and the risks created by the dependence on an oracle.

We wish to highlight that the risk of oracle manipulation may be less acute for tokenized securities than for cryptocurrencies. The vulnerabilities documented for crypto oracles are largely concentrated in long-tail, low-liquidity assets for which the spot price (and the time-weighted average price) is inexpensive to move. In contrast, the reference price for a tokenized security is discovered on deep off-chain markets subject to regulatory surveillance, e.g., the national best bid and offer (NBBO) of an equity; in such markets, distorting the underlying quote can be prohibitively costly and may entail legal exposure. Further, modern pull-oracle architectures, such as those offered by Chainlink and Pyth, reduce the opportunities for manipulation by quoting recent off-chain data directly, thereby removing the stale on-chain values that can be, e.g., front-run.

The primary contributions of this work are threefold. First, in Section~\ref{sec:pmm}, we \textbf{construct a general class of oracle-parametrized AMMs} (OP-AMMs) that satisfy the desirable properties of AMMs presented in, e.g.,~\cite{bichuch2022axioms,schlegel2022axioms}. Under a strict oracle-contraction condition, we prove that any such OP-AMM can be written as an interpolation between the oracle price and an implicit autarkic price. Second, in Section~\ref{sec:lvr}, we quantify the \textbf{loss-versus-rebalancing (LVR) of OP-AMMs} and its dependence on oracle risk. Specifically, we decompose the residual price variation underlying LVR into market, oracle, and covariance components. We further extend this analysis to discrete oracle updates, identify the resulting oracle-refresh losses as oracle extractable value, and study sandwich attacks around these updates along with the protection provided by transaction fees. Finally, in Section~\ref{sec:cases}, we investigate the \textbf{capital efficiency and arbitrage incentives} of OP-AMMs through an analytical comparison and a counterfactual backtest using historical SPY market data. These results suggest that, with a sufficiently reliable oracle, OP-AMMs can reduce stale-price arbitrage (and the associated LVR) while simultaneously deepening the local liquidity available to liquidity-motivated order flow; noisy or stale oracles, however, can reverse these gains. Section~\ref{sec:discussion} discusses the economic implications of OP-AMMs, and Section~\ref{sec:conclusion} concludes. The proofs of all results are provided within the appendix.

\section{Oracle-Parametrized Automated Market Makers}\label{sec:pmm}

Within this section, we introduce an axiomatic framework for oracle-parametrized automated market makers (OP-AMMs). This framework extends the axiomatic theory of AMMs developed in~\cite{bichuch2022axioms} by allowing the pricing rule to depend explicitly on an exogenous price signal, i.e., an oracle price. Throughout this work, we focus on markets with two assets and impose structural conditions to guarantee economically meaningful pricing behavior. In Section~\ref{sec:pmm-construction}, we define OP-AMMs, derive their conditional invariants, and present constructions built from a baseline AMM. The characterization of OP-AMMs as an interpolation between the oracle price and an internal equilibrium price is provided in Section~\ref{sec:pmm-characterization}. Finally, Section~\ref{sec:pmm-liquidity} measures the local liquidity of an OP-AMM through an equivalent constant-product market maker (CPMM) liquidity parameter.

\subsection{Construction}\label{sec:pmm-construction}

We begin with some simple notation that will be used throughout the paper. Except where otherwise stated, we will work with log-prices. Let $\pi\in\R$ denote the oracle \emph{log}-price. The internal state of the market maker is summarized by the reserve ratio $r:=\frac{y}{x}\in\R_{++}$, which reflects the scale invariance of the reserve holdings $(x,y)\in\R_{++}^2$. For a function $p$ of the reserve ratio and oracle price, we denote its partial derivatives by subscripts, e.g., $p_r := \partial_r p$ and $p_\pi := \partial_\pi p$. Rather than beginning from an invariant, an OP-AMM is defined directly through its quoted log-price.

\begin{definition}\label{defn:PMM}
An \textbf{\emph{oracle-parametrized automated market maker (OP-AMM)}} is a continuously differentiable function $p:\R_{++}\times\R\to\R$, where $p(r,\pi)$ denotes the quoted log-price at reserve ratio $r$ and oracle log-price $\pi$, satisfying the following conditions:
\begin{enumerate}[label=(\roman*)]
    \item \textbf{Strict monotonicity in the reserve ratio:} For each fixed $\pi\in\R$, the function $p(\cdot,\pi)$ is strictly increasing.

    \item \textbf{Surjectivity in the reserve ratio:} For each fixed $\pi\in\R$,
    $$\lim_{r\searrow0}p(r,\pi)=-\infty
    \qquad \text{ and } \qquad
    \lim_{r\nearrow\infty}p(r,\pi)=\infty.$$

    \item \textbf{Non-decreasing and non-expansive oracle response:} For each fixed $r>0$, the function $p(r,\cdot)$ is non-decreasing and non-expansive, i.e., for all $\pi_1\leq\pi_2$,
    $$0 \leq p(r,\pi_2)-p(r,\pi_1) \leq \pi_2-\pi_1.$$
    As $p$ is continuously differentiable, this is equivalent to $0\leq p_\pi(r,\pi)\leq1$ for every $\pi \in \R$.
\end{enumerate}
\end{definition}

\begin{remark}\label{rem:PMM-conditions}
Strict monotonicity and surjectivity in the reserve ratio imply that, for each fixed oracle price, every finite quoted log-price corresponds to a unique reserve ratio. Monotonicity in the oracle price guarantees that the OP-AMM responds in the same direction as the oracle signal, while non-expansiveness prevents the pricing rule from amplifying changes in that signal.
\end{remark}

In contrast to classical AMMs, an OP-AMM generally does not admit a single oracle-independent invariant since the oracle price may change exogenously even in the absence of trades. However, when the oracle price is held fixed, the OP-AMM behaves as a standard AMM, i.e., trading is path independent and preserves an invariant conditional on that oracle price. To express this conditional invariant, it will be convenient to consider the integral mapping
\begin{equation}\label{eq:G}
G(r,\pi) := \int_1^r \frac{dq}{q + \exp(p(q,\pi))}
\end{equation}
for any $(r,\pi) \in \R_{++} \times \R$, together with its two boundary limits
\begin{equation}\label{eq:g-limits}
g_0(\pi) := \lim_{r \searrow 0} G(r,\pi)
\qquad \text{ and } \qquad
g_\infty(\pi) := \lim_{r \nearrow \infty} \bigl[G(r,\pi) - \log r\bigr].
\end{equation}
Importantly, both limits are well-defined in $[-\infty,0)$ as $r \in \R_{++} \mapsto G(r,\pi)$ is strictly increasing and $r \in \R_{++} \mapsto G(r,\pi) - \log r$ is strictly decreasing by inspection. As Definition~\ref{defn:PMM} specifies only a quoted price, the following proposition derives the structure of this conditional invariant within the geometric framework of~\cite[Section 1]{angeris2023geometry}.

\begin{proposition}\label{prop:CI}
Let $p: \R_{++} \times \R \to \R$ be an OP-AMM. Define $U: \R^2_+ \times \R \to \R \cup \{-\infty\}$ as the continuous extension of
$$U(x,y,\pi) := \log x + G(y/x,\pi)$$
for any $(x,y,\pi) \in \R^2_{++} \times \R$. The zero-superlevel set
$$\rcal(\pi) := \bigl\{(x,y) \in \R^2_+ \mid U(x,y,\pi) \geq 0\bigr\}$$
is a nondegenerate reachable set for any $\pi \in \R$, i.e., $\emptyset \neq \rcal(\pi) \subseteq \R^2_+ \setminus \{(0,0)\}$ is closed, convex, and upward closed. Moreover, for fixed $\pi \in \R$, the exponentiated invariant $\lcal(x,y,\pi) := \exp U(x,y,\pi)$, defined for $(x,y,\pi) \in \R^2_+ \times \R$, is the canonical trading function of $\rcal(\pi)$, i.e., the unique non-decreasing, concave, and positively homogeneous function whose unit-superlevel set is $\rcal(\pi)$. Finally, the marginal price of this derived invariant corresponds to the OP-AMM, i.e., $\partial_x U(x,y,\pi) / \partial_y U(x,y,\pi) = \exp(p(y/x,\pi))$ for any $(x,y) \in \R^2_{++}$.
\end{proposition}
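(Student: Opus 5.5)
The plan is to work with the exponentiated invariant $\lcal$ rather than $U$, because $\lcal$ turns out to be the perspective of a scalar function of the reserve ratio. Fix $\pi$ throughout and suppress it from the notation. Define $f(r):=\exp G(r,\pi)$ on $\R_{++}$. Then on $\R^2_{++}$ we have $\lcal(x,y)=x\,f(y/x)$, which is exactly the perspective of $f$.

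\textbf{Step 1: the marginal-price identity.} This is a direct computation. Since $G_r(r)=1/(r+e^{p(r)})$, we get
$$\partial_x U=\frac1x-\frac{y}{x^2}G_r=\frac1x\cdot\frac{e^{p}}{r+e^{p}},\qquad \partial_y U=\frac1x\cdot\frac{1}{r+e^{p}}.$$
Both partials are strictly positive, and their ratio is $e^{p(y/x,\pi)}$.

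\textbf{Step 2: properties of $\lcal$ on the open orthant.}
\begin{enumerate}[label=(\alph*)]
\item Positive homogeneity is immediate from the perspective form $\lcal(x,y)=x\,f(y/x)$.
\item Monotonicity follows from Step 1, since $\partial\lcal=\lcal\,\partial U$ has positive components.
\item For concavity, use that the perspective of a concave function is concave, so it suffices to show $f$ is concave. We have $f'=f/(r+e^{p})$. Differentiating once more,
$$f''=\frac{f}{(r+e^p)^2}\bigl[1-(1+e^{p}p_r)\bigr]=-\frac{f\,e^{p}p_r}{(r+e^p)^2}<0,$$
where the sign comes from strict monotonicity in Definition~\ref{defn:PMM}(i). (Strictly, (i) only gives $p_r\ge 0$, which already yields $f''\le 0$ and is enough for concavity.)
\end{enumerate}

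\textbf{Step 3: continuous extension to the boundary.} I expect this to be the main technical point.
\begin{enumerate}[label=(\alph*)]
\item On $\{x=0,\,y>0\}$, write $\lcal=y\exp\bigl(G(r)-\log r\bigr)$ with $r=y/x$. As $r\nearrow\infty$ this tends to $y\,e^{g_\infty(\pi)}$.
\item On $\{y=0,\,x>0\}$, as $r\searrow0$ we get $\lcal\to x\,e^{g_0(\pi)}$.
\item At the origin, $\lcal\to0$. This follows from homogeneity together with the bound $0\le \lcal(x,y)\le \lcal(x+y,x+y)=x+y$, using monotonicity and $\lcal(1,1)=1$.
\end{enumerate}
The limits exist in $[0,1)$ by the monotonicity of $G$ and of $G-\log r$ noted after \eqref{eq:g-limits}. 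The case $g=-\infty$ simply gives the value $0$ on that boundary ray.

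Continuity of the extension along arbitrary approach paths, not just along rays, needs a separate argument. I would use homogeneity to reduce to the compact simplex $\{x+y=1\}$. There the function is $x f(y/x)$, a continuous function of $x\in(0,1)$ with limits at both ends, and this gives joint continuity.

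Concavity, monotonicity and homogeneity then pass to the closed orthant by taking pointwise limits. It follows that $U=\log\lcal$ is the continuous extension of $U$ valued in $\R\cup\{-\infty\}$.

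\textbf{Step 4: the reachable set and canonicity.} Note that $\rcal(\pi)=\{\lcal\ge1\}$.
\begin{enumerate}[label=(\alph*)]
\item Closedness follows from continuity of $\lcal$.
\item Convexity holds because a superlevel set of a concave function is convex.
\item Upward closedness follows from monotonicity.
\item The set is nonempty because $(1,1)\in\rcal(\pi)$, since $G(1)=0$.
\item It excludes $(0,0)$ because $\lcal(0,0)=0$.
\end{enumerate}
For uniqueness of the canonical trading function, let $\tilde\lcal$ be any nondecreasing, concave, positively homogeneous function with unit-superlevel set $\rcal(\pi)$. Homogeneity then forces
$$\tilde\lcal(z)=\sup\{t>0: z/t\in\rcal(\pi)\},$$
with the convention $\sup\emptyset=0$. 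The same formula holds for $\lcal$, so $\tilde\lcal=\lcal$.
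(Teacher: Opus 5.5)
Your proposal is correct and follows essentially the same route as the paper. Both arguments write $\lcal$ as the perspective of $\exp G(\cdot,\pi)$ and use the same second-derivative computation for concavity, the boundary values $\log x+g_0(\pi)$ and $\log y+g_\infty(\pi)$, and the explicit partial derivatives for monotonicity and the marginal price. The differences are minor:
\begin{itemize}
\item you take convexity directly from the superlevel set of the concave $\lcal$, whereas the paper passes through log-concavity of $U$;
\item you spell out joint continuity via the simplex;
\item you prove uniqueness of the canonical trading function with the gauge formula, where the paper cites \cite{angeris2023geometry}. That formula tacitly uses $\tilde\lcal\geq0$, which follows at once from monotonicity and $\tilde\lcal(0,0)=0$.
\end{itemize}
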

\begin{proof}
See Appendix~\ref{app:proof-CI}.
\end{proof}

In addition to the conditions of Definition~\ref{defn:PMM}, it is often natural to impose further structural properties on an OP-AMM.

\begin{enumerate}[label=(\roman*),start=4]

\item \textbf{Surjectivity in the oracle price:} For each fixed $r>0$,
$$\lim_{\pi\searrow-\infty}p(r,\pi)=-\infty
\qquad \text{ and } \qquad
\lim_{\pi\nearrow\infty}p(r,\pi)=\infty.$$
This optional condition excludes information-agnostic AMMs.

\item \textbf{Boundary divergence:} For each fixed $\pi\in\R$,
$$g_0(\pi) = g_\infty(\pi) = -\infty.$$
By the conditional-invariant representation of Proposition~\ref{prop:CI}, this condition ensures that approaching either boundary $r\searrow0$ or $r\nearrow\infty$ along a conditional-invariant curve requires infinite cumulative trade volume.

\item \textbf{Symmetry:} The pricing rule satisfies
$$p(r,\pi) = -p(1/r,-\pi),$$
which reflects invariance under relabeling of the two assets.

\item \textbf{Re-denomination invariance:} For any $a>0$, the pricing rule satisfies
$$p(r,\pi) = p(ar,\pi+\log a)-\log a,$$
so that pricing is consistent under a change of units.
\end{enumerate}

By Proposition~\ref{prop:CI}, every OP-AMM admits an oracle-indexed family of conditional invariants $U: \R^2_+ \times \R \to \R \cup \{-\infty\}$. When these conditional invariants are all generated by a single baseline AMM pricing function, we refer to the OP-AMM and its conditional invariants jointly as an \emph{oracle-parametrized constant function market maker} (OP-CFMM). That is, an OP-CFMM is an OP-AMM which is built from a baseline AMM design.

We now present two such constructions. Let $u:\R_{++}^2\to\R$ be the log of a differentiable, strictly increasing canonical trading function~\cite[Section 1.3]{angeris2023geometry}. As a canonical trading function is positively homogeneous, the marginal price depends only on the reserve ratio $r:=y/x$, and we define the associated AMM pricing function by
$$\phi(r) := \frac{\partial_xu(1,r)}{\partial_yu(1,r)}, \qquad r\in\R_{++}.$$
For the remainder of this paper, we will assume that $\phi:\R_{++}\to\R_{++}$ is continuously differentiable, strictly increasing, and surjective, and we normalize it so that $\phi(1)=1$.

\begin{example}[Price-Tracking OP-CFMM (PT)]\label{ex:PTPMM}
Consider the pricing rule
$$p_{\mathrm{PT}}(r,\pi) := \pi+\log\phi(r), \qquad (r,\pi)\in\R_{++}\times\R.$$
Then $p_{\mathrm{PT}}$ defines an OP-AMM. Indeed, for each fixed $\pi\in\R$, the map $r\mapsto p_{\mathrm{PT}}(r,\pi)$ is strictly increasing and surjective because $\phi$ is strictly increasing and surjective. For each fixed $r>0$, the map $\pi\mapsto p_{\mathrm{PT}}(r,\pi)$ is strictly increasing, non-expansive, and surjective with $\partial_\pi p_{\mathrm{PT}}(r,\pi)=1$. We note that symmetry holds if and only if the baseline satisfies $\phi(r)\phi(1/r)=1$, e.g., for the CPMM $\phi(r)=r$, whereas re-denomination invariance fails for every baseline AMM.

At balanced reserves $r=1$, the normalization $\phi(1)=1$ implies $p_{\mathrm{PT}}(1,\pi)=\pi$, i.e., the quoted price coincides with the oracle price. For fixed $\pi$, the associated conditional invariant is
$$U_{\mathrm{PT}}(x,y,\pi) = \log x + \int_1^{y/x} \frac{dq}{q+e^\pi\phi(q)}.$$

If $\phi(r)=r$, i.e., the baseline is the CPMM $u(x,y)=\frac{1}{2}\log x+\frac{1}{2}\log y$, then $p_{\mathrm{PT}}(r,\pi) = \pi+\log r$ and
$$U_{\mathrm{PT}}(x,y,\pi) = \log x + \int_1^{y/x} \frac{dq}{q+e^\pi q} = \frac{e^\pi}{1+e^\pi}\log x + \frac{1}{1+e^\pi}\log y.$$
Thus, the level sets of $U_{\mathrm{PT}}$ coincide with those of the Balancer weighted-product invariant $x^{w_x(\pi)}y^{w_y(\pi)} = \mathrm{constant}$, with oracle-dependent weights $w_x(\pi) = \frac{e^\pi}{1+e^\pi}$ and $w_y(\pi) = \frac{1}{1+e^\pi}$. When $\pi=0$, the weights are equal and the level sets reduce to those of the Uniswap~V2 CPMM.
\end{example}

Motivated by the discussion in \cite[Remark~14]{bichuch2022axioms}, we next consider a construction in which the inventory reference point of the AMM shifts with the external oracle price.

\begin{example}[Inventory-Tracking OP-CFMM (IT)]\label{ex:ITPMM}
Consider the pricing rule
$$p_{\mathrm{IT}}(r,\pi) := \pi+\log\phi(re^{-\pi}), \qquad (r,\pi)\in\R_{++}\times\R.$$
We will show that $p_{\mathrm{IT}}$ satisfies the defining OP-AMM conditions under a bound on the elasticity of $\phi$. For each fixed $\pi\in\R$, strict monotonicity and surjectivity in the reserve ratio follow from the corresponding properties of $\phi$. For each fixed $r>0$, set $q:=re^{-\pi}$. Then $\partial_\pi p_{\mathrm{IT}}(r,\pi) = 1-q\phi'(q)/\phi(q)$. Consequently, the oracle response is non-decreasing and non-expansive provided $0 \leq q\phi'(q)/\phi(q) \leq 1$ for $q>0$.
Furthermore, the optional oracle-surjectivity property holds under the additional tail conditions
$$\lim_{q\searrow0}\frac{\phi(q)}{q} = \infty, \qquad \lim_{q\nearrow\infty}\frac{\phi(q)}{q} = 0.$$
We note that, as with PT, symmetry is equivalent to $\phi(q)\phi(1/q)=1$; re-denomination invariance, however, now holds for every baseline AMM.

For a fixed oracle price $\pi$, the associated conditional invariant is
$$U_{\mathrm{IT}}(x,y,\pi) := u(e^\pi x,y) - u(e^\pi,1).$$
Indeed, its marginal price is
$$\frac{\partial_xU_{\mathrm{IT}}(x,y,\pi)}{\partial_yU_{\mathrm{IT}}(x,y,\pi)} = e^\pi \phi\left(\frac{y}{e^\pi x}\right) = e^\pi\phi(re^{-\pi}),$$
whose logarithm is $p_{\mathrm{IT}}(r,\pi)$.

If $\phi(r)=r$, i.e., the CPMM, then $p_{\mathrm{IT}}(r,\pi) = \log r$ and the oracle dependence disappears. In this degenerate boundary case, the construction reduces to the original CPMM. As the quoted price is independent of the oracle input, the oracle-surjectivity property does not hold in this case.
\end{example}

\subsection{Characterization Theorem}\label{sec:pmm-characterization}

Within this section, we provide a structural characterization of OP-AMMs that satisfy a strict oracle-contraction condition. Specifically, we will show that the quoted price interpolates between the external oracle price and an internal equilibrium price determined by the inventory of the pool. Recall that, for an OP-AMM $p:\R_{++}\times\R\to\R$, we denote by $p_\pi(r,\pi) := \partial_\pi p(r,\pi)$ the sensitivity of the quoted price to the oracle input.

\begin{assumption}\label{ass:contraction}
For each fixed $r>0$, the map $\pi\mapsto p(r,\pi)$ is a contraction on $\R$. Equivalently, since $p$ is continuously differentiable and non-decreasing in the oracle input, for each $r>0$ there exists a constant $c_r\in[0,1)$ such that
$$0 \leq p_\pi(r,\pi) \leq c_r < 1, \qquad \pi\in\R.$$
\end{assumption}

Intuitively, Assumption~\ref{ass:contraction} requires that the quoted price never fully passes through a change in the oracle price, so that the quote always retains some dependence on the pool reserves. As such, the price-tracking construction of Example~\ref{ex:PTPMM}, for which $p_\pi\equiv1$, is a full pass-through boundary case and is not covered by Assumption~\ref{ass:contraction}. The following proposition guarantees the existence of a unique internal equilibrium price for every reserve ratio under this assumption.

\begin{proposition}\label{prop:fixedpoint}
Under Assumption~\ref{ass:contraction}, for each $r>0$, there exists a unique price $\pi^*(r)\in\R$ such that
$$p(r,\pi^*(r)) = \pi^*(r).$$
\end{proposition}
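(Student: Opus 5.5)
Fix $r>0$ and consider the map $T_r:\R\to\R$, $T_r(\pi):=p(r,\pi)$. The plan is to apply the Banach fixed-point theorem on the complete metric space $(\R,|\cdot|)$. By Assumption~\ref{ass:contraction}, $0\le p_\pi(r,\pi)\le c_r<1$ for all $\pi\in\R$. Since $p$ is continuously differentiable, the mean value theorem gives, for any $\pi_1,\pi_2\in\R$,
$$|T_r(\pi_2)-T_r(\pi_1)| = |p_\pi(r,\xi)|\,|\pi_2-\pi_1| \le c_r|\pi_2-\pi_1|$$
for some $\xi$ between $\pi_1$ and $\pi_2$. Hence $T_r$ is a strict contraction with constant $c_r<1$.

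Banach's theorem then yields a unique fixed point $\pi^*(r)$ with $p(r,\pi^*(r))=\pi^*(r)$. For a more elementary argument, set $h(\pi):=\pi-p(r,\pi)$. This function is continuous with $h'(\pi)=1-p_\pi(r,\pi)\ge 1-c_r>0$, so $h$ is strictly increasing. Moreover,
$$h(\pi)\ge h(0)+(1-c_r)\pi \quad\text{for }\pi>0, \qquad h(\pi)\le h(0)+(1-c_r)\pi \quad\text{for }\pi<0.$$
Thus $h\to\pm\infty$ as $\pi\to\pm\infty$. The intermediate value theorem gives existence of a zero of $h$, and strict monotonicity gives uniqueness.

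There is no real obstacle. The only point requiring care is that the contraction constant must be uniform in $\pi$ for each fixed $r$, which is exactly what Assumption~\ref{ass:contraction} provides. Without it, a map with $p_\pi<1$ pointwise but $p_\pi\to1$ in the tails could fail to have a fixed point, e.g.\ $\pi\mapsto\pi+e^{-\pi}$. Uniformity in $r$ is not needed, since the statement is pointwise in $r$.
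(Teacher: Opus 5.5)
Your proof is correct and essentially the paper's: the paper fixes $r$, notes that $\pi\mapsto p(r,\pi)$ is a contraction on the complete space $\R$ by Assumption~\ref{ass:contraction}, and applies the Banach fixed-point theorem; your mean-value-theorem check and intermediate-value alternative just spell out the details. One small aside on your illustrative remark: $\pi\mapsto\pi+e^{-\pi}$ has negative derivative for $\pi<0$, so it is not an admissible non-decreasing oracle response, whereas $\pi\mapsto\log(1+e^{\pi})$ (derivative in $(0,1)$, tending to $1$, and no fixed point) makes the same point within the OP-AMM class.
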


\begin{proof}
Fix $r>0$. By Assumption~\ref{ass:contraction}, the map $\pi\mapsto p(r,\pi)$ is a contraction on the complete metric space $\R$. The result follows directly from the Banach fixed-point theorem.
\end{proof}

The price $\pi^*(r)$ is the internal equilibrium price implied by the inventory of the OP-AMM. Moreover, by the implicit function theorem, $\frac{d\pi^*(r)}{dr} = \frac{p_r(r,\pi^*(r))}{1-p_\pi(r,\pi^*(r))}$. Thus, wherever $p_r(r,\pi^*(r))>0$, the internal equilibrium price is strictly increasing in the reserve ratio. This brings us to the main result of this section: the quoted price of any strictly contractive OP-AMM is an interpolation between the oracle price and its internal equilibrium price. The proof of this theorem is provided in Appendix~\ref{app:proof-interp}.

\begin{theorem}\label{thm:interp}
Under Assumption~\ref{ass:contraction}, let $\pi^*(r)$ denote the unique fixed point defined in Proposition~\ref{prop:fixedpoint}. For every $r>0$, define
$$\lambda(r,\pi) := \begin{cases}
\displaystyle \frac{p(r,\pi)-\pi^*(r)}{\pi-\pi^*(r)}, & \pi\neq\pi^*(r), \\
p_\pi(r,\pi^*(r)), & \pi=\pi^*(r).
\end{cases}$$
Then $\lambda(r,\pi)\in[0,c_r]$, the map $\pi\mapsto\lambda(r,\pi)$ is continuous, and
$$p(r,\pi) = \lambda(r,\pi)\,\pi + \bigl(1-\lambda(r,\pi)\bigr)\pi^*(r).$$
\end{theorem}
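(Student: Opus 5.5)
Fix $r>0$ and abbreviate $\pi^*:=\pi^*(r)$ and $f(\pi):=p(r,\pi)$. By Proposition~\ref{prop:fixedpoint}, $f(\pi^*)=\pi^*$. The interpolation identity is essentially algebraic. When $\pi\neq\pi^*$, the definition of $\lambda$ gives
$$p(r,\pi)-\pi^* = \lambda(r,\pi)(\pi-\pi^*),$$
which rearranges to $p(r,\pi)=\lambda\pi+(1-\lambda)\pi^*$. When $\pi=\pi^*$, both sides equal $\pi^*$, since $f(\pi^*)=\pi^*$ and $\lambda\pi^*+(1-\lambda)\pi^*=\pi^*$ for any value of $\lambda$. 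So the identity holds for every $\pi$, and the real content lies in the bounds on $\lambda$ and its continuity.

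For the bounds I will apply the mean value theorem to $f$, which is $C^1$ in $\pi$ because $p$ is continuously differentiable. For $\pi\neq\pi^*$ it gives some $\xi$ strictly between $\pi$ and $\pi^*$ with
$$\lambda(r,\pi)=\frac{f(\pi)-f(\pi^*)}{\pi-\pi^*}=p_\pi(r,\xi).$$
Assumption~\ref{ass:contraction} then yields $\lambda(r,\pi)\in[0,c_r]$. At $\pi=\pi^*$, $\lambda=p_\pi(r,\pi^*)$, which lies in $[0,c_r]$ by the same assumption. The argument does not depend on the sign of $\pi-\pi^*$, because the difference quotient is symmetric.

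For continuity in $\pi$: on $\pi\neq\pi^*$, $\lambda$ is a quotient of continuous functions with a nonvanishing denominator, so it is continuous there. At $\pi^*$ the difference quotient converges to $f'(\pi^*)=p_\pi(r,\pi^*)$ by the definition of the derivative. This is exactly the assigned value, so $\lambda(r,\cdot)$ is continuous at $\pi^*$ as well. An alternative is the representation $\lambda(r,\pi)=\int_0^1 p_\pi\bigl(r,\pi^*+t(\pi-\pi^*)\bigr)\,dt$, which holds at every $\pi$ including $\pi^*$. Continuity of $p_\pi$ gives continuity of this integral by dominated convergence, and the bounds also follow directly from it.

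No step here is a real obstacle. The only point that needs care is the removable singularity at $\pi=\pi^*$: one has to check that the prescribed value $p_\pi(r,\pi^*)$ is both within $[0,c_r]$ and the limit of the difference quotients. The integral representation handles both at once, so I would present the proof through that formula.
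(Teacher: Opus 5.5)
Your proof is correct and is essentially the paper's argument. The paper also writes $\lambda(r,\pi)=\int_0^1 p_\pi\bigl(r,\pi^*(r)+q(\pi-\pi^*(r))\bigr)\,dq$ via the fundamental theorem of calculus, reads off $\lambda\in[0,c_r]$ from Assumption~\ref{ass:contraction}, gets continuity at $\pi^*(r)$ from the convergence of this integral, and obtains the interpolation identity by rearranging the definition (using $p(r,\pi^*(r))=\pi^*(r)$ at the fixed point), exactly as you propose.
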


\begin{proof}
See Appendix~\ref{app:proof-interp}.
\end{proof}

Theorem~\ref{thm:interp} shows that every strictly contractive OP-AMM price is a convex combination of the oracle price and an inventory-based equilibrium price. The coefficient $\lambda(r,\pi)$ measures the fraction of the oracle displacement that is transmitted to the quoted price. We next provide a parametric family of OP-CFMMs that interpolates between an inventory-based AMM and a fully oracle-priced design.

\begin{example}[Geometric-Average OP-CFMM (GA)]\label{ex:GAPMM}
Let $\phi:\R_{++}\to\R_{++}$ be the baseline AMM pricing function introduced above, i.e., strictly increasing and surjective with the normalization $\phi(1)=1$. Fix $\lambda\in(0,1)$ and define
$$p_{\mathrm{GA}}(r,\pi) := \lambda\pi + (1-\lambda)\log\phi(r), \qquad (r,\pi)\in\R_{++}\times\R.$$
Then $p_{\mathrm{GA}}$ defines an OP-AMM; the conditions in the reserve ratio follow directly from those imposed on $\phi$, while $\pi\mapsto p_{\mathrm{GA}}(r,\pi)$ is strictly increasing, contractive, and surjective with constant oracle sensitivity $p_\pi(r,\pi) = \lambda$. As with PT and IT, symmetry is equivalent to $\phi(r)\phi(1/r)=1$, while re-denomination invariance holds for GA if and only if the baseline AMM is the CPMM $\phi(r)=r$.

Its internal equilibrium price is $\pi^*(r) = \log\phi(r)$ and the interpolation weight of Theorem~\ref{thm:interp} satisfies $\lambda(r,\pi) \equiv \lambda$, i.e., the quoted price is a fixed convex combination of the oracle price and the AMM-implied equilibrium price. Notably, the limiting cases recover familiar designs: as $\lambda\to0$ the oracle dependence vanishes and the GA converges to the baseline AMM, while as $\lambda\to1$ the pricing rule becomes fully oracle-driven.

When $\phi(r)=r$, i.e., the CPMM, the pricing rule becomes
$$p_{\mathrm{GA}}(r,\pi) = \lambda\pi + (1-\lambda)\log r.$$
In this case, the conditional invariant of Proposition~\ref{prop:CI} is
$$U_{\mathrm{GA}}(x,y,\pi) = \frac{1}{\lambda} \log\left( \frac{y^\lambda+e^{\lambda\pi}x^\lambda}{1+e^{\lambda\pi}} \right).$$
As the denominator depends only on the fixed oracle input, the same trading curves are generated by
$$\widetilde U_{\mathrm{GA}}(x,y,\pi) = \frac{1}{\lambda} \log\bigl( y^\lambda+e^{\lambda\pi}x^\lambda \bigr).$$
This is a logarithmic transformation of a constant elasticity of substitution (CES) utility function; equivalently, the same pricing rule follows from the inventory-tracking construction of Example~\ref{ex:ITPMM} with a CES baseline $q\mapsto q^{1-\lambda}$.
\end{example}

\subsection{Local Liquidity}\label{sec:pmm-liquidity}

Within this section, we compare the local liquidity, i.e., the market depth, of an OP-AMM with that of an information-agnostic AMM. Following \cite[Proposition~3.10]{bichuch2025price}, we quantify local liquidity through the curvature of the trading curve. For a curve represented locally as $y=y(x)$, its curvature is
$$\kappa = \frac{|y''(x)|}{\bigl(1+(y'(x))^2\bigr)^{3/2}}.$$
Lower curvature corresponds to deeper local liquidity and smaller price impact. As such, we will measure the local liquidity of an OP-AMM by matching the curvature of its trading curve to that of a CPMM. The following lemma provides this equivalent CPMM liquidity.

\begin{lemma}\label{lemma:liquidity}
Consider a regular OP-AMM state with quoted price $S>0$, i.e., $p(r,\pi)=\log S$, such that $p_r(r,\pi)>0$, and normalize the pool value so that $Sx+y=1$. Let $L^{\mathrm{OP}}$ denote the CPMM liquidity parameter producing the same local trading-curve curvature as the OP-AMM. Then
$$L^{\mathrm{OP}} = \frac{2\sqrt{S}}{p_r(r,\pi)(S+r)^2} = \frac{4S}{p_r(r,\pi)(S+r)^2}\,L,$$
where $L=\frac{1}{2\sqrt{S}}$ is the liquidity of the CPMM under the same pool-value normalization. Consequently, the OP-AMM is locally more liquid than the equally capitalized CPMM whenever
$$p_r(r,\pi) < \frac{4S}{(S+r)^2}.$$
\end{lemma}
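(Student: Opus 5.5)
The plan is to compute the curvature of the OP-AMM trading curve directly from the conditional invariant of Proposition~\ref{prop:CI}, compute the curvature of a CPMM curve at the same quoted price, and then equate the two. Throughout, the oracle log-price $\pi$ is held fixed, so trading moves along a level set of $U(\cdot,\cdot,\pi)$.

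First, I describe the OP-AMM curve locally as $y=y(x)$. Proposition~\ref{prop:CI} gives the marginal price $\partial_xU/\partial_yU=\exp(p(y/x,\pi))$, so implicit differentiation of $U(x,y(x),\pi)=\text{const}$ yields $y'(x)=-e^{p(r,\pi)}$ with $r=y/x$. At the given state this slope is $y'=-S$. Differentiating once more gives $y''=-e^{p}\,p_r\,\frac{dr}{dx}$. Along the curve,
$$\frac{dr}{dx}=\frac{xy'-y}{x^2}=-\frac{S+r}{x},$$
so
$$y''=\frac{S\,p_r(r,\pi)\,(S+r)}{x}.$$
The normalization $Sx+y=1$ reads $x(S+r)=1$, hence $x=1/(S+r)$ and
$$\kappa^{\mathrm{OP}}=\frac{S\,p_r(r,\pi)(S+r)^2}{(1+S^2)^{3/2}}.$$
The assumption $p_r>0$ ensures that this curvature is positive, so matching it to a CPMM is meaningful.

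Second, I take the CPMM $xy=L^2$ at the point where its price is $S$. There $x=L/\sqrt S$ and $y=L\sqrt S$, the slope is $y'=-L^2/x^2=-S$, and $y''=2L^2/x^3=2S^{3/2}/L$. Hence
$$\kappa^{\mathrm{CPMM}}=\frac{2S^{3/2}}{L(1+S^2)^{3/2}}.$$
Both curves have the same slope $-S$ at the matched point, so the factors $(1+S^2)^{3/2}$ cancel when the curvatures are equated. Solving $2S^{3/2}/L^{\mathrm{OP}}=S\,p_r(S+r)^2$ gives
$$L^{\mathrm{OP}}=\frac{2\sqrt S}{p_r(S+r)^2}.$$

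Third, the equally capitalized CPMM satisfies $Sx+y=2L\sqrt S=1$, so $L=1/(2\sqrt S)$. Taking the ratio gives $L^{\mathrm{OP}}/L=4S/(p_r(S+r)^2)$. The comparison $L^{\mathrm{OP}}>L$ is then equivalent to $p_r<4S/(S+r)^2$.

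The only delicate step is the second-derivative computation. One must remember that $r$ varies along the curve while $\pi$ is fixed, and apply the chain rule through $r=y/x$ correctly. After that, the pool-value normalization has to be substituted so that $x$ is eliminated. The remaining steps are routine algebra.
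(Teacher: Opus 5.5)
Your proposal is correct and follows essentially the same route as the paper. Both arguments differentiate along the fixed-$\pi$ conditional curve to obtain $y''=S\,p_r(S+r)/x$, substitute $x=1/(S+r)$, match to the CPMM curvature $2S^{3/2}/\bigl(L(1+S^2)^{3/2}\bigr)$, and use $L=1/(2\sqrt{S})$. The only cosmetic difference is how $dr/dx=-(S+r)/x$ is obtained: you use the quotient rule on $r=y/x$, whereas the paper reads off the equivalent identity $dx/dr=-x/(S+r)$ from the invariant.
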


\begin{proof}
See Appendix~\ref{app:proof-liquidity}.
\end{proof}

\begin{example}\label{ex:GA-liquidity}
Consider the GA of Example~\ref{ex:GAPMM}, which admits an explicit liquidity comparison with the baseline AMM. At a state satisfying $p_{\mathrm{GA}}(r,\pi) = \log S$, define $A_\lambda(S,\pi) := S^{\frac{1}{1-\lambda}} e^{-\frac{\lambda}{1-\lambda}\pi}$ and
$$r_\lambda(S,\pi) := \phi^{-1}\bigl(A_\lambda(S,\pi)\bigr), \qquad r_\phi(S) := \phi^{-1}(S).$$
The equilibrium reserve ratio of the GA is therefore $r = r_\lambda(S,\pi)$. As $p_r^{\mathrm{GA}}(r,\pi) = (1-\lambda)\frac{\phi'(r)}{\phi(r)}$, Lemma~\ref{lemma:liquidity} yields
$$L_{\mathrm{GA}} = \frac{2\sqrt{S}\,A_\lambda(S,\pi)}{(1-\lambda) \phi'(r_\lambda(S,\pi)) \bigl(S+r_\lambda(S,\pi)\bigr)^2}.$$
Let $L_\phi$ denote the equivalent CPMM liquidity of the baseline AMM at the same quoted price and pool-value normalization. Then
$$\frac{L_{\mathrm{GA}}}{L_\phi} = \frac{1}{1-\lambda} \frac{A_\lambda(S,\pi)}{S} \frac{\phi'(r_\phi(S))}{\phi'(r_\lambda(S,\pi))} \left( \frac{S+r_\phi(S)}{S+r_\lambda(S,\pi)} \right)^2.$$
If the oracle is accurate, i.e., $\pi = \log S$, then $A_\lambda(S,\pi) = S$ and $r_\lambda(S,\pi) = r_\phi(S)$; therefore, $\frac{L_{\mathrm{GA}}}{L_\phi} = \frac{1}{1-\lambda} > 1$. That is, under a correct oracle signal, the GA is locally more liquid than the equally capitalized baseline AMM.

In the CPMM case $\phi(r)=r$, this comparison becomes explicit. Define $m := \bigl( Se^{-\pi} \bigr)^{\frac{\lambda}{1-\lambda}}$. Then $\frac{L_{\mathrm{GA}}}{L_\phi} = \frac{1}{1-\lambda} \frac{4m}{(1+m)^2}$. Consequently, $L_{\mathrm{GA}}>L_\phi$ if and only if $\frac{1-\sqrt{\lambda}}{1+\sqrt{\lambda}} < m < \frac{1+\sqrt{\lambda}}{1-\sqrt{\lambda}}$ or, equivalently,
$$|\log S-\pi| < \frac{1-\lambda}{\lambda} \log\left( \frac{1+\sqrt{\lambda}}{1-\sqrt{\lambda}} \right).$$
That is, the CPMM-based GA is locally more liquid than the baseline CPMM whenever the oracle price is sufficiently close to the external market price; more accurate oracle signals therefore permit larger values of $\lambda$ and stronger liquidity concentration.
\end{example}

\section{Loss-Versus-Rebalancing}\label{sec:lvr}

Within this section, we study the losses incurred by liquidity providers in an OP-AMM relative to a frictionless rebalancing strategy that holds the same instantaneous risky-asset inventory as the pool but trades at the external market price. This shortfall is known as loss-versus-rebalancing (LVR)~\cite{milionis2022automated}. Whereas the LVR of a standard AMM depends only on the external price process, the LVR of an OP-AMM also depends on the quality and dynamics of the oracle signal. We first derive the LVR of an OP-AMM under a continuously updated oracle in Section~\ref{sec:lvr-general}, which includes both perfect and noisy oracles as special cases. We then consider a piecewise-constant stale oracle with discrete updates in Section~\ref{sec:lvr-stale}; this setting is further used to study sandwich attacks around oracle updates in Section~\ref{sec:lvr-oev}.

\subsection{Continuous Oracle Updates}\label{sec:lvr-general}

Consider the continuous-time performance of an OP-AMM in the absence of trading fees. Let $S_t$ denote the external market price of the risky asset and write $s_t:=\log S_t$. As in~\cite{milionis2022automated}, we will assume that $S_t$ is a continuous local martingale satisfying
\begin{align}\label{eq:st}
\frac{dS_t}{S_t} = \sigma_t\,dW_t, \qquad
ds_t = -\frac{1}{2}\sigma_t^2\,dt + \sigma_t\,dW_t,
\end{align}
where $W_t$ is a standard Brownian motion. Let $\pi_t$ denote the oracle log-price and suppose that
$$d\pi_t = \mu_t^\pi\,dt + \sigma_t^\pi\,dW_t^\pi, \qquad d\langle W,W^\pi\rangle_t = \rho_t\,dt,$$
where the drift $\mu_t^\pi$ may accommodate, e.g., mean reversion or systematic bias, $\sigma_t^\pi$ is the instantaneous oracle volatility, and $\rho_t\in[-1,1]$ denotes the instantaneous correlation between the market and oracle innovations.

At each time $t$, arbitrageurs trade against the OP-AMM until its quoted log-price matches the external log-price, i.e., $p(r_t,\pi_t)=s_t$. As $p(\cdot,\pi_t)$ is strictly increasing and surjective, this clearing condition uniquely determines the post-clearing reserve ratio $r_t=r^*(s_t,\pi_t)$. After observing the current oracle input, arbitrageurs move the reserves along the conditional trading curve associated with $\pi_t$ until this clearing condition is restored. We denote the resulting post-clearing reserves by $(x_t,y_t)$ with $y_t=r_tx_t$. The marked-to-market value of the pool is
$$V_t = x_tS_t+y_t = x_t(S_t+r_t).$$

To define the rebalancing benchmark, let
$$R_t := V_0+\int_0^t x_u\,dS_u.$$
This strategy holds the same instantaneous risky-asset inventory $x_t$ as the pool but executes all reserve adjustments at the external price. We define the cumulative LVR by
$$\mathrm{LVR}_t := R_t-V_t.$$
Finally, define the residual price-innovation process $\zeta_t$ by
$$d\zeta_t := ds_t - p_\pi(r_t,\pi_t)\,d\pi_t, \qquad
\zeta_0:=0.$$
Its instantaneous quadratic-variation rate is
$$\frac{d}{dt}\langle\zeta\rangle_t = \sigma_t^2 - 2\rho_t\,p_\pi(r_t,\pi_t)\sigma_t\sigma_t^\pi + p_\pi(r_t,\pi_t)^2(\sigma_t^\pi)^2.$$
The following theorem provides the LVR of an OP-AMM under this general oracle process. The proof of this theorem is provided within Appendix~\ref{app:lvr-proof}.

\begin{theorem}\label{thm:lvr}
Suppose that $p\in C^2(\R_{++}\times\R)$, $p_r(r,\pi)>0$, and the underlying processes satisfy the usual integrability conditions. Then
$$dV_t = dR_t - d\mathrm{LVR}_t = x_t\,dS_t - \ell_t\,dt,$$
where the instantaneous LVR rate is
\begin{align}\label{eq:LVR}
\ell_t = \frac{1}{2} \frac{S_tx_t}{S_t+r_t} \frac{1}{p_r(r_t,\pi_t)} \frac{d}{dt}\langle\zeta\rangle_t = \frac{S_tx_t}{2(S_t+r_t)p_r(r_t,\pi_t)} \bigl( \sigma_t^2 - 2\rho_t\,p_\pi(r_t,\pi_t)\sigma_t\sigma_t^\pi + p_\pi(r_t,\pi_t)^2(\sigma_t^\pi)^2 \bigr).
\end{align}
In particular, $\ell_t\geq0$ and the cumulative LVR is a non-decreasing finite-variation process.
\end{theorem}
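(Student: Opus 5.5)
The plan is to compute $dV_t$ directly from $V_t = x_tS_t + y_t$ via Itô's formula. I will split the increment into a rebalancing part $x_t\,dS_t$ and a trading part $S_t\,dx_t + dy_t + d\langle x,S\rangle_t$. The trading part must then be identified with $-\ell_t\,dt$.

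The structural input is Proposition~\ref{prop:CI}. Oracle updates move $\pi_t$ but do not move reserves. Reserves change only through arbitrage trades, and each infinitesimal trade moves along the conditional invariant curve $\{U(\cdot,\cdot,\pi_t) = \text{const}\}$ at the current oracle value. So I will treat the reserve dynamics as a sequence of infinitesimal moves along the curve indexed by $\pi_t$. Each move executes at the local marginal price $\exp(p(r_t,\pi_t)) = S_t$ plus a second-order curvature correction.

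\textbf{Step 1: reserve-ratio dynamics.} Since $r_t = r^*(s_t,\pi_t)$ is defined implicitly by $p(r_t,\pi_t) = s_t$ and $p\in C^2$ with $p_r>0$, the implicit function theorem and Itô's formula give
$$dr_t = \frac{ds_t - p_\pi(r_t,\pi_t)\,d\pi_t}{p_r(r_t,\pi_t)} + (\text{finite variation}) = \frac{d\zeta_t}{p_r(r_t,\pi_t)} + (\text{finite variation}).$$
Hence $d\langle r\rangle_t = d\langle\zeta\rangle_t / p_r^2$, where the quadratic-variation rate of $\zeta$ is the one computed before the statement.

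\textbf{Step 2: geometry of the curve at fixed $\pi$.} Holding $\pi$ fixed, $\log x + G(y/x,\pi)$ is constant along the curve. Differentiating gives $dx/x = -dr/(r+e^{p}) = -dr/(r+S)$. Writing the curve locally as $y = y(x)$, I have $y'(x) = -e^{p(y/x,\pi)}$, and hence
$$y''(x) = -S\,p_r\,\frac{dr}{dx} = \frac{S\,p_r\,(r+S)}{x}.$$
A trade of size $dx$ then yields $dy = -S\,dx - \tfrac12 y''(x)(dx)^2$. The second-order term is exactly the shortfall relative to executing the same trade at the external price $S_t$.

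\textbf{Step 3: assemble.} The curvature loss accumulates as $\tfrac12 y''\,d\langle x\rangle_t$. Using $d\langle x\rangle_t = x_t^2\, d\langle r\rangle_t/(r_t+S_t)^2$ from Step 2 and $d\langle r\rangle_t$ from Step 1, I get
$$\tfrac12\cdot\frac{S p_r (r+S)}{x}\cdot\frac{x^2}{(r+S)^2}\cdot\frac{d\langle\zeta\rangle_t}{p_r^2} = \frac{S_tx_t}{2(S_t+r_t)\,p_r}\,d\langle\zeta\rangle_t,$$
which is \eqref{eq:LVR}. The remaining pieces, $S\,dx+dy$ at first order and the covariation $d\langle x,S\rangle$, must combine so that $dV_t = x_t\,dS_t - \ell_t\,dt$. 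I will verify this by writing $x_t$ as the explicit function $x_t = x_0\exp(-\int dr/(r+S))$ along the path. Nonnegativity of $\ell_t$ then follows because $p_r>0$, $S_t,x_t,r_t>0$, and $d\langle\zeta\rangle_t/dt$ is a quadratic variation rate, which equals the perfect square $(\sigma_t - \rho_t p_\pi\sigma^\pi_t)^2 + (1-\rho_t^2)p_\pi^2(\sigma^\pi_t)^2 \ge 0$. Monotonicity of cumulative LVR is immediate from $\ell_t \geq 0$.

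\textbf{Main obstacle.} The delicate step is justifying that the oracle motion contributes nothing to $dV_t$ beyond its effect through $d\zeta$. The curve index $\pi_t$ moves simultaneously with the trades, and I must make sure that cross-variation terms between $\pi$ and the reserves (e.g.\ $U_\pi$-type terms) do not leak into the pool's value. I would handle this by working with the self-financing identity $S_t\,dx_t + dy_t = -\tfrac12 y''\,d\langle x\rangle_t$ directly in terms of $r_t$. Both $x_t$ and $y_t = r_tx_t$ are driven only by $dr_t$ through the curve relation, where $\pi$ enters only via $S_t = e^{p(r_t,\pi_t)}$. This reduces the computation to an Itô expansion in the single semimartingale $r_t$, and Step 1 then supplies $d\langle r\rangle$.
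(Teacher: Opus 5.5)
Your Step~3 computation is the right idea and follows a genuinely different route from the paper's. As written, however, the argument does not close, and the closing step you propose would fail.

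\textbf{Where it breaks.} The problem is the anchoring in Step~2. Under the convention you describe (oracle and market move, then arbitrage restores $p(r,\pi)=s$), the arbitrage leg over $[t,t+dt]$ runs along the \emph{refreshed} curve $U(\cdot,\cdot,\pi_{t+dt})=U(x_t,y_t,\pi_{t+dt})$. That leg starts at marginal price $e^{p(r_t,\pi_{t+dt})}$ and ends at $S_{t+dt}$, and neither is $S_t$. Relative to $S_t$, the shortfall of the leg is $\tfrac12 y''(dx)^2+d\langle x,S\rangle_t$, not $\tfrac12 y''(dx)^2$. This is why $d\langle x,S\rangle_t=-\frac{x_tS_t}{r_t+S_t}\,d\langle r,s\rangle_t$, which is nonzero and of the same order as $\ell_t\,dt$, looks left over in your accounting.

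Your plan to dispose of it via $x_t=x_0\exp(-\int dr/(r+S))$, with $\pi$ entering only through $S_t$, is false. The invariant identity actually gives
$$d\log x_t=-\frac{dr_t}{r_t+S_t}+\frac{1}{2}\frac{1+p_rS_t}{(r_t+S_t)^2}\,d\langle r\rangle_t+\frac{p_\pi S_t}{(r_t+S_t)^2}\,d\langle r,\pi\rangle_t .$$
The last term, with coefficient one, is exactly the $G_{r\pi}$ leakage you were worried about. Read as an It\^o integral, your formula drops both corrections and gives the wrong LVR even for $p_\pi\equiv0$. Read in the Stratonovich sense, it halves the cross term and yields a rate proportional to $\sigma_t^2-\rho_t p_\pi\sigma_t\sigma_t^\pi$. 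For $\rho_t=0$ that loses the oracle-noise term $p_\pi^2(\sigma_t^\pi)^2$ entirely.

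\textbf{The repair.} It stays inside your framework: expand the leg around its \emph{endpoint}, where the refreshed curve $y=Y(x)$ has slope $Y'(x_{t+dt})=-S_{t+dt}$. Then $\Delta y=-S_{t+dt}\Delta x-\tfrac12Y''(\Delta x)^2+O(|\Delta x|^3)$. The exact identity $V_{t+dt}-V_t=x_t\,\Delta S+(S_{t+dt}\Delta x+\Delta y)$ then gives $dV_t=x_t\,dS_t-\tfrac12y''_t\,d\langle x\rangle_t$ with nothing left over, because $S_{t+dt}\Delta x=S_t\Delta x+\Delta S\,\Delta x$ already contains the covariation. Since $\langle x\rangle$ depends only on the martingale part of $dx_t=-\frac{x_t}{r_t+S_t}\,dr_t+(\text{f.v.})$, you never need the drift of $x_t$, which is where the $G_{r\pi}$ term lives. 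Your Step~3 arithmetic then finishes the proof.

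\textbf{Comparison with the paper.} The paper derives the full SDE of $\log x_t$ from the invariant identity, including $G_{r\pi}\,d\langle r,\pi\rangle_t$. It then computes the drift of $V_t$ by brute force to exhibit the cancellations. Your per-step loss $\tfrac12Y''(\Delta x)^2\geq0$ avoids that bookkeeping and makes $\ell_t\geq0$ evident before any computation.

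You should state the single-leg convention explicitly, though, because the covariance term in $\langle\zeta\rangle$ depends on it. Clearing $s$ and $\pi$ in two separate legs per step would give $\sigma_t^2+p_\pi^2(\sigma_t^\pi)^2$ instead.
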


\begin{proof}
See Appendix~\ref{app:lvr-proof}.
\end{proof}

That is, the term $x_t\,dS_t$ represents the market risk exposure of the frictionless rebalancing strategy. This term is a local martingale and does not contribute to the predictable drift of the pool value; under the usual integrability conditions, it has zero conditional expectation. As such, all predictable losses are captured by the LVR component.

\begin{remark}\label{rem:depth-variance}
The LVR rate of Theorem~\ref{thm:lvr} factors into a liquidity term and a variance term. Define the local liquidity-depth term
$$D_t := \frac{S_tx_t}{S_t+r_t} \frac{1}{p_r(r_t,\pi_t)}$$
and the residual variance rate $Q_t := \frac{d}{dt}\langle\zeta\rangle_t$. Then $\ell_t = \frac{1}{2}D_tQ_t$. The term $D_t$ depends on the inventory position and the local curvature of the OP-AMM pricing rule, whereas $Q_t$ depends on the joint market and oracle innovations weighted by the oracle sensitivity.

Let $L_t^{\mathrm{OP}}/L_t$ denote the equivalent CPMM liquidity ratio from Lemma~\ref{lemma:liquidity}, evaluated relative to an equally capitalized CPMM at the current quoted price. Then $\frac{L_t^{\mathrm{OP}}}{L_t} = \frac{4S_t}{p_r(r_t,\pi_t)(S_t+r_t)^2}$. Since $V_t = x_t(S_t+r_t)$, substitution gives $D_t = \frac{V_t}{4} \frac{L_t^{\mathrm{OP}}}{L_t}$ and hence $\ell_t = \frac{V_t}{8} \frac{L_t^{\mathrm{OP}}}{L_t} Q_t$. Equivalently,
$$\frac{\ell_t}{V_t} = \frac{1}{8} \frac{L_t^{\mathrm{OP}}}{L_t} \frac{d}{dt}\langle\zeta\rangle_t.$$

This identity separates two effects of the oracle-parametrized design. Increasing the equivalent CPMM liquidity raises market depth but also increases the exposure to any residual price innovation. Therefore, oracle information improves overall performance only when the resulting reduction in $Q_t$ is sufficiently large to offset this increased depth.

This identity also isolates what the oracle itself contributes. For any information-agnostic design, i.e., $p_\pi\equiv0$, the residual variance rate is $Q_t=\sigma_t^2$ whatever the shape of the trading curve; the normalized LVR per unit of relative depth is, therefore, pinned at $\sigma_t^2/8$. That is, a concentrated liquidity design, e.g., Uniswap~V3~\cite{uniswapv3}, raises $L_t^{\mathrm{OP}}/L_t$ and raises $\ell_t/V_t$ in the same proportion, and no information-agnostic pricing rule can do otherwise. However, oracle information moves the AMM design off this line, and it does so exactly to the extent that it reduces $Q_t$ below $\sigma_t^2$. We wish to note, further, that a concentrated position retains its depth only while the external price remains within its range, whereas the depth of an OP-AMM is centered on the oracle price and so follows the external market without intervention.
\end{remark}

The residual variance rate $Q_t$ consists of the market volatility $\sigma_t^2$ (to which a standard AMM is exposed), the variation $p_\pi(r_t,\pi_t)^2(\sigma_t^\pi)^2$ imported through the oracle, and the covariance between these two innovations. In particular, the oracle term should not generally be interpreted as pure noise since a well-functioning oracle inherits variation from the underlying market itself. Standard AMMs rely on rebalancing arbitrage to restore prices endogenously. OP-AMMs, instead, incorporate external price information directly; in doing so, OP-AMMs can potentially reduce stale-price arbitrage while supplying deeper liquidity to liquidity-motivated order flow. We now specialize the LVR decomposition of Theorem~\ref{thm:lvr} to two oracles of particular interest, i.e., a perfect oracle in Example~\ref{ex:perfect} and a noisy oracle in Corollary~\ref{cor:noisy}.

\begin{example}\label{ex:perfect}
Consider an oracle that observes the true log-price exactly, i.e., $\pi_t=s_t=\log S_t$,
and normalize the initial pool value to $V_0=1$ for simplicity. As $d\pi_t=ds_t$, the residual price innovation satisfies $d\zeta_t = \bigl(1-p_\pi(r_t,s_t)\bigr)\,ds_t$ and therefore $d\langle\zeta\rangle_t/dt = \bigl(1-p_\pi(r_t,s_t)\bigr)^2\sigma_t^2$.
Consequently, the instantaneous LVR rate~\eqref{eq:LVR} reduces to
\begin{align}\label{eq:LVR-perfect}
\ell_t = \frac{S_tx_t}{2(S_t+r_t)p_r(r_t,s_t)} \bigl(1-p_\pi(r_t,s_t)\bigr)^2 \sigma_t^2.
\end{align}
Thus, under a perfect oracle, the residual price variation arises solely from the incomplete pass-through of the external price signal. We now wish to compare three OP-CFMM designs under this perfect oracle.
\begin{enumerate}
\item Consider first the baseline AMM, for which $p_\pi\equiv0$. In this case,
$$p_r^{\mathrm{AMM}}(r) = \frac{\phi'(r)}{\phi(r)}, \qquad
r_t^{\mathrm{AMM}} = \phi^{-1}(S_t).$$
For the CPMM, i.e., $\phi(r)=r$, we obtain
$$x_t^{\mathrm{AMM}} = \frac{1}{2\sqrt{S_0S_t}}, \qquad
V_t^{\mathrm{AMM}} = \sqrt{\frac{S_t}{S_0}}, \qquad
\ell_t^{\mathrm{AMM}} = \frac{1}{8}V_t^{\mathrm{AMM}}\sigma_t^2.$$
This recovers the constant-product specialization of the continuous-time LVR formula of~\cite{milionis2022automated}.

\item Consider now the GA of Example~\ref{ex:GAPMM} with fixed $\lambda\in(0,1)$ so that $p_\pi\equiv\lambda$. Under a perfect oracle, the clearing condition implies $\lambda s_t + (1-\lambda)\log\phi(r_t^{\mathrm{GA}}) = s_t$ and hence $r_t^{\mathrm{GA}} = \phi^{-1}(S_t)$. That is, the GA and the baseline AMM have the same equilibrium reserve ratio, but the GA has lower reserve sensitivity, $p_r^{\mathrm{GA}}(r,\pi) = (1-\lambda)\phi'(r)/\phi(r)$.

In the constant-product case,
$$x_t^{\mathrm{GA}} = \frac{1}{2\sqrt{S_0S_t}} \exp\left( \frac{\lambda}{8} \int_0^t\sigma_u^2\,du \right),$$
and hence
$$\ell_t^{\mathrm{GA}} = \frac{1-\lambda}{8} V_t^{\mathrm{GA}}\sigma_t^2, \qquad
\frac{V_t^{\mathrm{GA}}}{V_t^{\mathrm{AMM}}} = \exp\left( \frac{\lambda}{8} \int_0^t\sigma_u^2\,du \right).$$
In particular, the normalized LVR satisfies $\ell_t^{\mathrm{GA}}/V_t^{\mathrm{GA}} = (1-\lambda)\ell_t^{\mathrm{AMM}}/V_t^{\mathrm{AMM}}$ and is thus strictly decreasing in $\lambda$. The absolute instantaneous LVR, on the other hand, satisfies
$$\frac{\ell_t^{\mathrm{GA}}}{\ell_t^{\mathrm{AMM}}} = (1-\lambda) \exp\left( \frac{\lambda}{8} \int_0^t\sigma_u^2\,du \right),$$
so that $\ell_t^{\mathrm{GA}}<\ell_t^{\mathrm{AMM}}$ if and only if $\int_0^t\sigma_u^2\,du < -8\log(1-\lambda)/\lambda$.

\item Finally, the PT of Example~\ref{ex:PTPMM} is the full pass-through boundary case $p_\pi\equiv1$. It satisfies $d\zeta_t=0$ and $\ell_t\equiv0$. Hence, under a perfect oracle with complete pass-through, the pool incurs no systematic rebalancing loss.
\end{enumerate}
\end{example}

We wish to highlight two effects of oracle pass-through that are separated by these comparisons. Greater oracle sensitivity reduces the residual price variation that generates LVR, while also allowing the GA to retain more pool value over time. The first effect lowers the LVR per unit of pool value, whereas the second determines whether this improvement also holds in absolute terms.

We now extend the analysis to noisy oracle signals in the following corollary.

\begin{corollary}\label{cor:noisy}
Consider the setting of Theorem~\ref{thm:lvr} and assume that the oracle observes a noisy version of the true log-price, i.e.,
$$\pi_t=s_t+\eta_t,$$
where the oracle-error process $\eta_t$ follows an Ornstein--Uhlenbeck (OU) process
\begin{align}\label{eq:noisy-OU}
d\eta_t = -\theta \eta_t\,dt + \sigma_t^\eta\,dW_t^\eta, \qquad
d\langle W,W^\eta\rangle_t = \rho^\eta\,dt.
\end{align}
The parameter $\theta>0$ determines the rate of mean reversion, while $\sigma_t^\eta$ is the instantaneous innovation volatility of the oracle error.\footnote{Only the quadratic variation of the oracle error enters~\eqref{eq:LVR-noisy}. As such, this corollary holds for any continuous It\^o oracle error $d\eta_t = \mu_t^\eta\,dt + \sigma_t^\eta\,dW_t^\eta$. Furthermore, for the OU noise setting, for a given innovation volatility $\sigma_t^\eta$, the rate of mean reversion $\theta$ affects $\ell_t$ only indirectly through the distribution of $\eta_t$ and the resulting reserve state.} Then the instantaneous LVR rate is
\begin{align}\label{eq:LVR-noisy}
\ell_t = \frac{S_tx_t}{2(S_t+r_t)p_r(r_t,\pi_t)} \bigl( (1-p_\pi)^2\sigma_t^2 + p_\pi^2(\sigma_t^\eta)^2 - 2\rho^\eta p_\pi(1-p_\pi)\sigma_t\sigma_t^\eta \bigr),
\end{align}
where $p_\pi$ is evaluated at $(r_t,\pi_t)$ here and throughout the discussion below.
\end{corollary}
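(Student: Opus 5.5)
The corollary follows from Theorem~\ref{thm:lvr} once we compute the quadratic-variation rate of the residual innovation $\zeta_t$ for this particular oracle. The plan has three steps: write the oracle in the It\^o form assumed in Section~\ref{sec:lvr-general}, compute $d\langle\zeta\rangle_t/dt$ directly, and substitute into~\eqref{eq:LVR}. Throughout, the prefactor $\frac{S_tx_t}{2(S_t+r_t)p_r(r_t,\pi_t)}$ is left untouched.

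\textbf{Step 1: the oracle as an It\^o process.} From $\pi_t=s_t+\eta_t$, \eqref{eq:st} and~\eqref{eq:noisy-OU},
$$d\pi_t = \bigl(-\tfrac12\sigma_t^2-\theta\eta_t\bigr)dt + \sigma_t\,dW_t + \sigma_t^\eta\,dW_t^\eta.$$
So $\pi_t$ is a continuous It\^o process, and the usual regularity conditions of Theorem~\ref{thm:lvr} carry over.

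The diffusion part can be written as $\sigma_t^\pi\,dW_t^\pi$ by setting
$$(\sigma^\pi_t)^2=\sigma_t^2+(\sigma_t^\eta)^2+2\rho^\eta\sigma_t\sigma_t^\eta$$
and defining $W^\pi$ through normalization. The correlation is then
$$\rho_t\sigma_t^\pi=\sigma_t+\rho^\eta\sigma_t^\eta.$$
If $\sigma^\pi_t=0$, one can instead apply the proof of Theorem~\ref{thm:lvr} directly to $\zeta$. That proof only uses the quadratic variation of $\zeta$, not the particular representation of $d\pi_t$.

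\textbf{Step 2: quadratic variation of $\zeta$.} It is cleaner to bypass $\sigma^\pi_t$ and $\rho_t$ altogether. Write $p_\pi=p_\pi(r_t,\pi_t)$. Then
$$d\zeta_t = ds_t - p_\pi\,d\pi_t = (1-p_\pi)\,ds_t - p_\pi\,d\eta_t,$$
whose martingale part is $(1-p_\pi)\sigma_t\,dW_t - p_\pi\sigma_t^\eta\,dW_t^\eta$.

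Expanding the quadratic variation with $d\langle W,W^\eta\rangle_t=\rho^\eta dt$ gives
$$\frac{d}{dt}\langle\zeta\rangle_t = (1-p_\pi)^2\sigma_t^2 + p_\pi^2(\sigma_t^\eta)^2 - 2\rho^\eta p_\pi(1-p_\pi)\sigma_t\sigma_t^\eta.$$
As a check, substituting the Step 1 expressions into the general formula preceding Theorem~\ref{thm:lvr} and expanding yields the same result.

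\textbf{Step 3: substitution.} Plugging this rate into the first expression for $\ell_t$ in~\eqref{eq:LVR} gives~\eqref{eq:LVR-noisy}. The drift terms, including $-\theta\eta_t$, do not enter the quadratic variation. This justifies the footnote's claim that any continuous It\^o error process gives the same formula, with $\theta$ acting only through the state $(r_t,\pi_t)$.

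The only step that needs care is Step 1: checking that the noisy oracle fits the hypotheses of Theorem~\ref{thm:lvr}, in particular its single-Brownian representation and possible degeneracy. This is handled by noting that the theorem's proof depends on $\pi_t$ only through It\^o's formula for the clearing condition $p(r_t,\pi_t)=s_t$. That formula in turn involves only the quadratic (co)variations of $s_t$ and $\pi_t$.
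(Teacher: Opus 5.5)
Your proposal is correct and follows essentially the same route as the paper: both write $d\zeta_t=(1-p_\pi)\,ds_t-p_\pi\,d\eta_t$, compute its quadratic-variation rate using $d\langle W,W^\eta\rangle_t=\rho^\eta\,dt$, and substitute into Theorem~\ref{thm:lvr}. Your explicit identification of $\sigma_t^\pi$ and $\rho_t$, including the degenerate case $\sigma_t^\pi=0$, spells out what the paper compresses into the phrase ``a generally time-varying correlation $\rho_t$''; one small correction to that aside is that the oracle drift does enter the intermediate term $a_t$ in that theorem's proof and only cancels in the final drift of $V_t$.
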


\begin{proof}
The oracle $\pi_t=s_t+\eta_t$ is a continuous It\^o process of the form considered in Section~\ref{sec:lvr-general}, with a generally time-varying correlation $\rho_t$ between its innovations and those of the market. As $d\pi_t = ds_t+d\eta_t$, the residual price innovation satisfies $d\zeta_t = \bigl(1-p_\pi(r_t,\pi_t)\bigr)\,ds_t - p_\pi(r_t,\pi_t)\,d\eta_t$. Consequently,
$$\frac{d}{dt}\langle\zeta\rangle_t = (1-p_\pi)^2\sigma_t^2 + p_\pi^2(\sigma_t^\eta)^2 - 2\rho^\eta p_\pi(1-p_\pi)\sigma_t\sigma_t^\eta$$
and the result follows from Theorem~\ref{thm:lvr}.
\end{proof}

When $\eta\equiv0$, i.e., $\eta_0=0$ and $\sigma_t^\eta\equiv0$, Corollary~\ref{cor:noisy} recovers the perfect oracle of Example~\ref{ex:perfect}. Within~\eqref{eq:LVR-noisy}, the first term captures the residual market-lag exposure, the second captures the oracle-error variation, and the third reflects the covariance between the market and oracle-error innovations. We wish to note that oracle errors also move the inventory $x_t$ away from that of a perfect-oracle pool. As $S_t$ is a local martingale, this deviation alters the variability of the pool value but not its predictable drift.

\begin{remark}\label{rem:stationary-variance}
Suppose that $\sigma_t^\eta\equiv\sigma^\eta$ is constant and that the OU process is in its stationary regime. The stationary variance of the oracle error is
$$v^\eta := \operatorname{Var}(\eta_t) = \frac{(\sigma^\eta)^2}{2\theta}.$$
If oracle noise is parametrized by fixing the stationary variance $v^\eta$, then maintaining the same long-run oracle-error variance requires $\sigma^\eta = \sqrt{2\theta v^\eta}$. Under this normalization, faster mean reversion requires larger innovation volatility and therefore increases both the instantaneous quadratic variation of the oracle error and its contribution to LVR in~\eqref{eq:LVR-noisy}.
\end{remark}

Equation~\eqref{eq:LVR-noisy} highlights the trade-off between lag risk and oracle-error risk. The oracle sensitivity $p_\pi$ determines how the residual variation is allocated between these two sources. As $p_\pi\to0$, the mechanism reduces to a standard AMM that is insensitive to oracle error; LVR is then driven entirely by market volatility. As $p_\pi\to1$, the lag-induced loss disappears, but the pool becomes fully exposed to the variation of $\eta_t$. Intermediate sensitivities balance these two sources of risk.

In the uncorrelated case $\rho^\eta=0$, the residual variance rate becomes $(1-p_\pi)^2\sigma_t^2 + p_\pi^2(\sigma_t^\eta)^2$. Therefore, increasing the oracle sensitivity reduces the exposure to market lags while increasing the exposure to oracle-error innovations. The relative magnitudes of $\sigma_t$ and $\sigma_t^\eta$ determine the desirable degree of oracle reliance. The following example makes this trade-off explicit for the GA.

\begin{example}\label{ex:GA-noisy}
Consider the GA of Example~\ref{ex:GAPMM} with fixed $\lambda\in(0,1)$ so that $p_\pi\equiv\lambda$. Then $p_r^{\mathrm{GA}}(r,\pi) = (1-\lambda)\phi'(r)/\phi(r) = (1-\lambda)p_r^{\mathrm{AMM}}(r)$, and hence $\frac{1}{p_r^{\mathrm{GA}}(r,\pi)} = \frac{1}{1-\lambda} \frac{1}{p_r^{\mathrm{AMM}}(r)}$. That is, at a given reserve ratio, the GA amplifies local liquidity depth by the factor $1/(1-\lambda)$. Under a perfect oracle, this increased depth is accompanied by a reduction in the residual market variation. Under a noisy oracle, however, the same mechanism also amplifies the exposure to oracle-error variation, which creates a trade-off between lag reduction and noise importation.

To isolate this trade-off, consider the uncorrelated case $\rho^\eta=0$ and compare the GA and baseline AMM locally while holding the state $(S_t,x_t,r_t)$ fixed. Under this ceteris paribus benchmark, the ratio of their instantaneous LVR coefficients is $\frac{(1-\lambda)^2\sigma_t^2 + \lambda^2(\sigma_t^\eta)^2}{(1-\lambda)\sigma_t^2}$. The GA coefficient is smaller than the baseline coefficient if and only if
$$\sigma_t^\eta < \bar{\sigma}_t^\eta(\lambda) := \sqrt{\frac{1-\lambda}{\lambda}}\,\sigma_t.$$
If $\sigma_t^\eta>\bar{\sigma}_t^\eta(\lambda)$, the amplification of oracle-error variation exceeds the reduction in lag-induced loss. Comparing instead against an information-agnostic design of the same relative depth, as in Remark~\ref{rem:depth-variance}, the common factor $1/(1-\lambda)$ drops out and the GA incurs the lower normalized LVR if and only if $\sigma_t^\eta < \sqrt{(2-\lambda)/\lambda}\,\sigma_t$. Notably, this threshold is strictly larger than $\bar{\sigma}_t^\eta(\lambda)$, i.e., the unconcentrated baseline is the more demanding comparison; under a perfect oracle, the GA attains a normalized LVR smaller than that of the equally deep information-agnostic design by the factor $(1-\lambda)^2$.

We wish to note that these thresholds are local, state-matched comparisons of the lag and noise coefficients. Under actual market clearing, the GA and the baseline AMM generally occupy different reserve and inventory states when the oracle is noisy. A full performance comparison must therefore also account for the differences in $(x_t,r_t)$ and local market depth as studied in Section~\ref{sec:cases}.
\end{example}

At the full pass-through boundary, the PT of Example~\ref{ex:PTPMM} satisfies $p_\pi\equiv1$. In this case, $d\zeta_t=-d\eta_t$ so that the lag-induced loss vanishes and LVR is driven entirely by the oracle-error variation; that is, each oracle-error innovation induces a corrective reserve adjustment. This is closest to the regime suggested by \cite[Section 10]{milionis2022automated}. Under a perfect oracle, it eliminates LVR entirely, as in Example~\ref{ex:perfect}; under any oracle error, however, that error is transmitted in full. As such, the oracle sensitivity is a design choice rather than a limit to approach.

\begin{remark}\label{rem:variance-fees}
The decomposition~\eqref{eq:LVR-noisy} isolates the oracle-error contribution $\frac{S_t x_t}{2(S_t+r_t)p_r(r_t,\pi_t)} p_\pi(r_t,\pi_t)^2(\sigma_t^\eta)^2$. This component scales with the instantaneous variance of the oracle error. Though fixed proportional fees may compensate liquidity providers for adverse-selection losses, their per-trade rate does not adapt directly to the quality of the oracle signal. This observation motivates a state-dependent spread that widens with the disagreement between the internal equilibrium price $\pi^*(r)$ of the OP-AMM and the oracle price $\pi$. As the leading oracle-error loss is driven by variance, a natural local specification is quadratic in the disagreement $\pi^*(r)-\pi$ so that its expected value is proportional to the oracle-error variance in the small-noise regime. The oracle-centered pricing of DODO~\cite{dodo} provides related practical motivation for this type of adaptive spread. As a complete analysis would also need to account for endogenous trading volume, fee revenue, and the interaction between fees and oracle sensitivity, we leave the design of such variance-linked fee schedules for future research.
\end{remark}

\subsection{Discrete Oracle Updates}\label{sec:lvr-stale}

We now consider an oracle with discrete updates. Let $(\tau_k)_{k\geq0}$ be a sequence of stopping times with $\tau_0:=0$ and $\tau_{k+1}>\tau_k$ almost surely. The oracle is piecewise constant, i.e.,
$$\pi_t=\pi_{\tau_k}, \qquad t\in[\tau_k,\tau_{k+1}),$$
with noisy updates
$$\pi_{\tau_k} = s_{\tau_k}+\eta_{\tau_k},$$
where $\eta_{\tau_k}$ represents measurement, aggregation, or update-time oracle error. This setting directly models a push oracle that remains frozen between update times and is refreshed only at $(\tau_k)$. In practice, an oracle update may contain both measurement error and posting latency. This can be modeled by $\pi_{\tau_k} = s_{\tau_k-\delta} + \eta_{\tau_k}$ for a delay $\delta>0$. The case study of Section~\ref{sec:sp500} incorporates such delay-induced oracle errors.

At an update time $t=\tau_k$, the oracle changes discontinuously from $\pi_{\tau_{k-1}}$ to $\pi_{\tau_k}$. As the external price process is continuous, the reserve ratio immediately before the update is $r_{\tau_k^-} = r^*(s_{\tau_k},\pi_{\tau_{k-1}})$, whereas, after the oracle update and subsequent arbitrage clearing, $r_{\tau_k} = r^*(s_{\tau_k},\pi_{\tau_k}) = r^*(s_{\tau_k},s_{\tau_k}+\eta_{\tau_k})$.

To describe this adjustment, let $x_k(r)$ denote the risky-asset reserve along the post-update conditional trading curve, defined by
$$U\bigl(x_k(r),r x_k(r),\pi_{\tau_k}\bigr) = U\bigl(x_{\tau_k^-},y_{\tau_k^-},\pi_{\tau_k}\bigr).$$
Then $x_k(r_{\tau_k^-})=x_{\tau_k^-}$, $x_k(r_{\tau_k})=x_{\tau_k}$, and $x_k'(r) = -\frac{x_k(r)}{r+\exp(p(r,\pi_{\tau_k}))}$. The following proposition provides the resulting decomposition of LVR under discrete oracle updates.

\begin{proposition}\label{prop:stale-lvr}
Consider the discrete oracle updates above and suppose that the conditions of Theorem~\ref{thm:lvr} hold between update times. Then, over any horizon $[0,T]$,
$$\mathrm{LVR}_T = \int_0^T\ell_t\,dt + \sum_{0<\tau_k\leq T} \Delta\mathrm{LVR}_{\tau_k},$$
where the continuous LVR rate on $(\tau_k,\tau_{k+1})$ is $\ell_t = \frac{1}{2} \frac{S_tx_t}{S_t+r_t} \frac{1}{p_r(r_t,\pi_{\tau_k})} \sigma_t^2$ and the jump contribution to LVR at each update time $\tau_k$ is
\begin{align}\label{eq:stale-jump-lvr}
\Delta\mathrm{LVR}_{\tau_k}:= -\Delta V_{\tau_k} = \int_{r_{\tau_k^-}}^{r_{\tau_k}} \Bigl( e^{p(r,\pi_{\tau_k})} - S_{\tau_k} \Bigr) x_k'(r)\,dr = S_{\tau_k} \int_{r_{\tau_k^-}}^{r_{\tau_k}} \Bigl( e^{p(r,\pi_{\tau_k})-s_{\tau_k}} - 1 \Bigr) x_k'(r)\,dr.
\end{align}
In particular, $\Delta\mathrm{LVR}_{\tau_k} \geq 0$ for every update time $\tau_k$.
\end{proposition}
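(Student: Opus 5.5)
Between consecutive update times the oracle is frozen at $\pi_{\tau_k}$, so on each open interval $(\tau_k,\tau_{k+1})$ the OP-AMM is a fixed AMM with the deterministic parameter $\pi_{\tau_k}$. There I would apply Theorem~\ref{thm:lvr} directly with $d\pi_t=0$, i.e.\ $\sigma_t^\pi\equiv0$. The residual innovation is then $d\zeta_t=ds_t$, so $\frac{d}{dt}\langle\zeta\rangle_t=\sigma_t^2$. This gives $dV_t=x_t\,dS_t-\ell_t\,dt$ with the stated rate $\ell_t$. Since $R_t$ is continuous, $R$ has no jump at $\tau_k$. Hence, summing the pieces over $[0,T]$, $\mathrm{LVR}_T=R_T-V_T$ equals $\int_0^T\ell_t\,dt$ plus $\sum_k(-\Delta V_{\tau_k})$. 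This justifies defining $\Delta\mathrm{LVR}_{\tau_k}:=-\Delta V_{\tau_k}$. There are almost surely finitely many updates on $[0,T]$, since the $\tau_k$ are strictly increasing stopping times; I would state this as part of the standing integrability conditions.

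Next I would compute the jump. At $\tau_k$ the external price $S_{\tau_k}$ is continuous, and the arbitrage clearing moves the reserves along the post-update conditional curve $\{U(\cdot,\cdot,\pi_{\tau_k})=\text{const}\}$, from $r_{\tau_k^-}$ to $r_{\tau_k}$. Parametrize the curve by $r$ via $x_k(r)$ and $y_k(r)=rx_k(r)$. The value is $V=S_{\tau_k}x+y$, so
$$\Delta V_{\tau_k}=\int_{r_{\tau_k^-}}^{r_{\tau_k}}\bigl(S_{\tau_k}x_k'(r)+y_k'(r)\bigr)\,dr .$$
By Proposition~\ref{prop:CI}, along a level set we have $\partial_xU\,dx+\partial_yU\,dy=0$, and the marginal price is $\partial_xU/\partial_yU=e^{p(r,\pi_{\tau_k})}$. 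Therefore $y_k'(r)=-e^{p(r,\pi_{\tau_k})}x_k'(r)$. The stated formula $x_k'(r)=-x_k/(r+e^{p})$ follows from differentiating $y_k=rx_k$ together with this relation, which is consistent with the definition of $G$. Substituting gives $\Delta V_{\tau_k}=\int (S_{\tau_k}-e^{p})x_k'\,dr$, and negating yields \eqref{eq:stale-jump-lvr}. Factoring out $S_{\tau_k}$ gives the second form.

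The main point is the sign. Set $f(r)=e^{p(r,\pi_{\tau_k})-s_{\tau_k}}-1$. The clearing condition gives $p(r_{\tau_k},\pi_{\tau_k})=s_{\tau_k}$, and $p(\cdot,\pi_{\tau_k})$ is strictly increasing, so $f$ vanishes at the upper limit $r_{\tau_k}$. It follows that $f(r)<0$ for $r<r_{\tau_k}$ and $f(r)>0$ for $r>r_{\tau_k}$. Also $x_k'<0$ everywhere.

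If $r_{\tau_k^-}<r_{\tau_k}$, the integration runs forward over a region where $f<0$. Then $f\,x_k'>0$ and the integral is nonnegative. If $r_{\tau_k^-}>r_{\tau_k}$, the integral runs backward over a region where $f>0$. Then $f\,x_k'<0$, and the reversed orientation makes the result nonnegative again. In both cases $\Delta\mathrm{LVR}_{\tau_k}\ge0$.

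Economically, the arbitrageur buys at curve prices below $S$ or sells at curve prices above $S$. The only care needed is keeping orientations straight, and making sure that the pre-update state lies on the post-update curve. That holds by the definition of $x_k$, since the oracle jump itself does not move the reserves.
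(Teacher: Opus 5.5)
Your proposal is correct and follows the paper's proof essentially step for step. You apply Theorem~\ref{thm:lvr} with $d\pi_t=0$ on each $(\tau_k,\tau_{k+1})$, use continuity of $R_t$ so that the LVR jump is $-\Delta V_{\tau_k}$, derive $V'(r)=(S_{\tau_k}-e^{p(r,\pi_{\tau_k})})x_k'(r)$ along the post-update curve, and give the same orientation-based sign argument; the only difference is that you get the curve identity from the marginal-price relation of Proposition~\ref{prop:CI}, while the paper uses $x_k=-(r+e^{p})x_k'$, which is the same fact.

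One small correction: strictly increasing stopping times can still accumulate before $T$, so finitely many updates on $[0,T]$ does not follow from that property and must be imposed as an assumption, as you suggest (the paper leaves it implicit).
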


\begin{proof}
See Appendix~\ref{app:proof-stale-lvr}.
\end{proof}

That is, the volatility-driven component between updates has the same form as in a standard AMM, but it is evaluated at the reserve state and liquidity depth induced by the most recent oracle update. As such, the oracle sensitivity is reflected through the prevailing inventory configuration and pricing geometry rather than through an instantaneous oracle innovation.

The jump term has a direct interpretation as the baseline \emph{oracle extractable value} (OEV)~\cite{andreoulis2025designing}. At $\tau_k$, the oracle changes while the reserves initially remain fixed at $(x_{\tau_k^-},y_{\tau_k^-})$. The refreshed quote therefore generally satisfies $p(r_{\tau_k^-},\pi_{\tau_k}) \neq s_{\tau_k}$, which creates an immediate arbitrage opportunity. Equation~\eqref{eq:stale-jump-lvr} measures the difference between the marginal execution price $e^{p(r,\pi_{\tau_k})}$ of the OP-AMM and the external price $S_{\tau_k}$, integrated over the reserve adjustment required to restore equilibrium. Thus, stale-oracle losses have two components: the \emph{continuous stale-pricing cost} $\int_0^T\ell_t\,dt$, which captures the losses accumulated between updates, and the \emph{oracle-refresh OEV} $\sum_{0<\tau_k\leq T}\Delta\mathrm{LVR}_{\tau_k}$, which captures the value extracted when the oracle is refreshed. This baseline jump loss assumes that arbitrage begins only after the update; allowing an attacker to trade both before and after the refresh can generate additional oracle-sandwich value as studied in Section~\ref{sec:lvr-oev}. The following example makes the stale-oracle effect explicit for the GA.

\begin{example}\label{ex:GA-stale}
Consider the GA of Example~\ref{ex:GAPMM} with fixed $\lambda\in(0,1)$. Between update times, the clearing condition becomes $\lambda\pi_{\tau_k} + (1-\lambda)\log\phi(r_t^{\mathrm{GA}}) = s_t$ and hence
$$r_t^{\mathrm{GA}} = \phi^{-1}\left( \exp\left( \frac{s_t-\lambda\pi_{\tau_k}}{1-\lambda} \right) \right).$$
Unlike the perfect-oracle benchmark of Example~\ref{ex:perfect}, the equilibrium reserve ratio now depends explicitly on the discrepancy between the current market price and the most recent oracle update. A stale oracle, therefore, induces a persistent displacement of the reserve state between updates. As $p_r^{\mathrm{GA}}(r,\pi) = (1-\lambda)\phi'(r)/\phi(r) = (1-\lambda)p_r^{\mathrm{AMM}}(r)$, the continuous LVR rate is
$$\ell_t^{\mathrm{GA}} = \frac{1}{2} \frac{1}{1-\lambda} \frac{S_tx_t^{\mathrm{GA}}}{S_t+r_t^{\mathrm{GA}}} \frac{1}{p_r^{\mathrm{AMM}}(r_t^{\mathrm{GA}})} \sigma_t^2.$$
Thus, the oracle sensitivity amplifies local liquidity depth through the factor $1/(1-\lambda)$, but around a reserve state determined by the stale oracle signal rather than the contemporaneous market price. This displacement is resolved at update times through the jump term $\Delta\mathrm{LVR}_{\tau_k}$.
\end{example}

To conclude our discussion of stale oracles, we wish to consider the choice of the oracle update policy. The stale-oracle decomposition suggests that such update policies can be optimized through the choice of stopping times. For example, given a candidate oracle signal $\widetilde{s}_t$, e.g., the true log-price $s_t$, a noisy signal $s_t+\eta_t$, or a delayed signal $s_{t-\delta}$, one may consider
\begin{equation}\label{eq:update-policy}
\tau_{k+1} = \Bigl( \inf\bigl\{t>\tau_k \mid |\widetilde{s}_t-\pi_{\tau_k}| > \varepsilon \bigr\} \Bigr) \wedge (\tau_k+h)
\end{equation}
with $\pi_{\tau_{k+1}} = \widetilde{s}_{\tau_{k+1}}$, where $\varepsilon>0$ is a deviation threshold and $h>0$ is a maximal refresh interval, or heartbeat.\footnote{In Section~\ref{sec:sp500} we abuse notation and let $\varepsilon = 0$ denote an update at every observation.}
Choosing $(\varepsilon,h)$ balances the continuous and jump components of stale-oracle LVR against the operational cost of oracle refreshes. We leave the optimization of such update policies for future research.

\begin{remark}\label{rem:decaying-sensitivity}
The characterization of Theorem~\ref{thm:interp} treats the oracle sensitivity as state dependent; however, the oracle sensitivity may also vary with the age of the oracle signal. Let $\Delta t := t-\tau_k$ denote the time elapsed since the most recent oracle update and consider a sensitivity $\lambda=\lambda(\Delta t)$ satisfying
$$0 \leq \lambda(\Delta t) < 1, \qquad
\lambda'(\Delta t) < 0, \qquad
\lim_{\Delta t\to\infty} \lambda(\Delta t) = 0,$$
with $\lambda(0)$ close to one. Economically, this describes a transition from price taking to price making. Immediately after an oracle update, the OP-AMM places substantial weight on the external signal so as to minimize price lag. As the signal ages, and its error variance accumulates with market volatility, the OP-AMM gradually shifts weight toward its internal equilibrium price and ultimately relies on endogenous price discovery. In this way, the time-decaying sensitivity acts as a soft circuit breaker that can protect liquidity providers from stale-quote arbitrage during periods of oracle latency or outage.
\end{remark}

\subsubsection{Oracle-Update Sandwich Attacks}\label{sec:lvr-oev}

The analysis of Section~\ref{sec:lvr-stale} assumes that arbitrageurs trade against the OP-AMM only after each oracle update so as to restore $p(r_t,\pi_t)=s_t$; these trades generate the oracle-refresh loss $\Delta\mathrm{LVR}_{\tau_k}$ of Proposition~\ref{prop:stale-lvr}. On a blockchain, however, a trader can also swap against the pool immediately before an update is posted and reverse that swap immediately afterward, i.e., a sandwich attack~\cite{zhou2021high,qin2022quantifying}. We refer to such a trader as an \emph{attacker} since, unlike the arbitrageurs of Section~\ref{sec:lvr-stale}, who trade according to the pricing rule of the pool, an attacker exploits the fact that the oracle update is posted as a separate transaction. Within this section, we will quantify the additional value that an attacker can extract as well as the protection that transaction fees provide against it. We first compute this value without fees in Proposition~\ref{prop:MEV-0fee}, compare it with the oracle-refresh loss in Corollary~\ref{cor:lvr-no-attack}, and then introduce fees in Lemma~\ref{lemma:MEV+fees}.

Consider a single oracle update
$$\pi_-\longrightarrow\pi_+, \qquad \dpi:=\pi_+-\pi_-\neq0,$$
and hold the external price $S=e^s$ fixed over the update window. Let $(x_-,y_-)$ denote the pre-update reserves of the OP-AMM and set $r_- = y_-/x_-$.

Throughout this section we will work in the geometry of Proposition~\ref{prop:CI}. Recall from Section~\ref{sec:lvr-general} that strict monotonicity and surjectivity of $p(\cdot,\pi)$ determine a unique clearing reserve ratio
$$r^*(s,\pi) := p(\cdot,\pi)^{-1}(s), \qquad (s,\pi) \in \R^2.$$
As the canonical trading function $\lcal(\cdot,\cdot,\pi) = \exp U(\cdot,\cdot,\pi)$ of Proposition~\ref{prop:CI} is positively homogeneous, we refer to $\lcal(x,y,\pi)$ as the \emph{invariant liquidity} of the reserve state $(x,y)$ at oracle price $\pi$ in the sense of~\cite[Section 1.6]{angeris2023geometry}.\footnote{The invariant liquidity should be distinguished from the equivalent CPMM liquidity $L^{\mathrm{OP}}$ of Lemma~\ref{lemma:liquidity}. The former is a global, positively homogeneous scale factor on the reachable set and is constant along a conditional trading curve, whereas the latter is a local curvature match at a single state and varies along that curve. These notions coincide for the CPMM, for which the curvature match is exact, but not in general; e.g., for the geometric-average family of Example~\ref{ex:GAPMM}, their ratio varies with the state.} Following~\cite[Section 1.3]{angeris2023geometry}, a trade against the pool at oracle price $\pi$ can carry the reserves from $(x,y)$ to any state of the rescaled reachable set $\lcal(x,y,\pi)\rcal(\pi)$, whose efficient boundary is the conditional trading curve through $(x,y)$. Write $\lcal_- := \lcal(x_-,y_-,\pi_-)$. We refer to the efficient boundary of $\lcal_-\rcal(\pi_-)$ as the \emph{stale curve} and, for any reserve state $(x',y')$, to the efficient boundary of $\lcal(x',y',\pi_+)\rcal(\pi_+)$ as the \emph{refreshed curve} through $(x',y')$.

In this geometry, marking a pool to market at the external price is an evaluation of the portfolio value function~\cite[Section 1.4.2]{angeris2023geometry}
$$\V_\pi(S) := \min\bigl\{Sx+y \mid (x,y) \in \rcal(\pi)\bigr\}, \qquad S>0.$$
As $\rcal(\pi)$ is closed and upward closed, this minimum is attained at the unique point of $\rcal(\pi)$ with reserve ratio $\bar r:=r^*(\log S,\pi)$ so that
$$\V_\pi(S) = (S+\bar r)\exp\bigl(-G(\bar r,\pi)\bigr).$$

Finally, by the boundary limits~\eqref{eq:g-limits}, $U(x,0,\pi) = \log x + g_0(\pi)$ and $U(0,y,\pi) = \log y + g_\infty(\pi)$. Therefore, $\rcal(\pi)$ meets the $y$-axis, i.e., a finite trade can drain the risky reserve, if and only if $g_\infty(\pi)>-\infty$. Likewise, $\rcal(\pi)$ meets the $x$-axis, i.e., a finite trade can drain the num\'eraire reserve, if and only if $g_0(\pi)>-\infty$. Boundary divergence, i.e., the structural property~(v) of Section~\ref{sec:pmm-construction}, excludes both so that $\rcal(\pi)\subseteq\R^2_{++}$.

The attacker first trades against the stale curve, moving the reserves to some $(x',y')\in\lcal_-\rcal(\pi_-)$ before the update, i.e., a \emph{front-run}. After the update, the attacker trades against the refreshed curve through $(x',y')$, i.e., a \emph{back-run}. For a front-run to $(x',y')$, the largest value that the attacker can extract from this round trip is
$$\Pi(x',y') := \sup \bigl\{S(x_--x_+)+(y_--y_+) \mid (x_+,y_+) \in \lcal(x',y',\pi_+)\rcal(\pi_+) \bigr\},$$
and the supremal sandwich value is
$$\Pi^* := \sup_{(x',y')\in\lcal_-\rcal(\pi_-)} \Pi(x',y').$$

The following proposition provides the value of a sandwich attack around an oracle update in the absence of fees.

\begin{proposition}\label{prop:MEV-0fee}
Consider the single oracle update above.
\begin{enumerate}
\item\label{prop:MEV-backrun}
For every $(x',y')\in\lcal_-\rcal(\pi_-)$, the optimal back-run attains the supremum at the unique point of reserve ratio $r_+ := r^*(s,\pi_+)$ and
$$\Pi(x',y') = \bigl[Sx_-+y_-\bigr] - \lcal(x',y',\pi_+)\,\V_{\pi_+}(S).$$

\item\label{prop:MEV-sup}
If $\dpi>0$, then the supremal value is
$$\Pi^* = \bigl[Sx_-+y_-\bigr] - \lcal_-e^{D}\,\V_{\pi_+}(S), \qquad D := \lim_{r \nearrow \infty}\bigl[G(r,\pi_+)-G(r,\pi_-)\bigr] \in [-\infty,0],$$
with the convention $e^{-\infty}:=0$. If $g_\infty(\pi_-)>-\infty$, then $D = g_\infty(\pi_+)-g_\infty(\pi_-)$ and the supremum is attained. In particular, an optimal front-run is the finite trade carrying the pool to the boundary point $\bigl(0,\lcal_-e^{-g_\infty(\pi_-)}\bigr) \in \lcal_-\rcal(\pi_-)$, which drains the risky reserve entirely. If $g_\infty(\pi_-)=-\infty$, then the supremum is approached along the stale curve as $r\nearrow\infty$; if, in addition, $p(r,\pi_+)>p(r,\pi_-)$ for every $r>0$, then it is not attained at any finite reserve state. In either case, $\Pi^*$ equals the entire pool value $Sx_-+y_-$ if and only if $D=-\infty$. The case $\dpi<0$ is symmetric with $g_0$ and $r \searrow 0$ in place of $g_\infty$ and $r \nearrow \infty$.
\end{enumerate}
\end{proposition}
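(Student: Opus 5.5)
The plan is to reduce both parts to the homogeneity of the canonical trading function $\lcal$ from Proposition~\ref{prop:CI}, and then to a one-dimensional monotonicity computation along the stale curve.

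\textbf{Part~\ref{prop:MEV-backrun}.} Fix $(x',y')$ and write $\ell' := \lcal(x',y',\pi_+)$. The attacker's profit is $Sx_-+y_-$ minus the post-trade value $Sx_++y_+$. So $\Pi(x',y')$ equals $Sx_-+y_-$ minus the minimum of the linear functional $(x,y)\mapsto Sx+y$ over $\ell'\rcal(\pi_+)$.

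Since scaling a set by $\ell'$ scales this minimum by $\ell'$, the minimum equals $\ell'\,\V_{\pi_+}(S)$. Uniqueness of the minimizer, at ratio $r_+=r^*(s,\pi_+)$, is inherited from the stated characterization of $\V_\pi$. That characterization rests on the first-order condition that the marginal price $e^{p(\bar r,\pi)}$ equals $S$, together with strict monotonicity of $p(\cdot,\pi)$. I will briefly re-derive it: minimize $(S+r)e^{-G(r,\pi)}$ over $r$, whose logarithmic derivative is
$$\frac{1}{S+r}-\frac{1}{r+e^{p(r,\pi)}},$$
and this vanishes only at $\bar r$. Axis points must be compared separately when $g_0$ or $g_\infty$ is finite. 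I will note that $\V_{\pi_+}(S)>0$ because $\rcal(\pi_+)$ is closed and excludes the origin.

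\textbf{Part~\ref{prop:MEV-sup}.} By Part~\ref{prop:MEV-backrun}, maximizing $\Pi$ means minimizing $\lcal(x',y',\pi_+)$ over $\lcal_-\rcal(\pi_-)$. Every point of this set dominates componentwise a point of its efficient boundary (including possible axis endpoints), and $\lcal(\cdot,\cdot,\pi_+)$ is non-decreasing. Hence it suffices to minimize over the stale curve.

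Parametrize the interior of the stale curve by the ratio $r$. From $U(x',rx',\pi_-)=\log\lcal_-$ we get $x'=\lcal_-e^{-G(r,\pi_-)}$, and therefore
$$\lcal(x',rx',\pi_+)=\lcal_-\exp\bigl(G(r,\pi_+)-G(r,\pi_-)\bigr).$$
Differentiating in $r$ gives
$$\frac{1}{r+e^{p(r,\pi_+)}}-\frac{1}{r+e^{p(r,\pi_-)}},$$
which is $\le 0$ when $\dpi>0$ by monotonicity of $p$ in $\pi$ (Definition~\ref{defn:PMM}(iii)). So the exponent is non-increasing, and its infimum is the limit $D$ as $r\nearrow\infty$. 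Since $G(1,\cdot)=0$, we get $D\le 0$.

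Writing the exponent as $[G(r,\pi_+)-\log r]-[G(r,\pi_-)-\log r]$ gives $D=g_\infty(\pi_+)-g_\infty(\pi_-)$ when $g_\infty(\pi_-)$ is finite. In that case the stale curve reaches the axis point $(0,\lcal_-e^{-g_\infty(\pi_-)})$. By the continuous extension $U(0,y,\pi)=\log y+g_\infty(\pi)$, this point has $\lcal(\cdot,\pi_+)=\lcal_-e^{D}$, so the supremum is attained there.

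If $g_\infty(\pi_-)=-\infty$, the curve never reaches the axis and the infimum is only approached as $r\nearrow\infty$. Under the strict inequality $p(r,\pi_+)>p(r,\pi_-)$ the derivative is strictly negative, so no finite state attains it.

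Finally, $\Pi^*=Sx_-+y_-$ if and only if $\lcal_-e^D\V_{\pi_+}(S)=0$. Since $\lcal_->0$ and $\V_{\pi_+}(S)>0$, this holds exactly when $D=-\infty$. The case $\dpi<0$ follows by the mirror argument: the derivative sign flips, so the exponent decreases as $r\searrow0$, and $g_0$ replaces $g_\infty$.

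\textbf{Main obstacle.} The delicate step is the reduction to the stale curve together with the boundary bookkeeping. This means justifying that interior points of $\lcal_-\rcal(\pi_-)$ and axis endpoints are correctly handled via the continuous extension of $U$. It also means treating $D\in[-\infty,0]$ uniformly, including the case $g_\infty(\pi_+)=-\infty$ with $g_\infty(\pi_-)$ finite. I would handle this first by proving monotonicity of $\lcal(\cdot,\cdot,\pi_+)$ under componentwise order, and then by passing to limits along the curve using the monotone limits in~\eqref{eq:g-limits}.
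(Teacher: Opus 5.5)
Your proposal is correct and follows essentially the same route as the paper. You use homogeneity of $\lcal$ for the optimal back-run, then reduce to the stale curve, parametrize it by the reserve ratio to get $\lcal_-\exp\bigl(G(r,\pi_+)-G(r,\pi_-)\bigr)$ with non-positive derivative, and pass to the limit $r\nearrow\infty$. The only differences are minor: you reduce to the efficient boundary via componentwise domination and monotonicity of $\lcal(\cdot,\cdot,\pi_+)$ rather than the paper's radial rescaling by $\lcal_-/\lcal(x,y,\pi_-)$, and you additionally justify $D\leq 0$ (via $G(1,\cdot)=0$), the uniqueness of the minimizer defining $\V_\pi$, and $\V_{\pi_+}(S)>0$, which the paper takes for granted.
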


\begin{proof}
See Appendix~\ref{app:proof-MEV-0fee}.
\end{proof}

That is, by Proposition~\ref{prop:MEV-0fee}\eqref{prop:MEV-backrun}, the attacker extracts the value of the pool less the value of the refreshed reachable set, scaled by the invariant liquidity that the update leaves behind. Furthermore, by Proposition~\ref{prop:MEV-0fee}\eqref{prop:MEV-sup}, absent fees, capital constraints, or reserve limits, the attacker front-runs as far as possible in the direction of the oracle update, i.e., toward draining the risky reserve when $\dpi>0$.

Choosing $(x',y')=(x_-,y_-)$ corresponds to no front-run before the oracle update. The following corollary relates this choice to the oracle-refresh jump of Proposition~\ref{prop:stale-lvr}.

\begin{corollary}\label{cor:lvr-no-attack}
Consider an oracle update at an update time $\tau=\tau_k$ of Section~\ref{sec:lvr-stale}, i.e., $\pi_-=\pi_{\tau_{k-1}}$, $\pi_+=\pi_{\tau_k}$, $S=S_{\tau_k}$, and $(x_-,y_-)=(x_{\tau_k^-},y_{\tau_k^-})$, so that the pool is at its pre-update equilibrium $p(r_-,\pi_-)=s$. Then the frictionless value without a front-run is the oracle-refresh jump of Proposition~\ref{prop:stale-lvr}, i.e., $\Pi(x_-,y_-)=\Delta\mathrm{LVR}_\tau$. In particular, $\Pi^*\geq\Delta\mathrm{LVR}_\tau$.
\end{corollary}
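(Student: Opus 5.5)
We identify both quantities as the same difference in marked-to-market pool value, $\Pi(x_-,y_-)=\Delta\mathrm{LVR}_\tau$, and the inequality then follows because $(x_-,y_-)$ is an admissible front-run.

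\textbf{Step 1: the attack value with no front-run.} Taking $(x',y')=(x_-,y_-)$ in Proposition~\ref{prop:MEV-0fee}\eqref{prop:MEV-backrun} gives
$$\Pi(x_-,y_-) = \bigl[Sx_-+y_-\bigr]-\lcal(x_-,y_-,\pi_+)\,\V_{\pi_+}(S).$$
This is admissible since $(x_-,y_-)\in\lcal_-\rcal(\pi_-)$ trivially; the unit-superlevel set scales because $\lcal$ is positively homogeneous.

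\textbf{Step 2: the second term is the post-update pool value.} By Proposition~\ref{prop:MEV-0fee}\eqref{prop:MEV-backrun}, the optimal back-run attains the unique point of reserve ratio $r_+=r^*(s,\pi_+)$ on the refreshed curve through $(x_-,y_-)$. By positive homogeneity, the minimizer of $Sx+y$ over $\lcal(x_-,y_-,\pi_+)\rcal(\pi_+)$ is $\lcal(x_-,y_-,\pi_+)$ times the minimizer over $\rcal(\pi_+)$. Hence the minimal value equals $\lcal(x_-,y_-,\pi_+)\V_{\pi_+}(S)$.

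Now compare with Section~\ref{sec:lvr-stale}. There the post-update reserves $(x_\tau,y_\tau)$ are defined as the point with ratio $r_\tau=r^*(s_\tau,\pi_\tau)$ on the curve $U(\cdot,\cdot,\pi_\tau)=U(x_{\tau^-},y_{\tau^-},\pi_\tau)$, which is exactly this refreshed curve. So this minimizer is $(x_\tau,y_\tau)$, and
$$\lcal(x_-,y_-,\pi_+)\V_{\pi_+}(S)=Sx_\tau+y_\tau=V_\tau.$$

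\textbf{Step 3: the first term is the pre-update pool value.} Because $S$ is continuous at $\tau$, the first term is $V_{\tau^-}$. Therefore
$$\Pi(x_-,y_-)=V_{\tau^-}-V_\tau=-\Delta V_\tau=\Delta\mathrm{LVR}_\tau,$$
using the definition in~\eqref{eq:stale-jump-lvr}. The hypothesis $p(r_-,\pi_-)=s$ is only needed so that $(x_-,y_-)=(x_{\tau^-},y_{\tau^-})$ is the pre-update equilibrium used in Proposition~\ref{prop:stale-lvr}.

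\textbf{Step 4: the inequality.} Since $(x_-,y_-)\in\lcal_-\rcal(\pi_-)$, we have $\Pi^*\geq\Pi(x_-,y_-)=\Delta\mathrm{LVR}_\tau$.

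\textbf{Main obstacle.} The delicate point is the identification in Step 2 of the optimal back-run state with the arbitrage-cleared state. One must check that the conditional trading curve of Section~\ref{sec:lvr-stale}, the level set of $U$, is the efficient boundary of the rescaled reachable set $\lcal(x_-,y_-,\pi_+)\rcal(\pi_+)$. This holds because $\lcal=\exp U$ is constant along the curve and $\lcal$ is positively homogeneous. For an independent check of the integral form in~\eqref{eq:stale-jump-lvr}, differentiate $Sx_k(r)+rx_k(r)$ in $r$. Using $x_k'=-x_k/(r+e^{p})$, this derivative equals $(S-e^{p})x_k'(r)$, and integrating from $r_{\tau^-}$ to $r_\tau$ recovers $-\Delta V_\tau$.
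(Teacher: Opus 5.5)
Your proposal is correct and follows the paper's proof. Both arguments apply Proposition~\ref{prop:MEV-0fee}\eqref{prop:MEV-backrun} at $(x_-,y_-)$ and identify the refreshed curve through $(x_-,y_-)$ with the post-update conditional curve of Section~\ref{sec:lvr-stale}, so the optimal back-run lands at $(x_\tau,y_\tau)$ with value $\lcal(x_-,y_-,\pi_+)\V_{\pi_+}(S)$; both then conclude $\Pi(x_-,y_-)=V_{\tau^-}-V_\tau=\Delta\mathrm{LVR}_\tau$, with $\Pi^*\geq\Pi(x_-,y_-)$ by definition. One small sign slip in your side check: integrating $(S-e^{p})x_k'(r)$ from $r_{\tau^-}$ to $r_\tau$ gives $\Delta V_\tau$, not $-\Delta V_\tau$; it is the integrand of~\eqref{eq:stale-jump-lvr}, which has the opposite sign, that yields $-\Delta V_\tau$.
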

\begin{proof}
The refreshed curve through $(x_-,y_-)$ is the post-update curve of~\eqref{eq:stale-jump-lvr}, which carries the pool value $V(r)=x_k(r)(S+r)$ with $V(r_{\tau^-}) = Sx_-+y_-$ and, by Proposition~\ref{prop:MEV-0fee}\eqref{prop:MEV-backrun}, $V(r_\tau) = \lcal(x_-,y_-,\pi_+)\V_{\pi_+}(S)$. Hence
$$\Delta\mathrm{LVR}_{\tau} = -\Delta V_{\tau} = \bigl[Sx_-+y_-\bigr] - \lcal(x_-,y_-,\pi_+)\,\V_{\pi_+}(S) = \Pi(x_-,y_-)$$
and the result follows as $\Pi^*\geq\Pi(x_-,y_-)$ by definition.
\end{proof}

The inequality of Corollary~\ref{cor:lvr-no-attack} may be strict because the attacker can deliberately move the pool along the stale curve before the update.

\begin{remark}
\label{rem:MEV-attainment}
Both regimes of Proposition~\ref{prop:MEV-0fee}\eqref{prop:MEV-sup} are realized by the constructions of Section~\ref{sec:pmm}. Consider the CPMM baseline $\phi(r)=r$ and a pool at $\pi_-=0$ with $r_-=1$ and $x_-=y_-=1$, an external price $S=1$ (so that the pool value is $Sx_-+y_-=2$), and an oracle refresh to $\pi_+=\frac{1}{2}$.

For the price-tracking family of Example~\ref{ex:PTPMM}, $G(r,\pi) = \frac{\log r}{1+e^\pi}$; therefore, as boundary divergence holds, $D=-\infty$ and $\Pi^*=2$. In this case, the frictionless sandwich extracts the entire pool, but only in the limit of an unbounded trade. In contrast, for the geometric-average family of Example~\ref{ex:GAPMM} with $\lambda=\frac{3}{5}$, $g_\infty(\pi) = -\frac{1}{\lambda}\log\bigl(1+e^{\lambda\pi}\bigr)$ is finite and boundary divergence fails. As such, the supremum is attained by a finite trade. Specifically, the optimal front-run buys the entire risky reserve for $2.175$ units of the num\'eraire, carrying the pool to $(0,3.175)$; this yields $\Pi^*\approx0.512$, i.e., approximately $25.6\%$ of the pool value, as compared to $\Delta\mathrm{LVR}_\tau\approx0.054$ without a front-run.

Notably, this distinction follows directly from the structural properties of these two designs. Boundary divergence is exactly what prevents the pool from being emptied by a finite trade, and the geometric-average design does not satisfy this property.
\end{remark}

The frictionless setting of Proposition~\ref{prop:MEV-0fee} isolates the theoretical value created by the oracle update. We now introduce a proportional fee $\gamma\in[0,1)$ on the asset sold to the pool and assume that fees are held in escrow for liquidity providers rather than reinvested in the trading reserves, as in Uniswap~V3~\cite{uniswapv3}. Let $\Gamma := -\log(1-\gamma) \geq0$. Fees affect the attack through two channels. First, arbitrage confines the quote to the no-trade region $p(r_t,\pi_t) \in [s_t-\Gamma,s_t+\Gamma]$ rather than pinning it exactly to $s_t$. Second, the attacker pays the fee on both legs of the round trip. We refer the interested reader to~\cite{sadeghi2026liquidation} for a study of the protection that fees provide against sandwich attacks in a different setting.

For a positive oracle update $\dpi>0$, Proposition~\ref{prop:MEV-0fee}\eqref{prop:MEV-sup} shows that, in the frictionless setting, the supremal value is approached by front-runs that increase the reserve ratio (i.e., decrease the risky-asset reserve). As such, we restrict the fee-adjusted analysis to
$$\ccal_- := \bigl\{(x,y)\in\lcal_-\rcal(\pi_-) \mid x\leq x_- \bigr\}.$$
As $U(\cdot,\cdot,\pi_-)$ is strictly increasing in each argument, $x\leq x_-$ together with $U(x,y,\pi_-)\geq\log\lcal_-$ forces $y\geq y_-$ on $\ccal_-$. The front-run therefore buys the risky asset from the pool and the back-run sells it back, so that the fee is paid in the num\'eraire on the front-run and in the risky asset on the back-run. The case $\dpi<0$ is obtained by reversing the trade directions throughout. The following lemma demonstrates that transaction fees can eliminate the incremental value of front-running an oracle update.

\begin{lemma}\label{lemma:MEV+fees}
Assume $\dpi>0$ and let the pre-update pool satisfy $p(r_-,\pi_-) \in [s-\Gamma,s+\Gamma]$. For $(x',y')\in\ccal_-$, define
\begin{align*}
\Pi_\gamma(x',y') := \sup \Biggl\{\begin{array}{l}
S\Bigl[(x_- - x') + \frac{x' - x_+}{1-\gamma}\Bigr] \\
\qquad + \Bigl[\frac{y_- - y'}{1-\gamma} + (y' - y_+)\Bigr]
\end{array} \Biggm| (x_+,y_+) \in \lcal(x',y',\pi_+)\rcal(\pi_+),\; x_+ \geq x'\Biggr\}
\end{align*}
and let $\Pi_\gamma^*:= \sup_{(x',y')\in\ccal_-} \Pi_\gamma(x',y')$. Suppose that the oracle-induced quote displacement satisfies
\begin{align}\label{eq:fee-quote-shift}
p(r,\pi_+)-p(r,\pi_-) \leq \Gamma, \qquad r\geq r_-.
\end{align}
Then $\Pi_\gamma^*=\Pi_\gamma(x_-,y_-)$, i.e., within $\ccal_-$, front-running the oracle update creates no additional value beyond the post-update arbitrage opportunity. If, additionally, $p(r_-,\pi_+) \leq s+\Gamma$, then $\Pi_\gamma^* = 0$.
\end{lemma}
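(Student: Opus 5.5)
The plan is to reduce the two-stage optimization to a one-parameter family of front-runs along the stale curve, and then show that the fee-adjusted round-trip value is non-increasing in the size of the front-run. On $\ccal_-$, a front-run that carries the pool onto the efficient boundary is weakly dominant: moving into the interior of $\lcal_-\rcal(\pi_-)$ only raises the attacker's cost and, by monotonicity of $U$, raises the invariant liquidity that the refreshed curve will carry. So I parametrize front-runs by their reserve ratio $r'\in[r_-,\infty)$ on the stale curve, with $x'(r')$ satisfying $x'(r')'=-x'/(r'+e^{p(r',\pi_-)})$ as in the derivation preceding Proposition~\ref{prop:stale-lvr}, and $dy'=-e^{p(r',\pi_-)}dx'$.

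For fixed $(x',y')$ I would first solve the back-run, in the spirit of Proposition~\ref{prop:MEV-0fee}\eqref{prop:MEV-backrun}. The constraint $x_+\geq x'$ means the attacker sells risky to the pool. Each unit delivered costs $1/(1-\gamma)$ units of risky and receives the marginal quote $e^{p(r,\pi_+)}$. The optimal back-run therefore moves along the refreshed curve through $(x',y')$ until $(1-\gamma)e^{p(r,\pi_+)}=S$, i.e.\ $p(r,\pi_+)=s+\Gamma$. If the starting quote already satisfies $p(r',\pi_+)\leq s+\Gamma$, the attacker does not trade ($x_+=x'$). Since the refreshed curve is a scaling by $\lcal(x',y',\pi_+)$ of a fixed curve, the back-run proceeds $B(x',y')$ depend on $(x',y')$ only through the starting ratio $r'$ and the scale $\lcal(x',y',\pi_+)$. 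This gives an explicit expression for $\Pi_\gamma$ as a function of $r'$.

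The core step, and the one I expect to be the main obstacle, is to show that $\frac{d}{dr'}\Pi_\gamma(x'(r'),y'(r'))\leq 0$ on $[r_-,\infty)$. I would differentiate using the envelope theorem for the inner supremum, so only the explicit dependence on $(x',y')$ matters. Extending the front-run by $-dx'>0$ units of risky costs the attacker $e^{p(r',\pi_-)}(-dx')/(1-\gamma)$ in num\'eraire. The benefit enters in two ways: the extra risky can be sold back, and the refreshed liquidity changes. For the latter, $d\log\lcal(x',y',\pi_+)=\partial_yU(\cdot,\pi_+)\,\bigl(e^{p(r',\pi_+)}-e^{p(r',\pi_-)}\bigr)dx'$, using the marginal-price identity of Proposition~\ref{prop:CI}; this change is $\leq0$ because $p_\pi\geq0$ and $dx'<0$. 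Combining these, the marginal back-run value of the extra unit is at most $(1-\gamma)e^{p(r',\pi_+)}$, the best it could be sold back for at the refreshed quote. Hence the derivative is bounded by $(-dx')\bigl[(1-\gamma)e^{p(r',\pi_+)}-e^{p(r',\pi_-)}/(1-\gamma)\bigr]$. This is $\leq0$ whenever $p(r',\pi_+)-p(r',\pi_-)\leq 2\Gamma$, and in particular under \eqref{eq:fee-quote-shift}, which gives this at every $r'\geq r_-$. I would handle the corner where the back-run constraint $x_+\geq x'$ binds separately: there the attacker merely unwinds nothing, so the front-run cost is pure loss. Integrating in $r'$ then gives $\Pi_\gamma(x',y')\leq\Pi_\gamma(x_-,y_-)$ on $\ccal_-$, and hence $\Pi^*_\gamma=\Pi_\gamma(x_-,y_-)$. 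Getting the bookkeeping of the liquidity-change term exactly right, so that only the stated single-$\Gamma$ condition is used, is where I expect the care to be needed.

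For the final claim, take $(x',y')=(x_-,y_-)$. Under $p(r_-,\pi_+)\leq s+\Gamma$, any sale of risky along the refreshed curve occurs at quotes $e^{p(r,\pi_+)}\leq e^{p(r_-,\pi_+)}\leq S/(1-\gamma)$, since $p(\cdot,\pi_+)$ is increasing and selling lowers $r$. Each unit therefore costs $S/(1-\gamma)$ in value and returns at most that amount, so the back-run value is $\leq0$. It equals $0$ at $x_+=x_-$, giving $\Pi_\gamma^*=0$.
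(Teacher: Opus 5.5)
Your architecture is the paper's: restrict to the stale curve, take the back-run to be either the null trade or a move to $r_\gamma:=r^*(s+\Gamma,\pi_+)$, then show the value is non-increasing in the front-run ratio $r'$. However, the central inequality is false, and the stronger result it would give (displacement $\le 2\Gamma$ suffices) is also false.

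Write $P_\pm:=e^{p(r',\pi_\pm)}$. Let $\xi$ be the stale-curve risky reserve at ratio $r'$, and let $(\chi_\gamma,r_\gamma\chi_\gamma)$ be the back-run endpoint. When the back-run is active ($P_+>S/(1-\gamma)$), the exact marginal value of extending the front-run is
\[
\frac{d\Pi_\gamma}{d(-x')}=\beta\,(P_+-P_-)-\frac{\gamma}{1-\gamma}\,(S+P_-),
\qquad
\beta:=\frac{\chi_\gamma\bigl(r_\gamma+\frac{S}{1-\gamma}\bigr)}{\xi\,(r'+P_+)}\in(0,1].
\]
Your bound $(1-\gamma)P_+-\frac{P_-}{1-\gamma}$ is therefore equivalent to $\gamma P_+-\frac{\gamma}{1-\gamma}S\le(1-\beta)(P_+-P_-)$. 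For $\gamma>0$ the left side is strictly positive whenever the back-run is active. The right side is small when $\beta$ is near $1$, which happens when most of the pool's value sits in the num\'eraire.

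The heuristic goes wrong in the fee bookkeeping. Because the back-run endpoint is pinned at $r_\gamma$, the marginal unit removed by the front-run is (at $\beta=1$) returned to the pool in full. The attacker therefore pays $\frac{1}{1-\gamma}$ units of the risky asset, a fee worth $\frac{\gamma}{1-\gamma}S$, and receives $P_+$. That exceeds your $(1-\gamma)P_+$ by $\gamma\bigl(P_+-\frac{S}{1-\gamma}\bigr)>0$.

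Here is a concrete failure of the $2\Gamma$ claim. Take the price-tracking rule $p(r,\pi)=\pi+\log r$, whose conditional invariant is $U=\frac{e^\pi}{1+e^\pi}\log x+\frac{1}{1+e^\pi}\log y$. Set $S=1$, $\Gamma=0.1$, $\pi_-=-3$, $\pi_+=-2.81$, $x_-=1$, and $r_-=e^{3.1}$. Then $p(r_-,\pi_-)=s+\Gamma$ and the displacement is $0.19<2\Gamma$ at every $r$. No front-run gives $\Pi_\gamma\approx0.0213$, whereas front-running along the stale curve to $r'=e^{3.15}$ gives $\Pi_\gamma\approx0.0218$.

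The only control actually available on the liquidity channel is $\beta\le1$. This follows because $q\mapsto\widehat x(q)\bigl(q+e^{p(q,\pi_+)}\bigr)$ is non-decreasing along the refreshed curve through the front-run state, with derivative $\widehat x(q)\,\partial_q e^{p(q,\pi_+)}\ge0$, and $r_\gamma<r'$ in the active regime. It yields $\frac{d\Pi_\gamma}{d(-x')}\le P_+-e^{\Gamma}P_--\frac{\gamma}{1-\gamma}S$, which is $\le-\frac{\gamma}{1-\gamma}S$ precisely under the single-$\Gamma$ condition \eqref{eq:fee-quote-shift}. This is the paper's argument, and it is why $\Gamma$ rather than $2\Gamma$ appears.

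Two smaller points. In the null-back-run regime the value is non-increasing only because $e^{p(r',\pi_-)}\ge(1-\gamma)S$, i.e., $p(r',\pi_-)\ge p(r_-,\pi_-)\ge s-\Gamma$. That is where the pre-update band hypothesis enters, and your ``pure loss'' remark skips it. Also, the two regimes must be glued by continuity of $\Pi_\gamma$ at $p(r',\pi_+)=s+\Gamma$. Your argument for $\Pi_\gamma^*=0$ under $p(r_-,\pi_+)\le s+\Gamma$ is fine once the first claim is established.
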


\begin{proof}
See Appendix~\ref{app:proof-MEV-fees}.
\end{proof}

That is, front-running an oracle update creates no incremental value whenever the oracle-induced quote displacement is no larger than the fee band, i.e., whenever~\eqref{eq:fee-quote-shift} holds. If the refreshed quote also remains inside the post-update no-trade region, then the entire post-update arbitrage opportunity disappears.

\begin{remark}
\label{rem:MEV-mitigation}
By non-expansiveness of the oracle response, $p(r,\pi_+)-p(r,\pi_-) \leq |\dpi|$ so that the simpler condition $|\dpi| \leq \Gamma$ is sufficient for~\eqref{eq:fee-quote-shift}. For the GA, the quote displacement is $p_{\mathrm{GA}}(r,\pi_+) - p_{\mathrm{GA}}(r,\pi_-) = \lambda\dpi$ and the corresponding protection condition becomes $\lambda|\dpi| \leq \Gamma$.

Consider a continuous candidate signal that is updated immediately at its first threshold crossing, i.e., with no publication delay or overshoot. For such a signal, choosing the deviation threshold of~\eqref{eq:update-policy} to satisfy $\varepsilon \leq \Gamma$ guarantees that deviation-triggered updates satisfy $|\dpi|\leq\Gamma$. For the GA, the analogous sufficient condition is $\lambda\varepsilon \leq \Gamma$. However, heartbeat updates, delayed publication, discrete sampling, or jumps may produce larger updates and so require separate controls.

Additional mechanisms can reduce the ability of an attacker to atomically bracket an oracle update:
\begin{enumerate}
\item \emph{Atomic pull-oracle consumption}: A fresh, verified oracle report can be supplied and consumed within the same transaction; this eliminates the separately observable on-chain update that could otherwise be sandwiched. The protection provided depends on the freshness of the report and on transaction-ordering guarantees.

\item \emph{Priority execution}: Placing the oracle update before ordinary swaps in the same block prevents a same-block front-run against the stale curve, though cross-block positioning may remain possible.

\item \emph{Delayed or committed activation}: Activating a committed oracle value under controlled ordering can prevent an atomic sandwich of the update. This protection may introduce additional latency and, therefore, trades off the mitigation of maximal extractable value (MEV) against stale-price risk.
\end{enumerate}
\end{remark}

\section{Case Studies}\label{sec:cases}

The preceding sections develop two complementary performance measures for OP-AMMs: local capital efficiency, measured by the equivalent CPMM liquidity under equal pool value, and adverse-selection losses, measured by LVR. Within this section, we use these measures to quantify when and how the CPMM-based GA improves upon the benchmark CPMM. Section~\ref{sec:frontier} provides an analytical comparison of these measures under a noisy oracle, while Section~\ref{sec:sp500} conducts a counterfactual backtest using historical S\&P~500 market data.

\subsection{Relative Market Depth and Arbitrage Costs for Noisy Oracles}
\label{sec:frontier}

Consider the CPMM-based GA of Example~\ref{ex:GAPMM}. We will vary the oracle sensitivity $\lambda\in[0,1)$, where $\lambda=0$ recovers the information-agnostic CPMM and $\lambda\nearrow 1$ approaches a fully oracle-dependent design. Throughout this section, we consider the noisy-oracle setting of Corollary~\ref{cor:noisy} with constant volatilities and uncorrelated innovations $(\rho^\eta=0)$. Let $\alpha:=\sigma^\eta/\sigma$ denote the relative oracle noise. We parametrize the initial oracle misalignment by
$$|\log S_0-\pi_0| = k\frac{\sigma^\eta}{\sqrt{2\theta}},$$
where $k\geq0$ measures the misalignment in units of the stationary standard deviation of the oracle error. As the performance measures below are symmetric in the sign of $\log S_0-\pi_0$, we take $\log S_0-\pi_0\geq0$ without loss of generality. Further, we normalize $\sigma=1$ and $\theta=1/2$ so that $\log S_0-\pi_0=k\alpha$. In order to evaluate this OP-CFMM, we will consider two metrics over varying oracle sensitivities $\lambda \in [0,1)$:
\begin{itemize}
\item To measure capital efficiency, we consider the relative market depth at the realized external price (see Example~\ref{ex:GA-liquidity}):
$$\frac{L_{\mathrm{GA}}}{L} = \frac{1}{1-\lambda} \frac{4m_\lambda}{(1+m_\lambda)^2}, \qquad
m_\lambda = \exp\left( \frac{\lambda}{1-\lambda}k\alpha \right).$$
Larger values correspond to smaller local price impacts for traders.

\item To measure continuous adverse-selection losses, we consider the relative instantaneous LVR rate under the same pool-value normalization:
$$\frac{\ell_{\mathrm{GA}}}{\ell} = \bigl( (1-\lambda)^2+\lambda^2\alpha^2 \bigr) \frac{L_{\mathrm{GA}}}{L}.$$
Smaller values correspond to lower continuous LVR relative to the benchmark CPMM.
\end{itemize}
We take these relative values so as to directly compare the OP-CFMM with the benchmark CPMM. If the relative market depth is greater than $1$ and the relative arbitrage loss is less than $1$, then the OP-CFMM strictly dominates the information-agnostic AMM in these two metrics.

For both metrics, we first characterize their behavior at $\lambda=0$ and as $\lambda\nearrow1$. By construction, both the relative market depth and the relative arbitrage loss equal $1$ at $\lambda=0$, i.e., the GA coincides with the benchmark exactly. Further, taking the derivatives of these metrics at $\lambda=0$, we find that the relative market depth is initially increasing with $\left. \partial_\lambda \frac{L_{\mathrm{GA}}}{L} \right|_{\lambda=0} = 1$ while the relative arbitrage loss is initially decreasing with $\left. \partial_\lambda \frac{\ell_{\mathrm{GA}}}{\ell} \right|_{\lambda=0} = -1$. Therefore, there exists some sufficiently small $\lambda_*\in(0,1)$ such that the OP-CFMM strictly dominates the CPMM in both metrics simultaneously for every $\lambda\in(0,\lambda_*)$.

However, as $\lambda\nearrow1$, we need to distinguish between three cases. First, if $\alpha=0$, then the oracle is noise-free; the relative market depth diverges to $\infty$ while the relative arbitrage loss converges to $0$. That is, without oracle noise, increasing the oracle sensitivity improves performance in both metrics monotonically. Second, if $k=0$ but $\alpha>0$, then the oracle is currently aligned with the market but remains noisy. In this case, both the relative market depth and the relative arbitrage loss diverge as $\lambda\nearrow1$. Finally, if $k\alpha>0$, then both metrics converge to $0$ as $\lambda\nearrow1$. The vanishing relative arbitrage loss does not represent dominance since the relative market depth at the external price also collapses. Consequently, there exists some $\lambda^*\in(0,1)$ such that the OP-CFMM does not dominate the CPMM for any $\lambda\in(\lambda^*,1)$.

These theoretical behaviors are demonstrated numerically in Figures~\ref{fig:theoretical-k=0.5}--\ref{fig:theoretical-k=2.0}, which plot the relative market depth against the relative arbitrage loss for varying combinations of oracle error $(k)$ and noise $(\alpha)$. Figure~\ref{fig:theoretical-lambda} explicitly maps the oracle sensitivities for which the GA weakly dominates the CPMM, i.e., where the GA simultaneously provides greater capital efficiency $(L_{\mathrm{GA}}/L\geq1)$ and incurs lower adverse selection $(\ell_{\mathrm{GA}}/\ell\leq1)$. Reading this plot from $\lambda=0$ upward, the initial boundary crossing indicates a shift from dominance to non-dominance, though not necessarily strict underperformance relative to the CPMM; subsequent re-entries into the dominance region highlight the non-trivial nature of this parameter space. As such, the oracle sensitivity must be carefully calibrated, e.g., through backtesting as in Section~\ref{sec:sp500}.

\begin{figure}[t]
\centering
\begin{subfigure}[t]{0.45\textwidth}
\centering
\includegraphics[width=\textwidth]{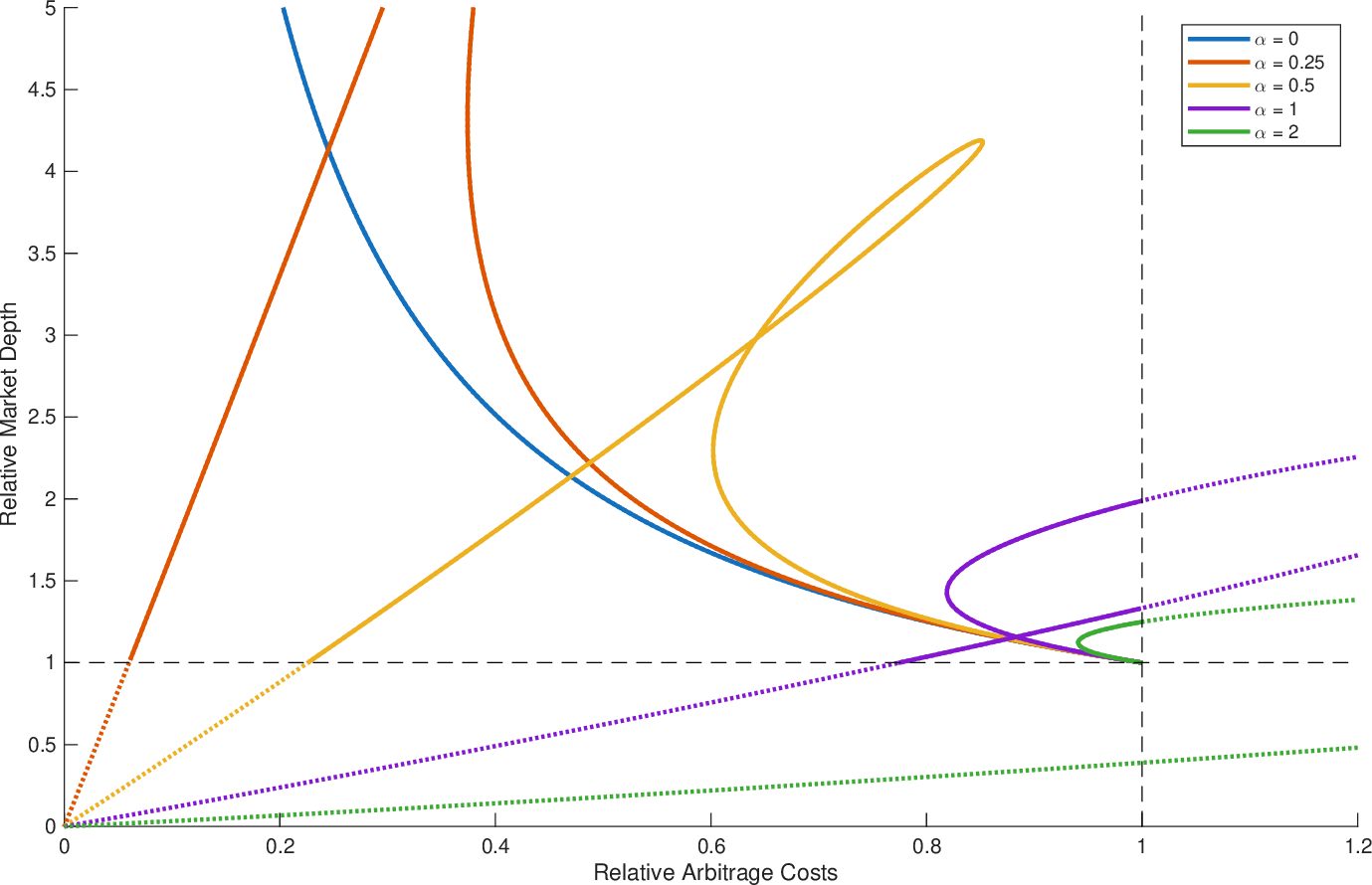}
\caption{$k = 0.5$ standard deviations of oracle error.}
\label{fig:theoretical-k=0.5}
\end{subfigure}
~~~~~
\begin{subfigure}[t]{0.45\textwidth}
\centering
\includegraphics[width=\textwidth]{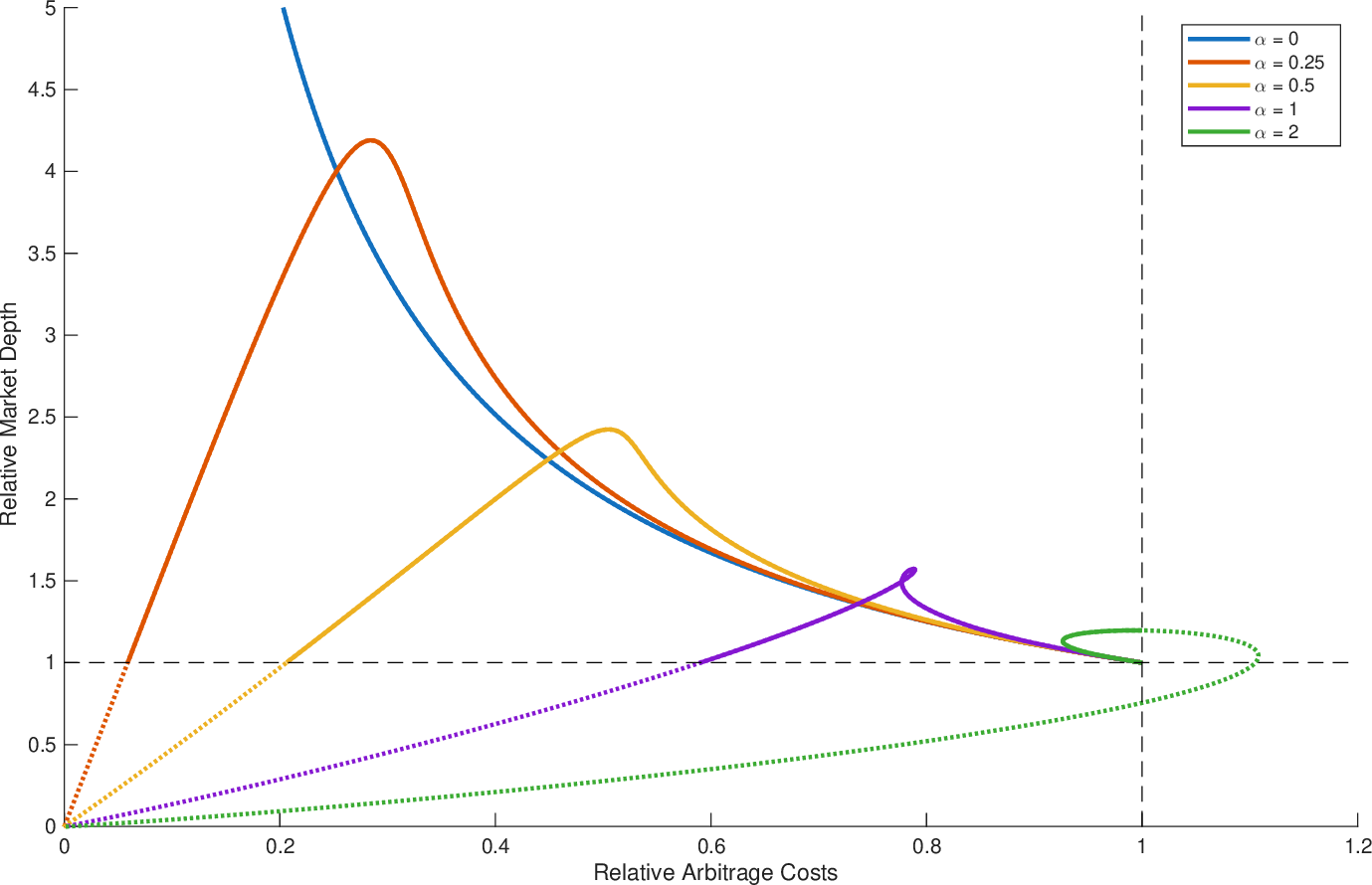}
\caption{$k = 1.0$ standard deviations of oracle error.}
\label{fig:theoretical-k=1.0}
\end{subfigure}
~~~~~
\begin{subfigure}[t]{0.45\textwidth}
\centering
\includegraphics[width=\textwidth]{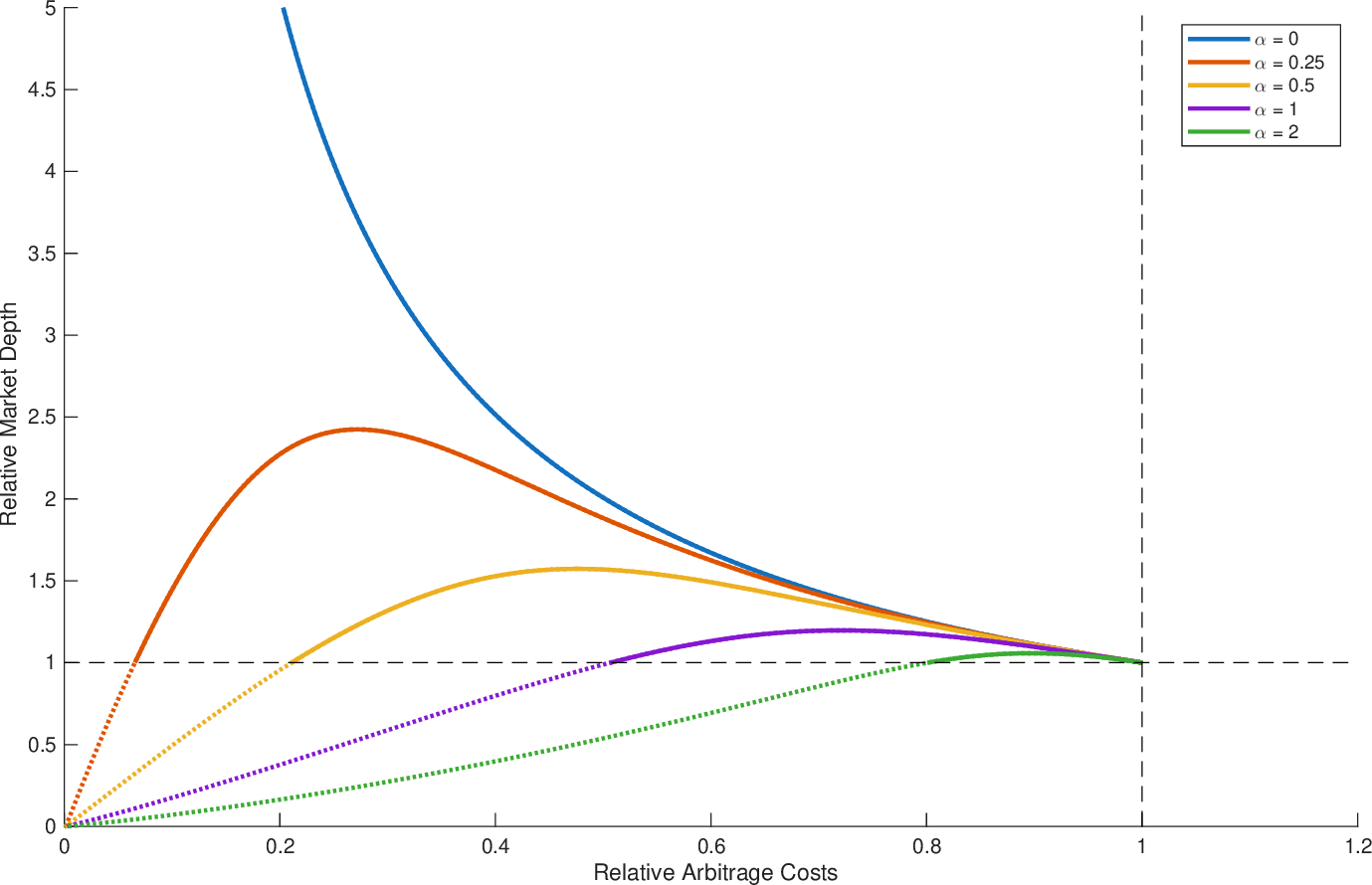}
\caption{$k = 2.0$ standard deviations of oracle error.}
\label{fig:theoretical-k=2.0}
\end{subfigure}
~~~~
\begin{subfigure}[t]{0.45\textwidth}
\centering
\includegraphics[width=\textwidth]{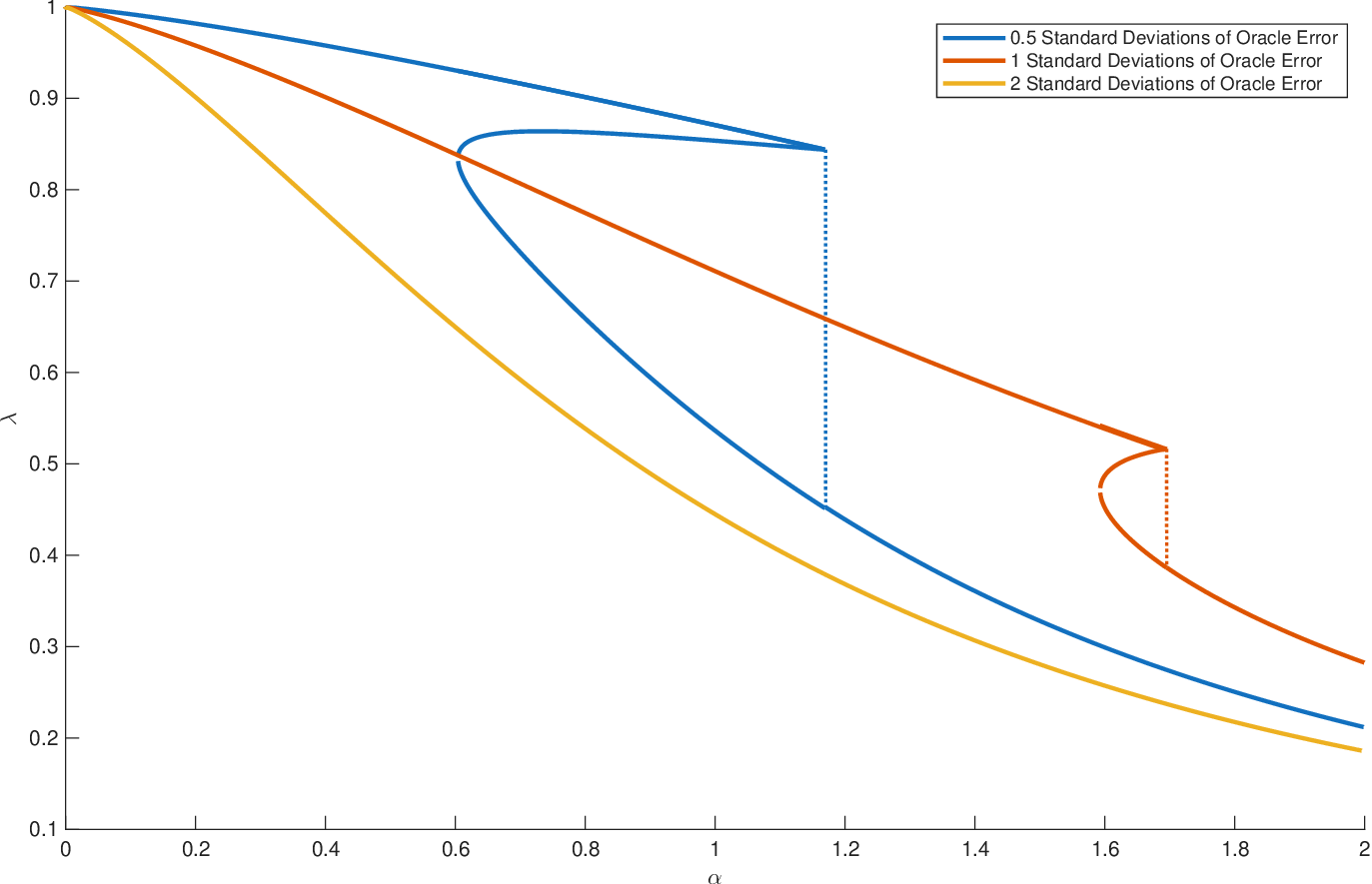}
\caption{Oracle sensitivities ($\lambda$) dominating the CPMM.}
\label{fig:theoretical-lambda}
\end{subfigure}
\caption{Section~\ref{sec:frontier}: Trade-off between market depth and losses to arbitrageurs for the GA compared to the benchmark CPMM.}
\label{fig:theoretical}
\end{figure}

\subsection{S\&P~500 Case Study}\label{sec:sp500}

Consider now the performance of the GA in a counterfactual backtest. For this purpose, we simulate a tokenized SPY market using one-second NBBO data for this ETF, which tracks the S\&P~500 index, over the 2023 calendar year. Though arbitrageurs observe the contemporaneous NBBO at each second, the pool receives a (potentially delayed) oracle signal based on the market log mid-price $s_t = \log(\frac{1}{2}[B_t + A_t])$, where $B_t$ and $A_t$ denote the external bid and ask prices at time $t$.

For a posting delay $\delta\geq0$, define the candidate oracle signal $\widetilde{s}_t := s_{t-\delta}$, where $\delta$ is the delay between observing the market price and posting the oracle update on-chain. Following the stale-oracle model of Section~\ref{sec:lvr-stale}, oracle updates occur according to the update policy~\eqref{eq:update-policy} with heartbeat $h>0$ and deviation threshold $\varepsilon\geq0$, where the infimum is taken over the one-second observation grid. We consider the following three oracle regimes:
\begin{itemize}
\item \emph{L1 oracle}: To represent a high-latency blockchain environment, we take $h=1$ hour, $\varepsilon=25$ bps, and $\delta=60$ seconds.

\item \emph{L2 oracle}: To represent lower-latency layer-2 infrastructure, we take $h=1$ hour, $\varepsilon=5$ bps, and $\delta=2$ seconds.

\item \emph{Pull oracle}: Recent advances by, e.g., Chainlink have allowed for oracles that pull data directly from the underlying market at the time of the function call. To model this, we set $h = 1$ second (i.e., the minimal data resolution) and $\varepsilon = 0$; however, we still impose a $\delta = 1$ second delay so as to introduce some oracle imperfection.
\end{itemize}
The oracle latency $\delta$ imposed in the L1 and L2 regimes can be interpreted, in part, as arising from the delayed activation discussed within Remark~\ref{rem:MEV-mitigation}. Within each regime, we evaluate the CPMM-based GA parametrized by the oracle sensitivity $\lambda \in [0,1)$ and a proportional fee $\gamma \in [0,1)$ as introduced in Section~\ref{sec:lvr-oev}.\footnote{As in Section~\ref{sec:lvr-oev}, the fee is assessed as a fraction of the shares of the asset being sold to the pool. Further, as in Uniswap~V3, these fees are held in escrow for the liquidity providers and are \emph{not} reinvested in the pool.} For these purposes, we create a grid of oracle sensitivities $\lambda \in \{0\%,1\%,\ldots,99\%\}$ and fees $\gamma\in\{0\,\mathrm{bps},1\,\mathrm{bps},\ldots,50\,\mathrm{bps}\}$.

In order to isolate the resilience of the GA to informed order flow, we impose a strictly adversarial trading environment in which all trading volume originates from arbitrageurs exploiting stale quotes. At each second, define the fee-free pre-trade mid-quote of the pool by
$$P_t^{\mathrm{mid}} := \exp\bigl( \lambda\pi_t+(1-\lambda)\log(r_{t-1}) \bigr),$$
where $r_{t-1}$ is the reserve ratio carried into second $t$. The fee-adjusted ask and bid prices are
$$P_t^{\mathrm{ask}} = \frac{P_t^{\mathrm{mid}}}{1-\gamma}, \qquad
P_t^{\mathrm{bid}} = (1-\gamma)P_t^{\mathrm{mid}}.$$
Arbitrageurs trade whenever $B_t>P_t^{\mathrm{ask}}$ or $P_t^{\mathrm{bid}}>A_t$. At each second, the arbitrageur executes the minimal trade required to bring the relevant GA quote into line with the corresponding NBBO boundary, thereby extracting the maximal guaranteed profit available from the pool. For simplicity, we assume that the external market is infinitely deep at the NBBO and impose no additional slippage in the external market.

This simulation is run independently across all 250 trading days in 2023. To initialize each day, we assume that both the oracle and the internal reserve-based price exactly match the mid-price at market open; all performance statistics are aggregated at market close. We reset the simulation daily so as to avoid the ``overnight oracle problem'' in which the oracle would jump severely at market open due to trading activity outside of the collected dataset. We wish to note that, as a consequence, this experiment does not measure overnight or closed-market oracle risk.

From these independent daily simulations, we aggregate three primary performance metrics to evaluate the GA designs:
\begin{itemize}
\item the \emph{tracking error} (TE), defined as the root mean-square error (RMSE) between the quoted log-price of the GA and the true NBBO log mid-price;
\item the \emph{realized loss-versus-rebalancing}, measuring the difference between the value of a continuous rebalancing strategy (executed via market orders at the NBBO bid and ask prices) and the realized pool value net of collected fees; and
\item the average \emph{relative market depth}, quantified by the ratio $L_{\mathrm{GA}}/L$ of Section~\ref{sec:frontier} between the equivalent CPMM liquidity parameter required to match the local trading-curve curvature of the GA at the true NBBO mid-price and that of the equally capitalized CPMM.
\end{itemize}
Figure~\ref{fig:contour} displays the dependence of the TE and realized LVR on the GA parameters $(\lambda,\gamma)$ for each oracle regime. As expected, the TE generally increases with fees and decreases with the oracle sensitivity; though visibly worse for the L1 oracle at high values of $\lambda$, this metric is broadly stable across the oracle regimes. In contrast, the realized LVR exhibits a distinct phase shift: the L1 oracle requires defensive parameters, i.e., high fees $\gamma$ and low oracle sensitivity $\lambda$, to mitigate the losses to arbitrageurs, while the L2 and Pull oracles demonstrate that sufficiently reliable oracles can effectively reduce these losses. Finally, the empirical market depth is almost entirely determined by the $(1-\lambda)^{-1}$ multiplier, with only small variations due to the oracle delays and imposed fees. In fact, the average relative market depth falls below this multiplier by at most 0.22\% under the L1 oracle and by at most 0.012\% under the L2 and Pull oracles.

\begin{figure}[t]
\centering
\begin{subfigure}[t]{0.3\textwidth}
\centering
\includegraphics[width=\textwidth]{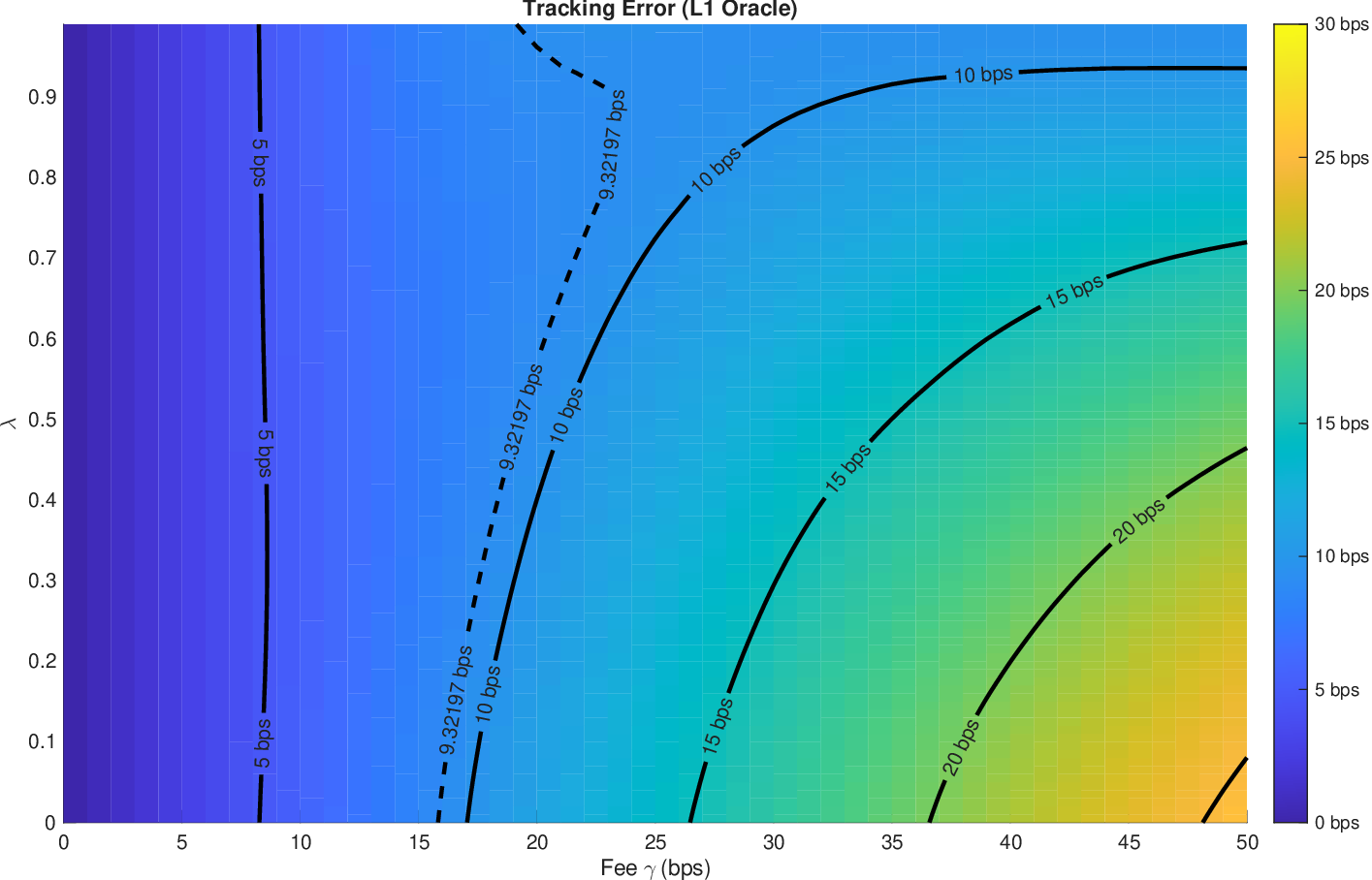}
\caption{L1 oracle, average daily TE}
\label{fig:L1-E1}
\end{subfigure}
~~~~
\begin{subfigure}[t]{0.3\textwidth}
\centering
\includegraphics[width=\textwidth]{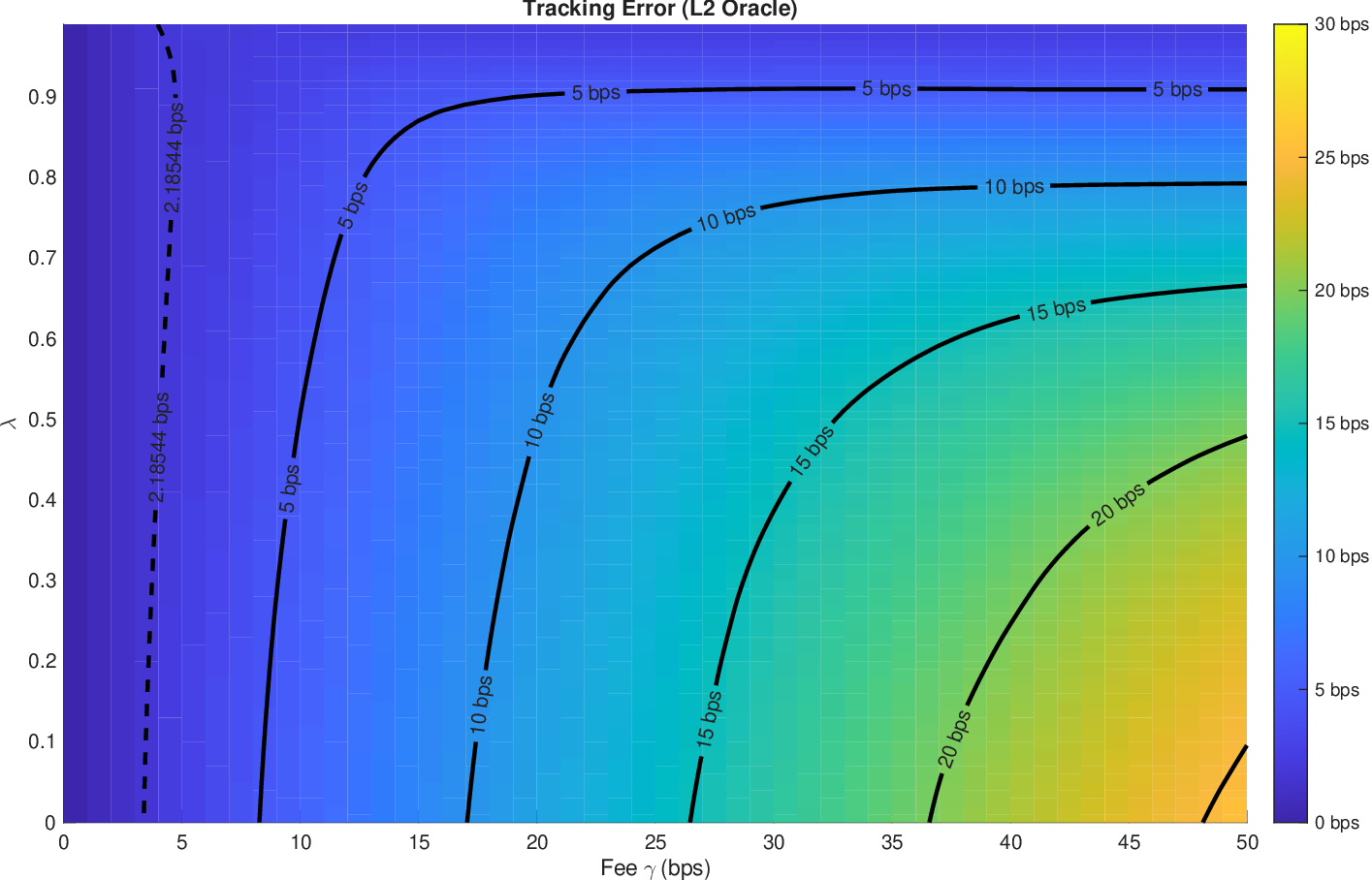}
\caption{L2 oracle, average daily TE}
\label{fig:L2-E1}
\end{subfigure}
~~~~
\begin{subfigure}[t]{0.3\textwidth}
\centering
\includegraphics[width=\textwidth]{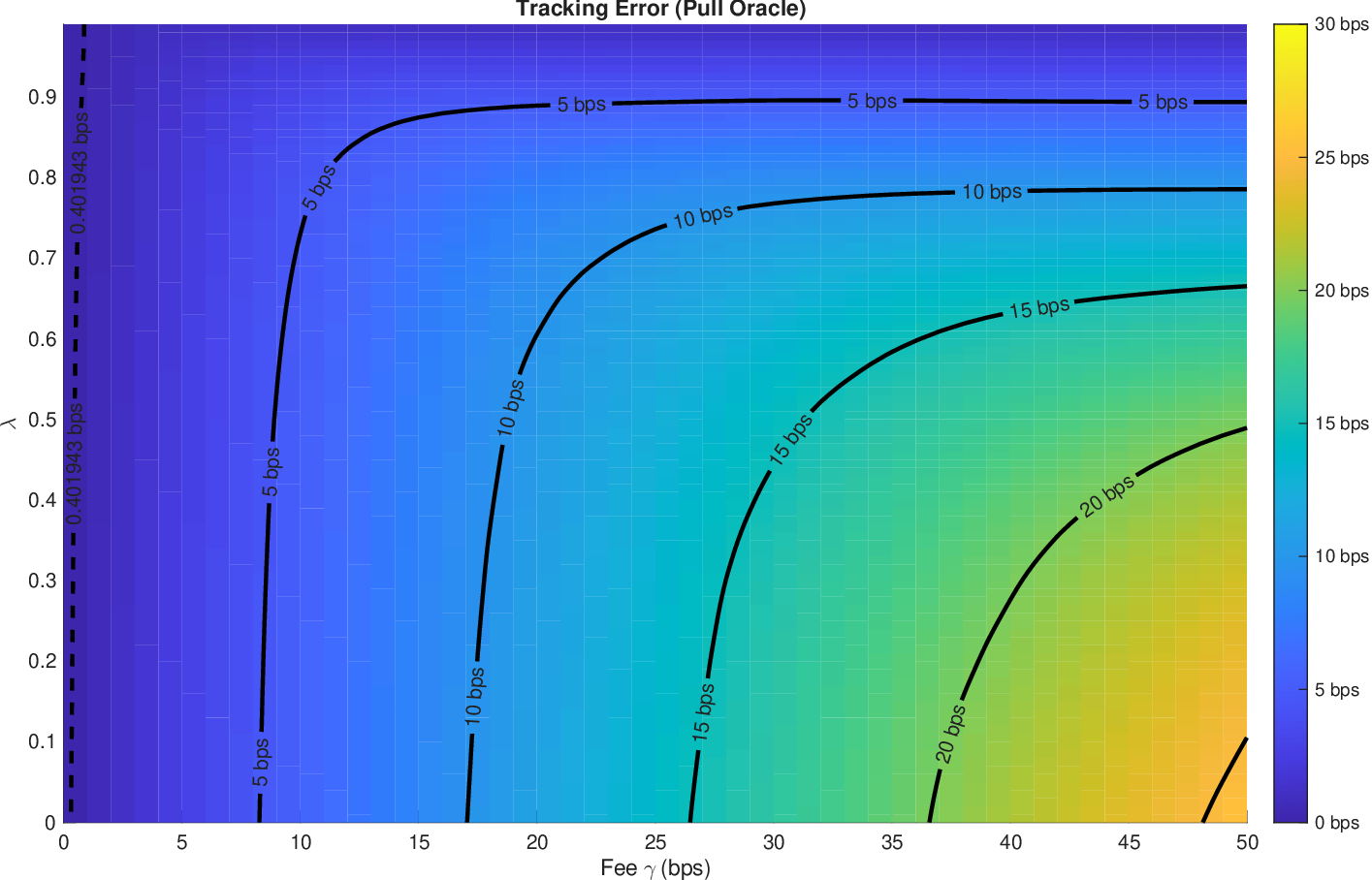}
\caption{Pull oracle, average daily TE}
\label{fig:Pull-E1}
\end{subfigure}
~~~~
\begin{subfigure}[t]{0.3\textwidth}
\centering
\includegraphics[width=\textwidth]{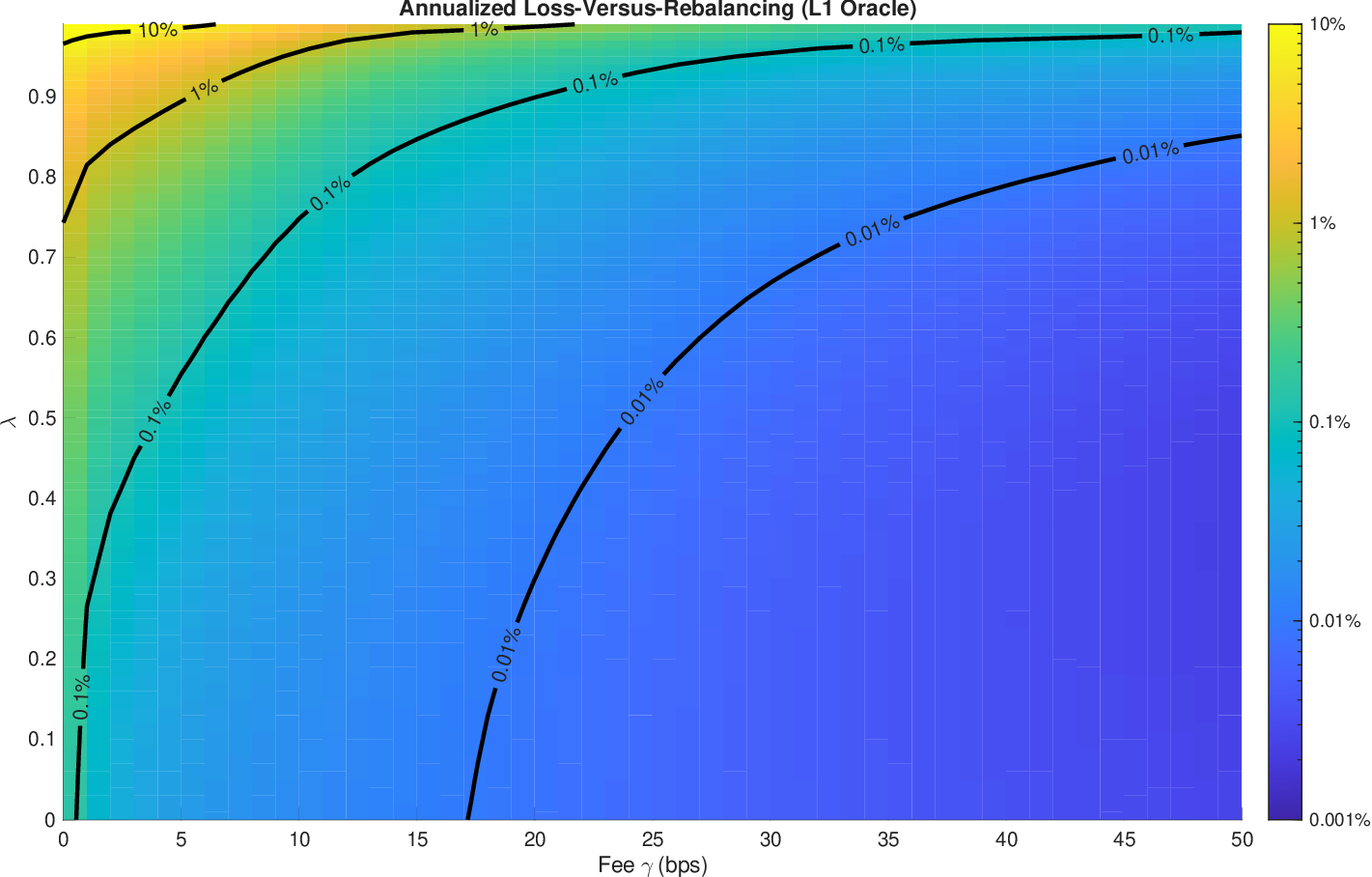}
\caption{L1 oracle, annualized LVR}
\label{fig:L1-LVR}
\end{subfigure}
~~~~
\begin{subfigure}[t]{0.3\textwidth}
\centering
\includegraphics[width=\textwidth]{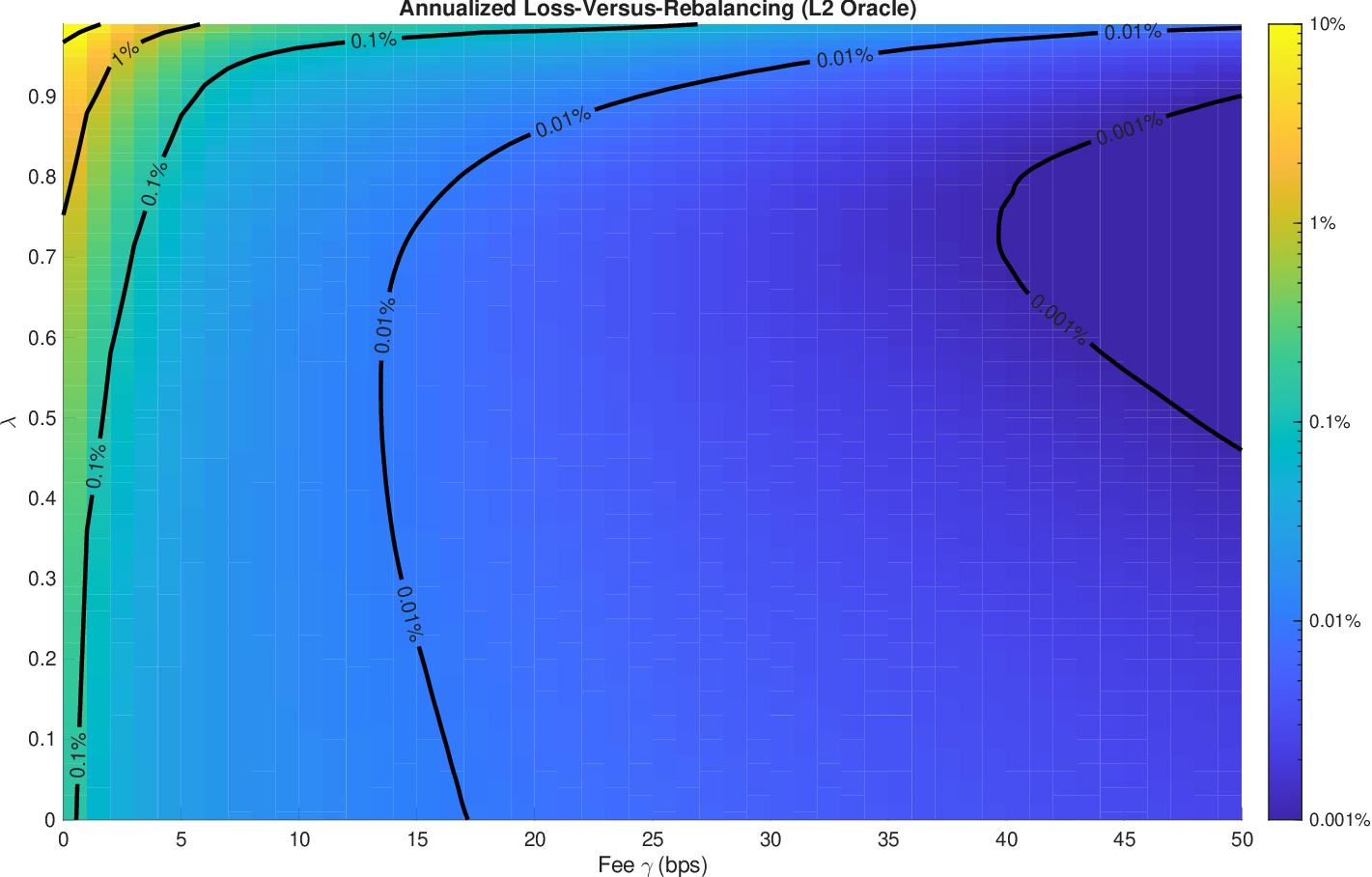}
\caption{L2 oracle, annualized LVR}
\label{fig:L2-LVR}
\end{subfigure}
~~~~
\begin{subfigure}[t]{0.3\textwidth}
\centering
\includegraphics[width=\textwidth]{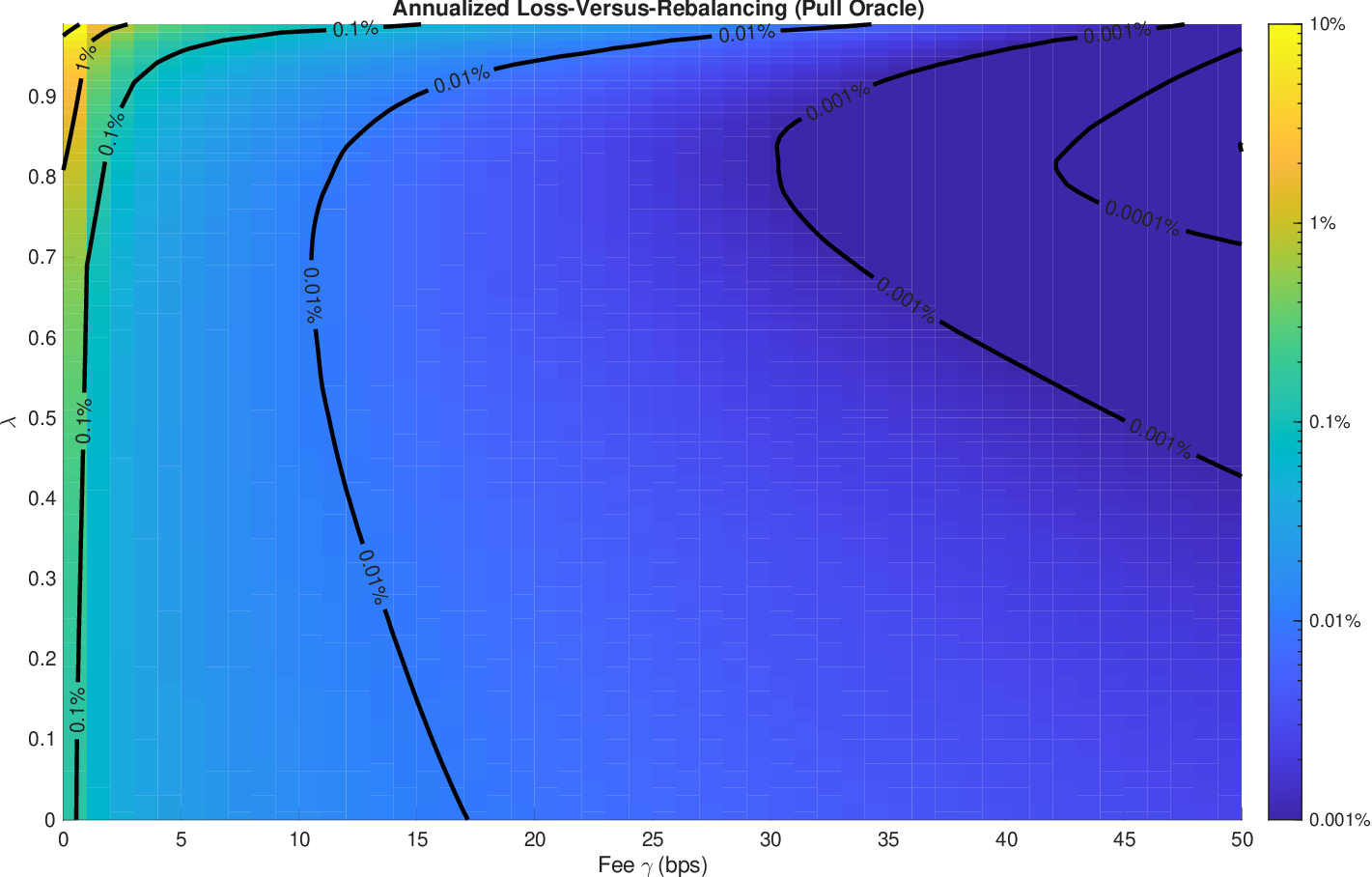}
\caption{Pull oracle, annualized LVR}
\label{fig:Pull-LVR}
\end{subfigure}
\caption{Section~\ref{sec:sp500}: Performance metrics across oracle sensitivities $\lambda\in\{0,0.01,\ldots,0.99\}$ and fees $\gamma\in\{0,1,\ldots,50\}$ bps.}
\label{fig:contour}
\end{figure}

We now focus on the empirical efficient frontiers displayed in Figure~\ref{fig:efficient-frontier} so as to translate these trends into design guidance. First, given that the TE does not exceed some threshold, Figure~\ref{fig:LVR-TE} plots the minimal average realized LVR attained across the parameter space of oracle sensitivities $\lambda$ and fees $\gamma$. Notably, by accepting a modest TE of approximately 7.5 bps, the Pull oracle regime attains negligible LVR of roughly $0.001$ bps \emph{annualized}, i.e., a liquidity provider with a \$1,000,000 position would lose just \$0.10 to arbitrage over the course of a full year. As shown in Figure~\ref{fig:LVR-TE-Optimal-Lambda}, this performance is achieved at a high oracle sensitivity $(\lambda\approx0.84)$. In contrast, the information-agnostic CPMM ($\lambda = 0$) remains optimal under the L1 oracle until the permitted TE exceeds 11 bps. At all but a few points in the L1 oracle, whenever the selected oracle sensitivity satisfies $\lambda^*>0$, the fee reaches the upper boundary of $50$ bps of the tested grid; as the simulation contains no liquidity-motivated demand, increasing the fee carries no penalty in trading volume and this boundary solution should not be read as an equilibrium fee recommendation.

Second, in Figure~\ref{fig:CE-LVR}, we invert our perspective to study the market depth that can be obtained for a given arbitrage budget. Immediately, we find that the realized market depth improves by orders of magnitude as the reliability of the oracle improves. Here, the L1 oracle can lead to higher LVR than the benchmark CPMM if improperly tuned, though it is capable of matching the risk profile of the CPMM at a significantly higher level of capital efficiency. Finally, across this entire frontier, the optimal fee level sits at $\gamma^* = 50$ bps, i.e., the upper boundary of the tested grid.

\begin{figure}[t]
\centering
\begin{subfigure}[t]{0.45\textwidth}
\centering
\includegraphics[width=\textwidth]{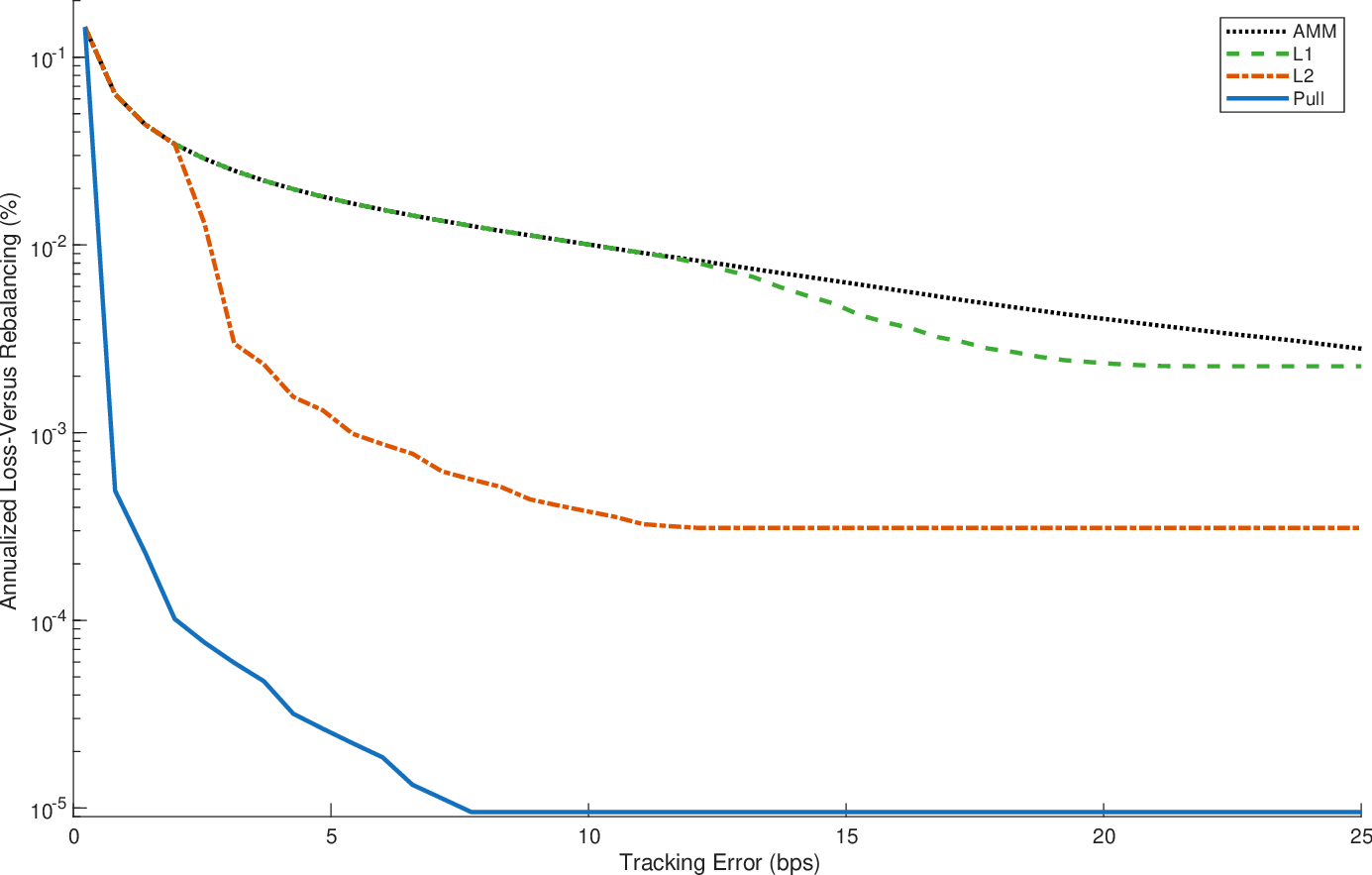}
\caption{Annualized LVR as a function of average daily TE}
\label{fig:LVR-TE}
\end{subfigure}
~~~~
\begin{subfigure}[t]{0.45\textwidth}
\centering
\includegraphics[width=\textwidth]{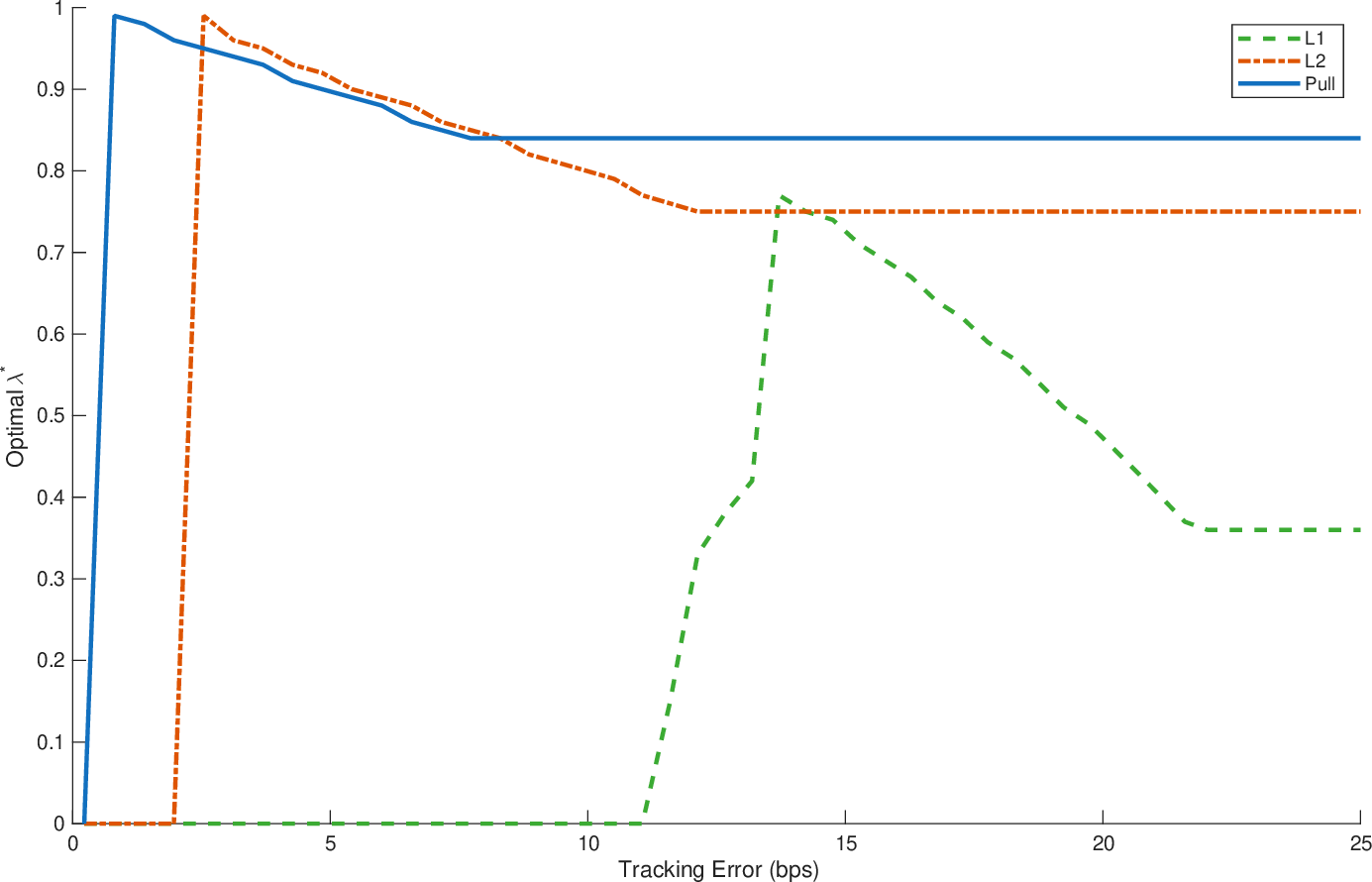}
\caption{Optimal oracle sensitivities to minimize LVR at a given TE}
\label{fig:LVR-TE-Optimal-Lambda}
\end{subfigure}
~~~~
\begin{subfigure}[t]{0.45\textwidth}
\centering
\includegraphics[width=\textwidth]{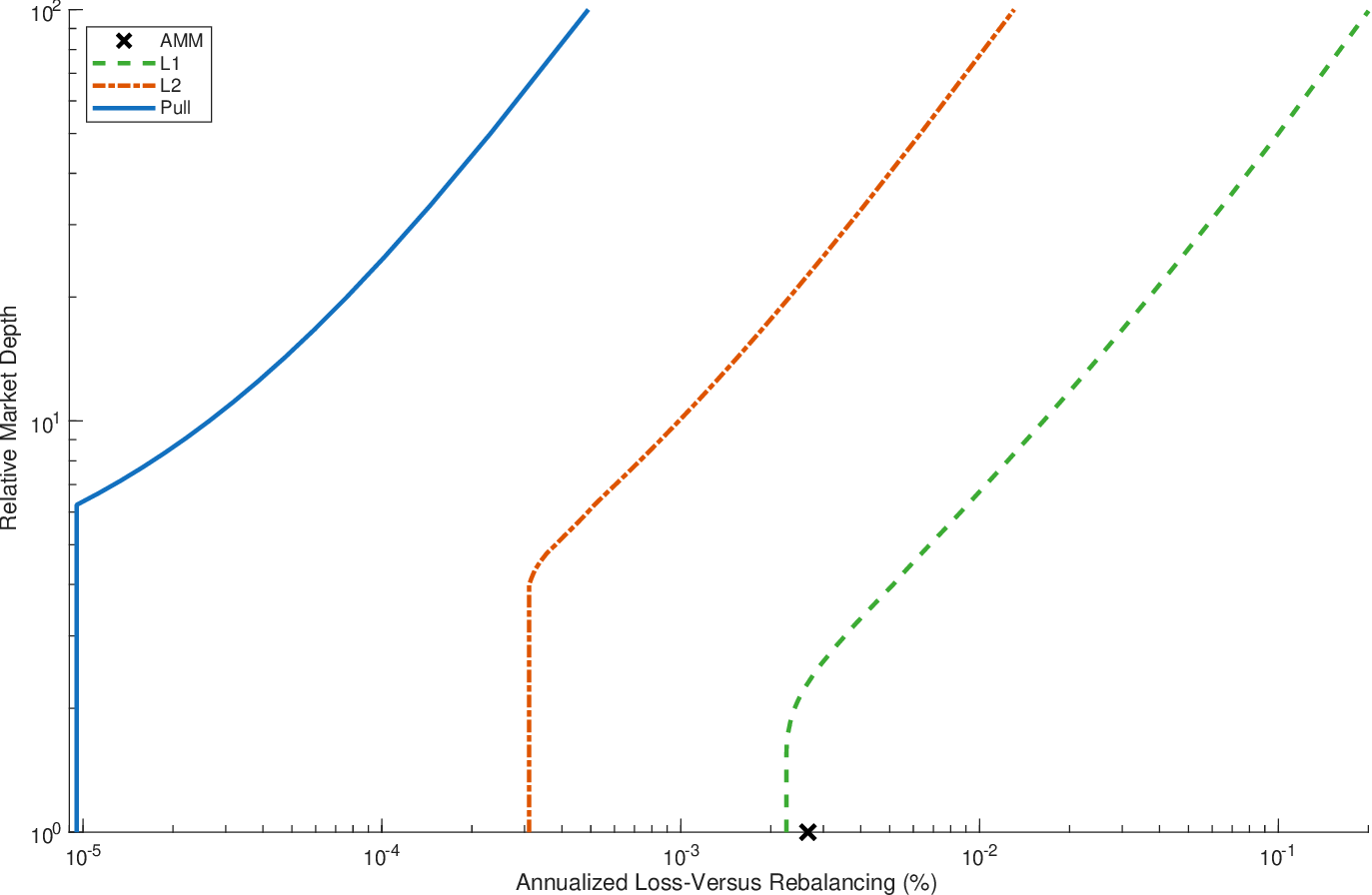}
\caption{Relative market depth as a function of the annualized LVR}
\label{fig:CE-LVR}
\end{subfigure}
~~~~
\begin{subfigure}[t]{0.45\textwidth}
\centering
\includegraphics[width=\textwidth]{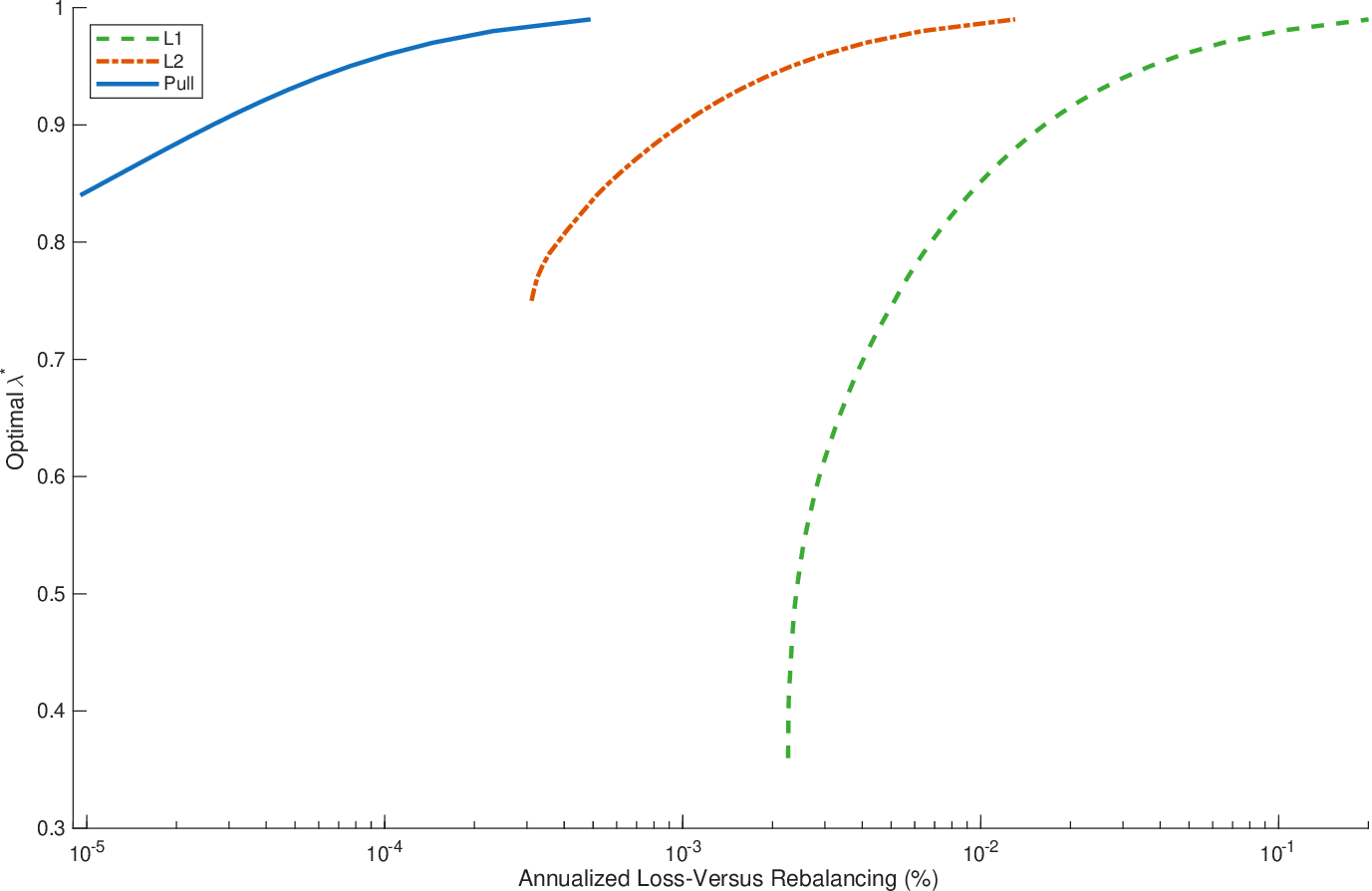}
\caption{Optimal oracle sensitivities to maximize market depth at a given LVR}
\label{fig:CE-LVR-Optimal-Lambda}
\end{subfigure}
\caption{Section~\ref{sec:sp500}: Pareto-efficient frontiers and optimal parameter configurations across oracle regimes.}
\label{fig:efficient-frontier}
\end{figure}

\section{Discussion}\label{sec:discussion}

The results of the preceding sections demonstrate that, given a reliable, low-latency oracle, the OP-AMM paradigm can create a Pareto improvement over traditional AMM designs in the two dimensions studied herein: market depth and stale-price arbitrage losses. Specifically, by simultaneously scaling market depth near the oracle price and reducing the opportunities available to latency arbitrageurs, the OP-AMM changes the execution conditions faced by traders. As a first-order approximation, this separates the market participants trading against the pool into two groups based on their trading objectives:
\begin{itemize}
\item \emph{Arbitrageurs} who simply profit from an on-chain exchange offering stale quotes are strongly disincentivized. Because the oracle tracks the external mid-price, the quote of the OP-AMM follows the external market dynamics directly. This reduces the reliance of traditional AMMs on arbitrageurs to update stale prices, the cost of which is measured by LVR. As evidenced by the substantial reduction in realized LVR under the L2 and Pull oracle regimes of Section~\ref{sec:sp500}, the OP-AMM is able to closely match the external market price at almost negligible cost (net of fees).
\item Conversely, \emph{liquidity-sensitive traders} can benefit from the deeper local liquidity of the OP-AMM near the market price. When trading against a well-calibrated pool, these participants are subject to significantly lower price slippage than when trading against an equally capitalized AMM. Importantly, this liquidity-sensitive flow is not homogeneous; it consists of both uninformed traders (whose volume provides non-toxic fee revenue to the liquidity providers) and informed traders (whose execution assists in price discovery rather than the stale-price arbitrage described above). In either case, these traders can benefit from the concentrated market depth of an OP-AMM, though their net execution quality also depends on the fee-adjusted spread.
\end{itemize}

These favorable outcomes are, however, contingent on the quality of the oracle. The LVR decomposition of Section~\ref{sec:lvr} makes precise the trade-off between the reduction of lag-induced losses and the amplification of exposure to oracle errors as the oracle sensitivity increases. As highlighted in Section~\ref{sec:cases}, the information-agnostic CPMM can have lower LVR than the GA under an insufficiently reliable oracle. Further, oracle quality is not free; the on-chain costs of maintaining a reliable, low-latency oracle are discussed within Appendix~\ref{app:gas}.

We wish to note that the above characterization splits adverse selection into two forms. The documented gains relate solely to \emph{stale-price} arbitrage, i.e., the transfer of value to arbitrageurs who merely correct outdated quotes as quantified by, e.g., the LVR of Section~\ref{sec:lvr}. Adverse selection from traders with superior information about \emph{future} prices is not addressed by our analysis and, in fact, is now met with a deeper market. Such informed flow aids in price discovery but continues to impose losses on liquidity providers in the classical market microstructure sense. Mitigating these losses would require additional predictive information or other adverse-selection controls, which are beyond the scope of this work.

Though this first-order analysis provides a plausible baseline, the OP-AMM design introduces microstructure trade-offs that warrant further study. In particular, in the arbitrage-only environment of Section~\ref{sec:sp500}, the fee selected to maximize the protection of liquidity providers subject to a TE threshold often reached the upper boundary of the tested grid, $\gamma=50$ bps. Though such a fee protects the pool from LVR, it also widens the bid-ask spread quoted by the OP-AMM and can harm the same liquidity-sensitive traders that the oracle-driven liquidity concentration is intended to support. Fully understanding the order-flow dynamics induced by the OP-AMM therefore requires an equilibrium model.

\section{Conclusion}\label{sec:conclusion}

In this work, we introduced and formalized a general framework for OP-AMMs, i.e., AMMs that augment information-agnostic designs with external price information. We showed that, given a sufficiently reliable oracle, OP-AMMs can reduce the stale-price adverse-selection costs borne by liquidity providers while increasing the local liquidity available to traders; noisy or stale oracles, however, can reverse these gains. This combination can improve the viability of decentralized exchanges for tokenized securities by mitigating the stale-price arbitrage that would otherwise affect traditional AMMs when applied to real-world assets with deep, off-chain price discovery.

Three clear extensions of this work are apparent to us. The first extension is to construct a full equilibrium model of order flow. Though our analysis indicates how OP-AMMs separate arbitrageurs from liquidity-sensitive traders, the resulting endogenous segmentation of order flow is not modeled herein. The second extension is to incorporate volatility oracles, in addition to price oracles, so as to adjust the oracle sensitivity dynamically across market regimes. Finally, though we focused on spot assets without maturity, tokenized real-world assets may also have finite maturities; market maker designs tailored to such payoff structures require separate analysis. We leave these extensions for future research.

{\footnotesize
\bibliographystyle{apalike}
\bibliography{bibtex}
}

\newpage
\appendix

\section{Proofs for Section~\ref{sec:pmm}}\label{app:proofs-pmm}

\subsection{Proof of Proposition~\ref{prop:CI}}\label{app:proof-CI}

Fix $\pi \in \R$. We first extend $U$ continuously to $\R^2_+ \times \R$; throughout, limits are taken over $(x',y') \in \R^2_{++}$. On the $y=0$ axis, for any $x>0$,
$$U(x,0,\pi) := \lim_{(x',y')\to(x,0)} U(x',y',\pi) = \log x + g_0(\pi).$$
Similarly, on the $x=0$ axis, for any $y>0$,
$$U(0,y,\pi) := \lim_{(x',y')\to(0,y)} U(x',y',\pi) = \lim_{(x',y')\to(0,y)} \bigl[\log y' - \log(y'/x') + G(y'/x',\pi)\bigr] = \log y + g_\infty(\pi).$$
Finally, at the origin $x = y = 0$, we set $U(0,0,\pi):=-\infty$, which is consistent with the positive homogeneity of $\lcal(\cdot,\cdot,\pi)$ as proven below.

We next verify that $\rcal(\pi)$ is a reachable set:
\begin{itemize}
\item \emph{Nonempty}: $(1,1)\in\rcal(\pi)$ since $U(1,1,\pi)=0$.
\item \emph{Nondegenerate}: $(0,0)\notin\rcal(\pi)$ since $U(0,0,\pi)=-\infty$.
\item \emph{Closed}: Since $U(\cdot,\cdot,\pi)$ is continuous, its superlevel set $\rcal(\pi)$ is closed.
\item \emph{Convex}: It is enough to show that $U(\cdot,\cdot,\pi)$ is concave on $\R^2_{++}$, since concavity then extends to $\R^2_+$ by continuity. Let $H(r):=\exp(G(r,\pi))$, for any $r\in\R_{++}$. Differentiation gives
$$H''(r) = -\frac{(\partial_r p(r,\pi))\exp(p(r,\pi))}{\bigl(r+\exp(p(r,\pi))\bigr)^2} H(r) \leq 0,$$
so $H$ is concave. Since $\lcal(x,y,\pi)=xH(y/x)$ for any $(x,y) \in \R^2_{++}$, $\lcal(\cdot,\cdot,\pi)$ is the perspective of $H$ and is therefore concave. Moreover, since $\lcal(\cdot,\cdot,\pi)>0$, concavity of $\lcal$ implies log-concavity, and hence $U=\log\lcal$ is concave.
\item \emph{Upward closed}: It is enough to show that $U(\cdot,\cdot,\pi)$ is non-decreasing on $\R^2_{++}$, since this property extends to $\R^2_+$ by continuity. For $(x,y)\in\R^2_{++}$,
\begin{align*}
\partial_x U(x,y,\pi) &= \frac{1}{x} \left(1 - \frac{y/x}{y/x + \exp(p(y/x,\pi))}\right) = \frac{1}{x} \frac{\exp(p(y/x,\pi))}{y/x + \exp(p(y/x,\pi))} > 0,\\
\partial_y U(x,y,\pi) &= \frac{1}{x} \frac{1}{y/x + \exp(p(y/x,\pi))} > 0.
\end{align*}
\end{itemize}

It remains to identify the canonical trading function. The function $\lcal(\cdot,\cdot,\pi)$ is continuous, concave, and non-decreasing. It is also positively homogeneous since, for any $(x,y)\in\R^2_{++}$ and $\alpha>0$,
$$\lcal(\alpha x,\alpha y,\pi) = \alpha x \exp(G(y/x,\pi)) = \alpha\lcal(x,y,\pi).$$
By continuity, this property extends to $\R^2_+$. Since $\rcal(\pi)=\bigl\{(x,y)\in\R^2_+ \mid \lcal(x,y,\pi)\geq 1\bigr\}$, it follows from~\cite[Section 1.3.3]{angeris2023geometry} that $\lcal(\cdot,\cdot,\pi)$ is the unique canonical trading function of $\rcal(\pi)$.

Finally, using the derivatives above, $\frac{\partial_x U(x,y,\pi)}{\partial_y U(x,y,\pi)}=\exp(p(y/x,\pi))$, which gives the stated marginal price.

\subsection{Proof of Theorem~\ref{thm:interp}}\label{app:proof-interp}

Fix $r>0$. For $\pi\neq\pi^*(r)$, using $p(r,\pi^*(r))=\pi^*(r)$ and the fundamental theorem of calculus,
$$\lambda(r,\pi) = \frac{p(r,\pi)-\pi^*(r)}{\pi-\pi^*(r)} = \frac{1}{\pi-\pi^*(r)} \int_{\pi^*(r)}^\pi p_\pi(r,\theta)\,d\theta = \int_0^1 p_\pi\Bigl( r,\pi^*(r)+q\bigl(\pi-\pi^*(r)\bigr) \Bigr)\,dq.$$
Assumption~\ref{ass:contraction} then gives $0\leq\lambda(r,\pi)\leq c_r<1$. Continuity away from $\pi=\pi^*(r)$ follows directly from the definition. As $\pi\to\pi^*(r)$, the integral representation converges to $p_\pi(r,\pi^*(r))$, which is precisely the value assigned to $\lambda(r,\pi^*(r))$. Hence $\pi\mapsto\lambda(r,\pi)$ is continuous.

Finally, for $\pi\neq\pi^*(r)$, rearranging the definition of $\lambda(r,\pi)$ gives $p(r,\pi)=\lambda(r,\pi)\pi+\bigl(1-\lambda(r,\pi)\bigr)\pi^*(r)$. At $\pi=\pi^*(r)$, the same identity follows from $p(r,\pi^*(r))=\pi^*(r)$, and the result follows.

\subsection{Proof of Lemma~\ref{lemma:liquidity}}\label{app:proof-liquidity}

Since $r=y/x$ and $Sx+y=1$, it follows that $x=\frac{1}{S+r}$ and $y=\frac{r}{S+r}$. At a state satisfying $p(r,\pi)=\log S$, the conditional invariant gives
$$\frac{dx}{dr} = -\frac{x}{S+r}, \qquad \frac{dy}{dx} = -e^{p(r,\pi)} = -S.$$
Hence, by the chain rule,
$$\frac{d^2y}{dx^2} = \frac{\frac{d}{dr}\left(\frac{dy}{dx}\right)}{\frac{dx}{dr}} = \frac{-Sp_r(r,\pi)}{-x/(S+r)} = \frac{Sp_r(r,\pi)(S+r)}{x}.$$
Using $x=1/(S+r)$, we recover $\kappa^{\mathrm{OP}}=\frac{p_r(r,\pi)S(S+r)^2}{(1+S^2)^{3/2}}$.

A CPMM with liquidity parameter $\widetilde L$ has curvature $\kappa^{\mathrm{CPMM}} = \frac{2S^{3/2}}{\widetilde L(1+S^2)^{3/2}}$. Matching the two curvatures and setting $\widetilde L=L^{\mathrm{OP}}$ gives $L^{\mathrm{OP}} = \frac{2\sqrt{S}}{p_r(r,\pi)(S+r)^2}$.

At price $S$, an equally capitalized CPMM with liquidity parameter $L$ has reserves $x=\frac{L}{\sqrt{S}}$ and $y=L\sqrt{S}$. The normalization $Sx+y=1$ implies $1 = S\frac{L}{\sqrt{S}} + L\sqrt{S} = 2L\sqrt{S}$, so $L=\frac{1}{2\sqrt{S}}$. Therefore, $L^{\mathrm{OP}} = \frac{4S}{p_r(r,\pi)(S+r)^2}\,L$. Thus, $L^{\mathrm{OP}}>L$ if and only if $p_r(r,\pi) < \frac{4S}{(S+r)^2}$, and the result follows.

\section{Proofs for Section~\ref{sec:lvr}}\label{app:proofs-lvr}

\subsection{Proof of Theorem~\ref{thm:lvr}}\label{app:lvr-proof}

We prove the result in three steps. First, we derive the stochastic dynamics of the equilibrium reserve ratio from the clearing condition $p(r_t,\pi_t)=s_t$. Second, we use preservation of the conditional invariant to obtain the dynamics of the risky reserve $x_t$. Finally, we combine these dynamics to derive the decomposition of the pool value and identify the instantaneous LVR rate. Throughout the proof, we adopt an oracle-first, clearing-second convention, i.e., market and oracle prices update exogenously at fixed reserves, after which arbitrageurs move the pool along the conditional trading curve associated with the new oracle input until clearing is restored.

We first consider the dynamics of the reserve ratio. As $p_r>0$, the implicit function theorem implies that $r_t=r^*(s_t,\pi_t)$ is a continuous It\^o process. Throughout this step, all partial derivatives of $p$ are evaluated at $(r_t,\pi_t)$. Applying It\^o's formula to the clearing condition $p(r_t,\pi_t)=s_t$ gives
$$ds_t = p_r\,dr_t + p_\pi\,d\pi_t + \frac{1}{2}p_{rr}\,d\langle r\rangle_t + p_{r\pi}\,d\langle r,\pi\rangle_t + \frac{1}{2}p_{\pi\pi}\,d\langle\pi\rangle_t.$$
Hence
$$dr_t = \frac{1}{p_r} \left[ ds_t - p_\pi\,d\pi_t - \frac{1}{2}p_{rr}\,d\langle r\rangle_t - p_{r\pi}\,d\langle r,\pi\rangle_t - \frac{1}{2}p_{\pi\pi}\,d\langle\pi\rangle_t \right].$$
Using
\begin{align*}
ds_t = -\frac{1}{2}\sigma_t^2\,dt + \sigma_t\,dW_t, \qquad
d\pi_t = \mu_t^\pi\,dt + \sigma_t^\pi\,dW_t^\pi, \qquad
d\langle W,W^\pi\rangle_t = \rho_t\,dt,
\end{align*}
we may write
$$dr_t = a_t\,dt + b_t\,dW_t + c_t\,dW_t^\pi,$$
where $b_t = \frac{\sigma_t}{p_r}$ and $c_t = -\frac{p_\pi}{p_r}\sigma_t^\pi$.

Define
$$Q_t := \sigma_t^2 - 2\rho_t\,p_\pi\sigma_t\sigma_t^\pi + p_\pi^2(\sigma_t^\pi)^2
\qquad \text{ and } \qquad
M_t := \rho_t\,\sigma_t\sigma_t^\pi - p_\pi(\sigma_t^\pi)^2.$$
Then $d\langle r\rangle_t = \frac{Q_t}{p_r^2}\,dt$ and $d\langle r,\pi\rangle_t = \frac{M_t}{p_r}\,dt$. Equivalently, setting
$$q_t := b_t^2+c_t^2+2\rho_t b_tc_t = \frac{Q_t}{p_r^2}, \qquad m_t := \sigma_t^\pi(\rho_t b_t+c_t) = \frac{M_t}{p_r},$$
we have $d\langle r\rangle_t=q_t\,dt$ and $d\langle r,\pi\rangle_t=m_t\,dt$. Substituting into the finite-variation part of $dr_t$ gives
$$a_t = \frac{1}{p_r} \left[ -\frac{1}{2}\sigma_t^2 - p_\pi\mu_t^\pi - \frac{1}{2}p_{rr}q_t - p_{r\pi}m_t - \frac{1}{2}p_{\pi\pi}(\sigma_t^\pi)^2 \right]$$
or, equivalently,
\begin{align*}
a_t = \frac{1}{p_r} \Biggl[ &-\frac{1}{2}\sigma_t^2 - p_\pi\mu_t^\pi - \frac{1}{2}\frac{p_{rr}}{p_r^2} \bigl( \sigma_t^2 - 2\rho_t\,p_\pi\sigma_t\sigma_t^\pi + p_\pi^2(\sigma_t^\pi)^2 \bigr) \\
&- \frac{p_{r\pi}}{p_r} \bigl( \rho_t\,\sigma_t\sigma_t^\pi - p_\pi(\sigma_t^\pi)^2 \bigr) - \frac{1}{2}p_{\pi\pi}(\sigma_t^\pi)^2 \Biggr].
\end{align*}

We next use preservation of the conditional invariant to obtain the dynamics of the risky reserve. Write
$$U(x,y,\pi) = \log x + G\left(\frac{y}{x},\pi\right), \qquad G_r(r,\pi) = \frac{1}{r+e^{p(r,\pi)}}.$$
Under the oracle-first, clearing-second convention, an update from $t$ to $t+dt$ satisfies
$$U(x_{t+dt},r_{t+dt}x_{t+dt},\pi_{t+dt}) = U(x_t,r_tx_t,\pi_{t+dt}).$$
The right-hand side evaluates the updated conditional invariant at the pre-trade reserves, while the left-hand side evaluates it at the post-clearing reserves. Thus, the oracle update changes the family of invariants but does not itself constitute a trade.

Taking the It\^o limit gives
$$d\log x_t + G_r(r_t,\pi_t)\,dr_t + \frac{1}{2}G_{rr}(r_t,\pi_t)\,d\langle r\rangle_t + G_{r\pi}(r_t,\pi_t)\,d\langle r,\pi\rangle_t = 0.$$
The direct terms $G_\pi\,d\pi_t$ and $\frac{1}{2}G_{\pi\pi}\,d\langle\pi\rangle_t$ appear on both sides and cancel. Using the equilibrium condition $p(r_t,\pi_t)=s_t$, we obtain $G_r(r_t,\pi_t) = \frac{1}{r_t+e^{s_t}}$, $G_{rr}(r_t,\pi_t) = - \frac{1+p_r(r_t,\pi_t)e^{s_t}}{(r_t+e^{s_t})^2}$, and $G_{r\pi}(r_t,\pi_t) = - \frac{p_\pi(r_t,\pi_t)e^{s_t}}{(r_t+e^{s_t})^2}$. Therefore,
$$d\log x_t = - \frac{1}{r_t+e^{s_t}}\,dr_t + \frac{1}{2} \frac{1+p_r(r_t,\pi_t)e^{s_t}}{(r_t+e^{s_t})^2} \,d\langle r\rangle_t + \frac{p_\pi(r_t,\pi_t)e^{s_t}}{(r_t+e^{s_t})^2} \,d\langle r,\pi\rangle_t.$$
Substituting $dr_t = a_t\,dt+b_t\,dW_t+c_t\,dW_t^\pi$, together with $d\langle r\rangle_t=q_t\,dt$ and $d\langle r,\pi\rangle_t=m_t\,dt$, gives
$$d\log x_t = \alpha_t\,dt - \frac{b_t}{r_t+e^{s_t}}\,dW_t - \frac{c_t}{r_t+e^{s_t}}\,dW_t^\pi,$$
where
$$\alpha_t = -\frac{a_t}{r_t+e^{s_t}} + \frac{1}{2} \frac{1+p_r(r_t,\pi_t)e^{s_t}}{(r_t+e^{s_t})^2} q_t + \frac{p_\pi(r_t,\pi_t)e^{s_t}}{(r_t+e^{s_t})^2} m_t.$$

Finally, recall that the pool value is $V_t = x_tS_t+y_t = x_t(S_t+r_t)$. Applying It\^o's formula gives $dV_t = (S_t+r_t)\,dx_t + x_t\,dS_t + x_t\,dr_t + d\langle x,S\rangle_t + d\langle x,r\rangle_t$. Since $dx_t = x_t\,d\log x_t + \frac{1}{2}x_t\,d\langle\log x\rangle_t$ and $d\langle\log x\rangle_t = \frac{q_t}{(S_t+r_t)^2}\,dt$, the local martingale part is
$$dV_t^{\mathrm{mart}} = x_tS_t\sigma_t\,dW_t + (S_t+r_t)x_t \left( -\frac{b_t}{S_t+r_t}\,dW_t - \frac{c_t}{S_t+r_t}\,dW_t^\pi \right) + x_t(b_t\,dW_t+c_t\,dW_t^\pi).$$
The terms involving $b_t$ and $c_t$ cancel, so $dV_t^{\mathrm{mart}} = x_t\,dS_t$.

For the finite-variation part,
$$d\langle x,S\rangle_t = - \frac{x_tS_t\sigma_t(b_t+\rho_t c_t)}{S_t+r_t}\,dt, \qquad d\langle x,r\rangle_t = - \frac{x_tq_t}{S_t+r_t}\,dt.$$
Consequently,
$$\frac{1}{x_t} \frac{dV_t^{\mathrm{drift}}}{dt} = (S_t+r_t) \left( \alpha_t + \frac{q_t}{2(S_t+r_t)^2} \right) + a_t - \frac{S_t\sigma_t(b_t+\rho_t c_t)}{S_t+r_t} - \frac{q_t}{S_t+r_t}.$$
Substituting the expression for $\alpha_t$ and simplifying gives
$$\frac{1}{x_t} \frac{dV_t^{\mathrm{drift}}}{dt} = \frac{S_t}{S_t+r_t} \left[ \frac{1}{2}p_rq_t + p_\pi m_t - \sigma_t(b_t+\rho_t c_t) \right].$$
Using
\begin{align*}
b_t = \frac{\sigma_t}{p_r}, \qquad
c_t = -\frac{p_\pi}{p_r}\sigma_t^\pi, \qquad
q_t = \frac{Q_t}{p_r^2}, \qquad
m_t = \frac{M_t}{p_r},
\end{align*}
we obtain $\frac{1}{2}p_rq_t + p_\pi m_t - \sigma_t(b_t+\rho_t c_t) = -\frac{Q_t}{2p_r}$. Therefore, $dV_t^{\mathrm{drift}} = - \frac{1}{2} \frac{x_tS_t}{S_t+r_t} \frac{Q_t}{p_r(r_t,\pi_t)} \,dt$.

By the definition of the residual innovation process $\zeta_t$ in Section~\ref{sec:lvr-general}, $Q_t = \frac{d}{dt}\langle\zeta\rangle_t$. Hence $dV_t = x_t\,dS_t - \ell_t\,dt$, where
$$\ell_t = \frac{1}{2} \frac{x_tS_t}{S_t+r_t} \frac{1}{p_r(r_t,\pi_t)} \frac{d}{dt}\langle\zeta\rangle_t.$$
Equivalently,
$$\ell_t = \frac{1}{2} \frac{x_tS_t}{S_t+r_t} \frac{1}{p_r(r_t,\pi_t)} \bigl( \sigma_t^2 - 2\rho_t\,p_\pi(r_t,\pi_t)\sigma_t\sigma_t^\pi + p_\pi(r_t,\pi_t)^2(\sigma_t^\pi)^2 \bigr),$$
which proves the result.

\subsection{Proof of Proposition~\ref{prop:stale-lvr}}\label{app:proof-stale-lvr}

We prove the result by considering separately the continuous and jump components of LVR. For $t\in(\tau_k,\tau_{k+1})$, the oracle is constant, so $\mu_t^\pi=\sigma_t^\pi=0$. Hence $d\pi_t=0$, and the residual innovation process of Section~\ref{sec:lvr-general} satisfies $d\zeta_t=ds_t$. Therefore, $\frac{d}{dt}\langle\zeta\rangle_t=\sigma_t^2$, and the continuous LVR rate follows directly from Theorem~\ref{thm:lvr}.

Next, consider an update time $\tau_k$. During the instantaneous clearing adjustment, the external price remains fixed at $S_{\tau_k}$, and the reserves move along the post-update conditional trading curve from $r_{\tau_k^-}$ to $r_{\tau_k}$. Along this curve, the pool value is $V(r)=x_k(r)(S_{\tau_k}+r)$. Since $x_k(r) = -\Bigl(r+e^{p(r,\pi_{\tau_k})}\Bigr)x_k'(r)$, we have
$$V'(r) = x_k'(r)(S_{\tau_k}+r)+x_k(r) = \Bigl(S_{\tau_k}-e^{p(r,\pi_{\tau_k})}\Bigr)x_k'(r).$$
Thus,
$$\Delta V_{\tau_k} = \int_{r_{\tau_k^-}}^{r_{\tau_k}} \Bigl( S_{\tau_k} - e^{p(r,\pi_{\tau_k})} \Bigr) x_k'(r)\,dr,$$
which gives~\eqref{eq:stale-jump-lvr} since $\Delta\mathrm{LVR}_{\tau_k} = -\Delta V_{\tau_k}$.

By monotonicity of $p(\cdot,\pi_{\tau_k})$, the arbitrage path moves toward the unique state satisfying $p(r_{\tau_k},\pi_{\tau_k})=s_{\tau_k}$. In particular, $e^{p(r,\pi_{\tau_k})}-S_{\tau_k}$ has the opposite sign of $r_{\tau_k}-r_{\tau_k^-}$ along the path. Since $x_k'(r)<0$, the integrand in~\eqref{eq:stale-jump-lvr} has the same sign as $r_{\tau_k}-r_{\tau_k^-}$, and therefore $\Delta\mathrm{LVR}_{\tau_k}\geq 0$.

Finally, as the external price process is continuous, the rebalancing benchmark $R_t = V_0+\int_0^t x_u\,dS_u$ does not jump at the update times. Hence the jump of $\mathrm{LVR}_t=R_t-V_t$ at $\tau_k$ is $-\Delta V_{\tau_k} = \Delta\mathrm{LVR}_{\tau_k}$. Summing the continuous and jump contributions over $[0,T]$ gives the result.

\subsection{Proof of Proposition~\ref{prop:MEV-0fee}}\label{app:proof-MEV-0fee}

We begin with the rescaling property that underlies the geometry of Section~\ref{sec:lvr-oev}. As $\lcal(\cdot,\cdot,\pi)=\exp U(\cdot,\cdot,\pi)$ is positively homogeneous, the superlevel sets of the conditional invariant are rescalings of the reachable set, i.e.,
$$\bigl\{(x,y) \in \R^2_+ \mid U(x,y,\pi) \geq c\bigr\} = e^c\,\rcal(\pi), \qquad c \in \R.$$
Positive homogeneity then implies, for any reserve state $(x,y)\in\R^2_+$,
\begin{equation}\label{eq:pvf-rescaled}
\min\bigl\{Sx'+y' \mid U(x',y',\pi) \geq U(x,y,\pi)\bigr\} = \lcal(x,y,\pi)\,\V_\pi(S).
\end{equation}

We prove the two claims separately.
\begin{enumerate}
\item Fix $(x',y')\in\lcal_-\rcal(\pi_-)$ and write $\lcal_+:=\lcal(x',y',\pi_+)$ so that
$$\Pi(x',y') = \bigl[Sx_-+y_-\bigr] - \inf\bigl\{Sx_++y_+ \mid (x_+,y_+) \in \lcal_+\rcal(\pi_+)\bigr\}.$$
By~\eqref{eq:pvf-rescaled} applied at $(x',y')$ and $\pi_+$, this infimum equals $\lcal_+\V_{\pi_+}(S)$ and is attained at $\lcal_+$ times the minimizer over $\rcal(\pi_+)$, i.e., at the unique point of reserve ratio $r^*(s,\pi_+)=r_+$.
\item Suppose $\dpi>0$. By part~\eqref{prop:MEV-backrun},
$$\Pi^* = \bigl[Sx_-+y_-\bigr] - \V_{\pi_+}(S) \inf_{(x',y')\in\lcal_-\rcal(\pi_-)} \lcal(x',y',\pi_+).$$
It is enough to consider this infimum over the efficient boundary. Indeed, any $(x,y)\in\lcal_-\rcal(\pi_-)$ with $\lcal(x,y,\pi_-)>\lcal_-$ may be rescaled by $\alpha := \frac{\lcal_-}{\lcal(x,y,\pi_-)} <1$, which keeps it feasible and, by positive homogeneity, replaces $\lcal(x,y,\pi_+)$ by the strictly smaller $\alpha\lcal(x,y,\pi_+)$.

Parametrize the efficient boundary within $\R^2_{++}$ by the reserve ratio. Let $\xi(r)$ solve $U\bigl(\xi(r),r\xi(r),\pi_-\bigr) = \log\lcal_-$. Then $\log\xi(r) = \log \lcal_- - G(r,\pi_-)$, and
$$\log\lcal\bigl(\xi(r),r\xi(r),\pi_+\bigr) = \log\lcal_- + G(r,\pi_+) - G(r,\pi_-).$$
Differentiating gives
$$\frac{d}{dr}\log \lcal\bigl(\xi(r),r\xi(r),\pi_+\bigr) = \frac{1}{r+\exp(p(r,\pi_+))} - \frac{1}{r+\exp(p(r,\pi_-))}.$$
Since $\dpi>0$, monotonicity of $p$ in its oracle argument implies $p(r,\pi_+)\geq p(r,\pi_-)$, so this derivative is non-positive. Hence the infimum over the efficient boundary is approached as $r\nearrow\infty$ with value $\lcal_-e^D$. Suppose first that $g_\infty(\pi_-)>-\infty$. Then $G(r,\pi_-)-\log r$ converges, so that $\xi(r)\to0$ and
$$r\xi(r) = \exp\bigl(\log\lcal_-+\log r-G(r,\pi_-)\bigr) \longrightarrow \lcal_-e^{-g_\infty(\pi_-)}.$$
Hence $\lcal_-\rcal(\pi_-)$ contains the boundary point $\bigl(0,\lcal_-e^{-g_\infty(\pi_-)}\bigr)$. At this point,
$$\lcal(0,y,\pi_+) = ye^{g_\infty(\pi_+)} = \lcal_-e^{g_\infty(\pi_+)-g_\infty(\pi_-)},$$
so the infimum is attained and $D = g_\infty(\pi_+)-g_\infty(\pi_-)$. If instead $g_\infty(\pi_-)=-\infty$, then $\lcal_-\rcal(\pi_-)$ does not meet the $y$-axis, while the same limiting argument gives the value $\lcal_-e^D$ as $r\nearrow\infty$. If, in addition, $p(r,\pi_+)>p(r,\pi_-)$ for every $r>0$, then the derivative above is strictly negative for every finite $r$, and the infimum is not attained at any finite reserve state. Finally, the condition for $\Pi^*$ to equal the entire pool value follows directly from its displayed expression since $\V_{\pi_+}(S)>0$ and $\lcal_->0$.
\end{enumerate}

The case $\dpi<0$ follows symmetrically.

\subsection{Proof of Lemma~\ref{lemma:MEV+fees}}\label{app:proof-MEV-fees}

Fix $(x',y')\in\ccal_-$ and set $r' := \frac{y'}{x'} \geq r_-$. Let $C_+ := U(x',y',\pi_+)$ and, for notational convenience, define $P_-(r) := \exp(p(r,\pi_-))$ and $P_+(r) := \exp(p(r,\pi_+))$.

We first determine the optimal back-run. As the back-run sells the risky asset to the pool, the fee is paid in the risky asset and the back-run minimizes $\frac{S}{1-\gamma}x_++y_+ = Se^{\Gamma}x_++y_+$ over the refreshed trading set subject to $x_+\geq x'$. Without this constraint, part~\eqref{prop:MEV-backrun} of Proposition~\ref{prop:MEV-0fee} applies at the fee-adjusted price $Se^{\Gamma}$ and gives the minimum value $\lcal(x',y',\pi_+)\V_{\pi_+}\bigl(Se^{\Gamma}\bigr)$, attained at reserve ratio $r_\gamma := r^*(s+\Gamma,\pi_+)$. Parametrizing the refreshed conditional curve by
$$x_+(q) = \exp\bigl(C_+-G(q,\pi_+)\bigr), \qquad
y_+(q) = q x_+(q),$$
the constraint $x_+\geq x'$ is equivalent to $q\leq r'$. This constraint binds precisely when $r'\leq r_\gamma$, i.e., when $p(r',\pi_+)\leq s+\Gamma$. Therefore:
\begin{itemize}
\item If $p(r',\pi_+) \leq s+\Gamma$, then the minimizer is $q=r'$ and the optimal back-run is the null trade.
\item If $p(r',\pi_+) > s+\Gamma$, then the optimal post-update reserve ratio is $r_\gamma < r'$, and the corresponding risky reserve $\chi_\gamma(x',y')$ is the unique solution of $U\bigl( \chi_\gamma, r_\gamma\chi_\gamma, \pi_+ \bigr) = C_+$.
\end{itemize}

We next optimize the front-run. Let $C_- := U(x_-,y_-,\pi_-)=\log\lcal_-$. It suffices to maximize over the efficient boundary of $\ccal_-$. Indeed, suppose $U(x',y',\pi_-)>C_-$ and let $\widehat y<y'$ solve $U(x',\widehat y,\pi_-)=C_-$, which exists because $U(x',0,\pi_-)\leq U(x_-,0,\pi_-)<C_-$. For fixed $(x_+,y_+)$, the round-trip value depends on $y'$ only through $-\frac{\gamma}{1-\gamma}y'$, so lowering $y'$ to $\widehat y$ does not decrease it. Moreover, $x'$ is unchanged, hence so is the constraint $x_+\geq x'$, while $U(x',\widehat y,\pi_+)<U(x',y',\pi_+)$ enlarges the feasible set of the back-run and therefore cannot increase its minimized cost. Hence $\Pi_\gamma(x',\widehat y)\geq\Pi_\gamma(x',y')$.

Every point of the efficient boundary is parametrized by its reserve ratio $r\geq r_-$ as $\xi(r)=\exp\bigl(C_--G(r,\pi_-)\bigr)$ and $r\xi(r)$. Write $\Pi_\gamma(r):=\Pi_\gamma\bigl(\xi(r),r\xi(r)\bigr)$.

First suppose that $p(r,\pi_+) \leq s+\Gamma$. The optimal back-run is the null trade and
$$\Pi_\gamma(r) = S\bigl(x_--\xi(r)\bigr) + \frac{y_--r\xi(r)}{1-\gamma}.$$
Differentiating gives $\Pi_\gamma'(r) = \left[ S-\frac{P_-(r)}{1-\gamma} \right] \frac{\xi(r)}{r+P_-(r)}$. As $r\geq r_-$ and $p(\cdot,\pi_-)$ is increasing, $p(r,\pi_-) \geq p(r_-,\pi_-) \geq s-\Gamma$, so that $P_-(r)\geq S(1-\gamma)$. Hence $\Pi_\gamma'(r)\leq0$.

Now suppose that $p(r,\pi_+) > s+\Gamma$. The optimal back-run ends at $r_\gamma=r^*(s+\Gamma,\pi_+)$, and monotonicity of $p(\cdot,\pi_+)$ implies $r>r_\gamma$. Define
$$\chi_\gamma(r):= \chi_\gamma\bigl(\xi(r),r\xi(r)\bigr) = \exp\bigl( C_- - G(r_\gamma,\pi_+) + G(r,\pi_+) - G(r,\pi_-) \bigr).$$
The attack value is
$$\Pi_\gamma(r) = Sx_- + \frac{y_-}{1-\gamma} + \frac{\gamma}{1-\gamma} (S-r)\xi(r) - \left( \frac{S}{1-\gamma} + r_\gamma \right) \chi_\gamma(r),$$
and differentiation gives
$$\Pi_\gamma'(r) = - \frac{\gamma}{1-\gamma} \frac{\xi(r)\bigl(S+P_-(r)\bigr)}{r+P_-(r)} + \left( \frac{S}{1-\gamma} + r_\gamma \right) \frac{\bigl(P_+(r)-P_-(r)\bigr)\chi_\gamma(r)}{\bigl(r+P_+(r)\bigr) \bigl(r+P_-(r)\bigr)}.$$

For fixed $r$, let $\widehat x_r(q) := \exp\bigl( U(\xi(r),r\xi(r),\pi_+) - G(q,\pi_+) \bigr)$. Then $\widehat x_r(r)=\xi(r)$ and $\widehat x_r(r_\gamma)=\chi_\gamma(r)$. Moreover,
$$\frac{d}{dq} \Bigl[ \widehat x_r(q) \bigl( q+P_+(q) \bigr) \Bigr] = \widehat x_r(q)P_+'(q) \geq 0.$$
Since $r>r_\gamma$ and $P_+(r_\gamma)=Se^{\Gamma}$, $\xi(r) \bigl( r+P_+(r) \bigr) \geq \chi_\gamma(r) \left( r_\gamma+\frac{S}{1-\gamma} \right)$, and therefore
$$\xi(r) \geq \left( r_\gamma+\frac{S}{1-\gamma} \right) \frac{\chi_\gamma(r)}{r+P_+(r)}.$$

Condition~\eqref{eq:fee-quote-shift} implies $P_+(r) \leq e^\Gamma P_-(r) = \frac{P_-(r)}{1-\gamma}$, and hence $P_+(r)-P_-(r) \leq \frac{\gamma}{1-\gamma}P_-(r)$. Combining these inequalities gives
$$\Pi_\gamma'(r) \leq - \frac{\gamma}{1-\gamma} \frac{\xi(r)\bigl(S+P_-(r)\bigr)}{r+P_-(r)} + \frac{\gamma}{1-\gamma} \frac{\xi(r)P_-(r)}{r+P_-(r)} = - \frac{\gamma}{1-\gamma} \frac{S\xi(r)}{r+P_-(r)} \leq 0.$$

At the boundary $p(r,\pi_+)=s+\Gamma$, the two expressions for $\Pi_\gamma(r)$ coincide. Thus, $\Pi_\gamma(r)$ is non-increasing for every $r\geq r_-$, and the optimal front-run is the null trade $r=r_-$. Consequently, $\Pi_\gamma^*=\Pi_\gamma(x_-,y_-)$.

Finally, if $p(r_-,\pi_+)\leq s+\Gamma$, then the optimal back-run from the no-front-run state is also the null trade. Hence $\Pi_\gamma(x_-,y_-)=0$, which implies $\Pi_\gamma^*=0$.

\section{Implementability and Gas Costs}\label{app:gas}

Though the OP-AMM framework provides a general approach to mitigating adverse selection, its practical implementation on a blockchain requires an understanding of the associated gas costs. As iterative numerical solvers can be prohibitively expensive, practical implementations benefit from specifying an OP-CFMM with a closed-form bonding curve. Notably, though we defined OP-AMMs through the log-price function, this paradigm accommodates a wide class of such implementable designs; in particular, we highlight the CPMM-based GA, which admits a CES invariant curve (see Example~\ref{ex:GAPMM}).

From the perspective of smart contract architecture, executing a swap on an OP-AMM mirrors a standard AMM but introduces infrastructure costs associated with querying the external oracle. The gas profile of this dependency is dictated by the underlying oracle architecture:
\begin{itemize}
\item \emph{Push oracles}: Traditional designs rely on data providers ``pushing'' price updates to the blockchain. Though this allows the OP-AMM to cheaply read the price from the oracle contract during a trade, it imposes a systematic maintenance cost on the ecosystem for the gas required for these updates. These frictions can introduce latency into oracle designs, e.g., as modeled by the L1 and L2 oracles of Section~\ref{sec:sp500}.
\item \emph{Pull oracles}: Modern ``pull'' oracles can enable information retrieval with very low latency. However, this architecture shifts the oracle update costs to the trader; rather than reading a stored variable on an external contract, pull oracles require the transaction itself to carry and verify a recent off-chain price report. This execution may incur higher gas costs for the trader.
\end{itemize}
Protocol designers must weigh these infrastructure costs against the surplus generated by a well-calibrated OP-AMM. We leave the study of the trade-off between gas costs and market efficiency for future research.

\end{document}